\documentclass[a4paper,UKenglish,cleveref, autoref, thm-restate]{lipics-v2021}

\usepackage{amsmath}
\usepackage{amsthm}
\usepackage{amsfonts}
\usepackage{mathtools} 
\usepackage{stmaryrd}
\usepackage{cmll}
\usepackage{mathpartir}
\usepackage{tikz,tikz-cd}
\usetikzlibrary{arrows,arrows.meta,positioning}
\usepackage{url}
\usepackage{enumerate,xspace}
\usepackage{comment}
\usepackage[colorinlistoftodos,prependcaption,textsize=tiny]{todonotes}
\usepackage{xcolor}
\usepackage[all]{xy}
\usepackage{braket}
\usetikzlibrary{fit}
\usepackage{graphicx}
\graphicspath{  }
\usepackage{bm} % for \bm
\usepackage{fixmath} % for \mathbold

\newcommand{\ie}{\emph{i.e.}\;}
\newcommand{\eg}{\emph{e.g.}\,}

\newcommand{\id}{\mathrm{id}}
\newcommand{\op}{\mathrm{op}}
\newcommand{\co}{\mathrm{co}}
\newcommand{\cat}[1]{\mathbb{#1}}
\newcommand{\bicat}[1]{\mathscr{#1}}
\newcommand{\catname}[1]{\mathbf{#1}}
\renewcommand{\Set}{\catname{Set}}
\newcommand{\Rel}{\catname{Rel}}

\newcommand{\sem}[1]{\llbracket #1 \rrbracket}

\newcommand{\Pow}{\mathcal{P}} %powerset

\newcommand{\strength}{\mathrm{str}}

\newcommand{\Barr}[1]{\overline{#1}}% Barr extension
\renewcommand{\graph}[1]{\mathbf{gr}{(#1)}}
\newcommand{\coll}[1]{\mathbf{cl}{(#1)}}

\newcommand{\nextt}{\mathrm{next}} 
\newcommand{\out}{\mathrm{out}} 

\newcommand{\vdashc}[1][]{\vdash^{#1}_{c}}
\newcommand{\vdashv}[1][]{\vdash^{#1}_{v}}

\newcommand{\Loc}{\mathrm{Loc}} %memory location
\newcommand{\bool}{\mathbf{bool}}
\newcommand{\unit}{\mathbf{1}}

\newcommand{\Val}{\mathcal{V}}
\newcommand{\letin}[2]{\mathtt{let}\,#1\,\mathtt{in}\, #2}

\newcommand{\return}{\mathtt{return}\,}

\newcommand{\true}{\mathbf{t}}
\newcommand{\false}{\mathbf{f}}

\newcommand{\inj}{\mathrm{in}}

\usetikzlibrary{decorations.markings}
\tikzset{module/.style={%
		decoration={markings,%
			mark= at position 0.5 with {%
				\node[] (tempnode) {};
				\draw[-,shorten <=.1em,shorten >=.1em,inner sep=1pt]
				(tempnode.north) -- (tempnode.south);
			}
		},
		postaction={decorate}
	}
}
\newcommand{\profto}[1][\null]{%
	\mathrel{%
		\tikz{\draw[module,->,transform canvas={yshift=3pt}%
			] (0,0) -- node[above]{$\scriptstyle #1$} ++(1/2,0) ;%
		}
	}%
}

\title{Coinductive reasoning for parametrized functors and monads} 

\author{Ugo Dal Lago}{University of Bologna, Italy}{ugo.dallago@unibo.it}{https://orcid.org/0000-0001-9200-070X}{}
\author{Zeinab Galal}{RIMS, Kyoto University, Japan}{zgalal@kurims.kyoto-u.ac.jp}{https://orcid.org/0009-0008-6402-3531}{}

\authorrunning{U. Dal Lago and Z. Galal} 

\Copyright{Ugo Dal Lago and Zeinab Galal} 

\ccsdesc[500]{Theory of computation~Categorical semantics}
\ccsdesc[500]{Theory of computation~Operational semantics} 

\keywords{categorical semantics, parametrized monads, effects, relations, lax extensions, behavioral equivalence, coalgebra, global state}
\category{} 

\relatedversion{}

\acknowledgements{The first author was partly supported by the ANR Project HOPR (ANR-24-CE48-5521-01) and the second author was supported by the Research Institute for Mathematical Sciences (RIMS) and by the JSPS KAKENHI Grant 26K21167.}

\nolinenumbers %uncomment to disable line numbering

\EventEditors{Ana Sokolova and Patrick Totzke}
\EventNoEds{2}
\EventLongTitle{37th International Conference on Concurrency Theory (CONCUR 2026)}
\EventShortTitle{CONCUR 2026}
\EventAcronym{CONCUR}
\EventYear{2026}
\EventDate{September 1--4, 2026}
\EventLocation{Liverpool, UK}
\EventLogo{}
\SeriesVolume{391}
\ArticleNo{21}
\begin{document}

\maketitle

%TODO mandatory: add short abstract of the document
\begin{abstract}
Lax extensions (also called relators or relation liftings) are a categorical notion to reason about functors acting on functions and relations in a compatible way. 
They play a central role to develop sound proof principles for behavioral equivalence of state-based systems and are also important for establishing contextual equivalence for effectful programs.

In this paper, we develop the theory of lax extensions for parametrized functors and monads and consider notions of behavioral preorders, equivalence relations or metrics which can now be modulated by additional parameters. 
From an operational viewpoint, we replace standard contextual equivalence where we quantify over all possible contexts by a refined notion of equivalence where the user can regulate the allowed contexts via chosen parameters.
\end{abstract}
\paragraph*{Introduction}

Relations play a central role in modeling various notions of behavioral preorders and equivalences for state-based systems and programming languages. In particular, coinductive techniques on relations provide a formal framework to reason about circular or infinite objects and they are an important tool in concurrency theory and in operational semantics for establishing behavioral equivalence of processes or programs~\cite{sangiorgi2011introduction}. Simulation and bisimulation are among the main examples of relations defined coinductively and they have a categorical counterpart in the world of coalgebras. Coalgebras are morphisms $S \longrightarrow F(S)$ where $S$ represents the state space and $F$ is a functor encoding the branching type of the system~\cite{rutten2000universal,jacobs2017introduction}. By varying the functor, we can capture a wide range of state-based systems: streams, labelled transition systems, automata, Markov chains, etc.

A fundamental component of the study of behavioral preorders and equivalences is to extend the behavior functor of the system from sets to relations using the notion of lax extension.
This notion also plays an important role in the operational semantics of higher-order functional languages with monadic effects where applicative bisimulation is generalized to the effectful setting using lax extensions of monads from sets to 
relations~\cite{dal_lago_relational_2022}.

Lax extensions are also used in modal logic, effectful intersection type systems, monoidal topology and in the theory of double categories~\cite{marti2015lax,gavazzo2024monadic,Hofmann_Seal_Tholen_2014,lambert2021characterizing}. They were initially considered by Barr for the ultrafilter monad on sets and were since then generalized to quantale-enriched relations~\cite{barrrelational,rutten1998relators,worrell2000coalgebras,seal2005canonical,schubert2008extensions,bilkova2013relation} to measure the \emph{behavioral distance} between two states for non-deterministic and probabilistic systems~\cite{van2005behavioural,baldan2018coalgebraic,konig2018metric}.

The main purpose of this paper is to extend the theory of lax extensions for 
endofunctors and monads to \emph{parametrized} functors and monads.
A first generalization is to replace endofunctors $F : \cat{C} \to \cat{C}$ with functors $F : \cat{P} \times \cat{C} \to \cat{C}$ where $\cat{P}$ is viewed as a parameter category.
 This simple generalization allows us to vary the parameter and consider notions of behavioral preorders or equivalences which depend on relations on the parameter.
 
In the case of monads, the underlying endofunctor of an ordinary monad $T :  
\cat{C} \to \cat{C}$ is replaced by a functor $T:  \cat{P}^\op \times \cat{P} 
\times \cat{C} \to \cat{C}$~\cite{atkey2009parameterised}. A parametrized monad has therefore two additional 
parameters providing information about the starting and ending state of a 
computation. It is important to note that the input parameter is contravariant 
and the output parameter is covariant, following the intuition of preconditions 
and postconditions for program logics. Parametrized monads were introduced by Atkey~\cite{atkey2009parameterised} and they offer a convenient setting to model sequential effectful computations, they also have a categorical 
characterization as the notion of monads arising from parametrized 
adjunctions~\cite{mac1998categories,Cheng_Gurski_Riehl_2014} (also called adjunction with a parameter or multivariable adjunction) offering the natural generalization of the standard 
monad-adjunction correspondence.
One of the motivating examples is the global state monad: in the ordinary 
case, the set of global stores is fixed and remains constant which is not 
flexible enough to keep track of which memory regions a computation has access to and can modify. In the parametrized case, the change in the state type over 
time is captured by additional parameters corresponding to the memory 
region \emph{before} and \emph{after} the execution of a computation.

In this paper, we develop the theory of lax extensions for parametrized monads to have a framework that allows us to modulate the notion of program equivalence or metric depending on the power we want to give to contexts. 
For example, if programs can read and write on some memory locations, standard contextual equivalence means that contexts have access to the same memory locations as programs. We are interested in having well-behaved notions of program equivalence where contexts only have access to a \emph{restricted} set of memory locations. In this setting, our framework allows to weaken the notion of equivalence by explicitly controlling the power we give to contexts depending on the relations added to parameters.

\subsection*{Plan of the paper}

\begin{itemize}
	\item After a background section on parametrized monads (Section~\ref{sec:paramMonads}), we introduce the notion of parametrized lax extensions for parametrized functors and monads in Section~\ref{sec:paramLaxExt}.
	\item In Section~\ref{sec:Barr}, we generalize the Barr canonical extension to the parametrized setting and show how additional nuances need to be addressed in the presence of mixed variance variables. In particular, there are several options for the Barr extension, and we study their properties.
	\item In Section~\ref{sec:bisim}, we study notions of bisimulation and behavioral equivalence in the parametrized setting where the behavior of two states is compared relative to some chosen relations on the parameters. 
	\item Lastly, we instantiate our framework in Section~\ref{sec:calculus} to coalgebras for parametrized functors in the first order setting (Mealy machines) and to parametrized monadic higher order languages where we introduce a parametrized version of contextual equivalence.
\end{itemize}

\section{Background on parametrized monads}\label{sec:paramMonads}
We begin by recalling the notion of parametrized (strong) monad and provide some examples. While the term ``parametrized'' has been used in various contexts, we would like to emphasize that the notion of parametrized monad we consider in this paper is the one by Atkey~\cite{atkey2009parameterised} and is distinct from the other generalizations of monads considered in the literature such as indexed monads~\cite{fujii20192}, parametric eﬀect monads~\cite{katsumata2014parametric}, Dijkstra monads~\cite{maillard2019dijkstra}, graded monads~\cite{smirnov2008graded}, etc.

\begin{definition}[Definitions 5.2.7 and 5.2.9 in~\cite{atkey2006substructural}]\label{def:paramMonads}
	For categories $\cat{C}$ and $\cat{D}$, a \emph{parametrized monad} consists of a functor $T : \cat{C}^\op \times \cat{C} \times \cat{D} \to \cat{D}$ together with 
	\begin{itemize}
		\item a family of morphisms $\eta_{S,X}: X \to T(S,S,X)$ indexed by objects $S$ in $\cat{C}$ and $X$ in $\cat{D}$ natural in $X$ and dinatural in $S$
		\item a family of morphisms $\mu_{S_1 ,S_2 ,S_3 ,X}	: T(S_1,S_2,T(S_2,S_3,X))\to T(S_1,S_3,X)$ indexed by objects $S_1,S_2,S_3$ in $\cat{C}$ and $X$ in $\cat{D}$ natural in $X, S_1,S_3$ and dinatural in $S_2$
	\end{itemize}
	satisfying some compatibility axioms.
	If $(\cat{D}, \otimes, 1)$ is monoidal, a \emph{parametrized strong monad} is additionally equipped with 
	\begin{itemize}
 		\item a family of morphisms $\strength_{S_1 ,S_2 ,X,Y}	: X \otimes T(S_1, S_2,Y) \to  T(S_1,S_2,X \otimes Y)$ in $\cat{D}$ indexed by objects $S_1,S_2$ in $\cat{C}$ and $X,Y$ in $\cat{D}$, natural in all variables, and satisfying some compatibility axioms.
	\end{itemize}
\end{definition}

\begin{example}\label{ex:paramStateMonad}
	\begin{enumerate}
		\item The parametrized state monad~\cite[Section 2.3.2]{atkey2009parameterised}: for a fixed set of states $S$, the global state monad maps a set $X$ to the set $(X \times S)^S$ while the parametrized version allows for the set of input states and output states to vary, its underlying functor $T : \Set^\op \times \Set \times \Set \to \Set$ has the following action on objects: $(S_1, S_2, A) \mapsto (A \times S_2)^{S_1}$.
		\item The composable continuation monad \cite[Example 5.2.6]{atkey2006substructural}:
		for a fixed response object $R$ in a cartesian closed category, the ordinary continuation monad maps an object $X$ to $(X \Rightarrow R) \Rightarrow R$. In the parametrized case, we obtain the notion of composable continuations where the response type may vary: the composable continuation monad $T : \Set^\op \times \Set \times \Set \to \Set$ is given by $(R_1, R_2, X) \mapsto 
		(X \Rightarrow R_1)\Rightarrow R_2$.
		\item The parametrized I/O monad~\cite[Section 2.3.4]{atkey2009parameterised}: for a fixed set of inputs $I$ and observable outputs $O$, the interactive input-output (I/O) monad  is the functor mapping a set $X$ to the underlying set of the initial algebra $ \mu Z. (X + Z^I+ O\times Z)$. It induces three possible types of computations: either compute a value, wait for an input and then resume or produce an output and then resume~\cite[Example 1.1]{moggi1991notions}. 
		The parametrized $I/O$-monad $\Set^\op \times \Set \times \Set \to \Set$ maps a triple $(I, O, X)$ to the underlying set of $\mu Z. (X + Z^I + O\times Z)$.
		\item The parametrized action monad \cite[Example 5.2.5]{atkey2006substructural}: any monoid $(M, e: I \to M, m : M \otimes M \to M)$ internal to a monoidal category $\cat{C}$ induces a monad (called the writer, output or action monad) on $\cat{C}$ given by $X \mapsto M \otimes X$. This construction has been generalized by Atkey in the case of parametrized monads where any category $\cat{P}$ induces a parametrized monad $\cat{P}^\op \times \cat{P} \times \Set \to \Set$ whose action on objects is $(A,B, X) \mapsto \cat{P}(A,B) \times X$.
	\end{enumerate}
\end{example}

\section{Parametrized lax extensions}\label{sec:paramLaxExt}

\subsection{Relations and dualities}
The simplest notion of relation is a binary relation between sets $R \subseteq X \times Y$ which we denote $R : X \profto Y$. The identity relation $\id_X : X \profto X$ is the diagonal $\{(x,x) \mid x \in X\}$ and the composite of two relations $R: X \profto Y$ and $S : Y \profto Z$ is the relation $S \circ R : X \profto Z$ given by $\{(x,z) \mid \exists y \in Y (x,y) \in R \text{ and } (y,z) \in S\}$. We will also use the sequential notation $R ; S$ for the composite $S \circ R$ and we denote by $\Rel$ the category of sets and binary relations.

The underlying functor of a parametrized monad $T : \Set^\op \times \Set \times \Set \to \Set$ has both contravariant and covariant inputs so it is important to understand the dualities it induces on relations to motivate the definition of parametrized extension. They are better understood in light of the $2$-categorical structure of $\Rel$: for two relations  $R,S : X \profto Y$,  the inclusion $R \subseteq S$ viewed as subsets of $X \times Y$ induces an ordering on hom-sets $(\Rel(X,Y), \leq)$. Composition of relations is monotone with respect to this ordering which implies that $\Rel$ is a posetal $2$-category. 

For a general $2$-category $\bicat{C}$, we have the ``$\op$'' operator which reverses the direction of $1$-morphisms and we also have the ``$\co$'' operator which reverses the direction of $2$-morphisms. For example, the transpose operator that maps a relation $R : X\profto Y$ to the relation $R^t =\{(y,x) \mid (x,y) \in R\} : Y \profto X$ can be seen as a $2$-functor $(-)^t : \Rel^\op \to \Rel$.

The exponentiation operation in $\Set$ is a functor $\Set^\op \times \Set \to \Set$ mapping a pair $(X,Y)$ to $Y^X$, the set of all functions from $X$ to $Y$. It induces an exponentiation\footnote{While $\Rel$ is not a cartesian closed category, we still use the terminology of exponentiation because $\Rel$ is cartesian closed as a double category~\cite{niefield2024cartesian}.} $2$-functor $\Rel^\co \times \Rel \to \Rel$ that maps a pair of relations $(R: X_1 \profto X_2, S : Y_1  \profto Y_2)$ to \[S^R := \{ (\varphi, \psi) \in Y_1^{X_1} \times Y_2^{X_2} \mid \forall (x_1, x_2) \in R, (\varphi(x_1), \psi(x_2)) \in S
\}\] which reverses the inclusion of relations in the first argument, \ie $R \leq R'$ implies $S^{R'} \leq S^{R}$. This operation will guide the formulation of the theory of parametrized lax extensions, and the exponentiation of relations will provide the canonical extension for the state monad.

Another important fact for the theory of lax extensions is the inclusion of functions into relations: any function $f : X\to Y$ induces two relations $f_* : X \profto Y$ and $f^* : Y \profto X$ given by $f_*:= \{ (x, f(x)) \mid x\in X\}$ and $f^*:= \{ (f(x), x) \mid x\in X\}$.
Moreover, $f_*$ is left adjoint to $f^*$ in (the posetal $2$-category) $\Rel$.

\subsection{Parametrized (lax) extensions}

In the non-parametrized setting, for an endofunctor $F: \Set \to \Set$, a \emph{strict extension} (or simply an extension) of $F$ to $\Rel$ consists of an endofunctor $\Barr{F}: \Rel \to \Rel$ that agrees with $F$ on objects ($F(A)= \Barr{F}(A)$) and that is compatible with the inclusion of functions into relations, \ie for a function $f: A \to B$, $(F(f))_* = \Barr{F}(f_*)$. An extension $\Barr{F}$ is said to be \emph{monotone} if it preserves the inclusion of relations: if $R \leq R'$, then $\Barr{F}(R) \leq \Barr{F}(R')$. Equivalently,  $\Barr{F}$ is a $2$-functor $\Rel \to \Rel$. 

The general idea is that, in the strict case, a monotone extension of a functor $T : \Set^\op \times \Set \times \Set \to \Set$ will be a $2$-functor $\Barr{T} : \Rel^\co \times \Rel \times \Rel \to \Rel$ and not $\Barr{T} : \Rel^\op \times \Rel \times \Rel \to \Rel$.

\begin{definition}\label{def:strictParamExtension}
	For a functor $T : \Set^\op \times \Set \times \Set \to \Set$, a $1$-functor $\Barr{T} :  \Rel \times \Rel \times \Rel \to \Rel$ extends $T$ strictly if the following holds:
	\begin{enumerate}
		\item for all $f : A \to B, g : C\to D$ and $h : X\to Y$, $\Barr{T}(f^*,g_*,h_*)=(T(f,g,h))_*$
\item we say that $\Barr{T}$ is \emph{monotone} if for all $M'\leq M : A \profto B$, $N\leq N': C \profto D$ and $R\leq R' : X \profto Y$, $\Barr{T}(M ,N, R) \leq \Barr{T}(M' ,N',R')$.
	\end{enumerate}
\end{definition}

Strict extensions are in practice too restrictive and various weakenings have been considered in the non-parametrized setting. In this paper, we focus on the notion of \emph{lax extension} considered by Barr and later generalized to the quantale-enriched setting~\cite{barrrelational,worrell2000coalgebras,seal2005canonical,schubert2008extensions,bilkova2011relation,bilkova2013relation,double}.

\begin{definition}\label{def:laxextension}
	For a functor $F: \Set \to \Set$, a \emph{lax extension} of $F$ to $\Rel$ consists of an indexed family of functions $\Barr{T}:\Rel(X,Y) \longrightarrow \Rel(TX, TY)$ satisfying the following axioms:
	\begin{enumerate}
		\item $\Barr{T}$ is monotone: if $R \leq R' : X \profto Y$, then $\Barr{T}(R) \leq \Barr{T}(R')$;
		\item for composable relations $R : X \profto Y$, $S : Y \profto Z$, $\Barr{T}(S) \circ \Barr{T}(R) \leq \Barr{T}(S \circ R)$;
		\item for all functions $f: X \to Y$;
		$(T(f))_* \leq \Barr{T}(f_*)$ and $(T(f))^* \leq \Barr{T}(f^*)$.
	\end{enumerate}
	If $\Barr{T}$ preserves identities strictly, \ie $\id_{TX} = \Barr{T}(\id_X)$, we say that it is a \emph{normal} lax extension (also called \emph{flat}). If conditions $2.$ and $3.$ are equalities, then we recover exactly the notion of monotone strict extension.
\end{definition}

Lax extensions are not unique, there can be more than one lax extension for the same functor:
\begin{example}
	In the case of the power set functor $\Pow: \Set \to \Set$, for a relation $R : X \profto Y$, the following mappings
	\[\begin{aligned}
		\mathcal{L}_1(R)&:=\{ (A,B) \in \Pow(X)\times \Pow(Y) \mid \forall x \in A, \exists y \in B, (x,y) \in R\}\\
		\mathcal{L}_2(R)&:=\{ (A,B) \in \Pow(X)\times \Pow(Y) \mid \forall y \in B, \exists x \in A, (x,y) \in R\}\\
		\mathcal{L}_3(R)&:=	\mathcal{L}_1(R) \cap 	\mathcal{L}_2(R)
	\end{aligned}\]
	are all lax extensions. The last one corresponds to the canonical Barr extension in Section~\ref{subsec:ordinaryBarr}.
	\end{example}

In the parametrized case, we are also interested in the lax version where the strict equalities in Definition~\ref{def:strictParamExtension} are relaxed to inequalities.

\begin{definition}\label{def:ParamLaxExtensionMonadSet}
	For a functor $T: \Set^\op \times \Set \times \Set \to \Set$, a \emph{parametrized lax extension} consists of a family of functions: $\Barr{T}:	\Rel(A,B) \times \Rel(C,D) \times \Rel(X, Y)  \to \Rel(T(A, C,X), T(B,D,Y))$
	indexed by $A,B,C,D,X,Y$ that satisfies the following axioms:
	\begin{enumerate}
		\item it is antitone in the first argument and monotone in the last two arguments:  for all $M'\leq M : A \profto B$, $N\leq N': C \profto D$ and $R\leq R' : X \profto Y$, 
		\[\Barr{T}(M ,N, R) \leq \Barr{T}(M' ,N',R')\]
		\item lax composition: for all $M: A_1 \profto A_2$, $M' :A_2 \profto A_3$,  $N: B_1 \profto B_2$, $N' :B_2 \profto B_3$, $R: X_1 \profto X_2$ and $R' :X_2 \profto X_3$, 
		\[\Barr{T}(M,N,R)  ; \Barr{T}(M',N',R') \leq  	\Barr{T}(M; M',N; N',R; R')\]
		\item for all $f : A\to B$, $g : C \to D$ and $h : X \to Y$, 
		\[(T(f, g, h))_* \leq \Barr{T}(f^*, g_*, h_* )\text{ and }(T(f, g, h))^*\leq \Barr{T}(f_*,  g^*, h^* ).\]
	\end{enumerate}
		If $\Barr{T}$ preserves identities strictly $(\id_{T(A,B,X)} = \Barr{T}(\id_{A}, \id_{B},\id_X))$, we say that it is a \emph{normal} parametrized lax extension. If conditions $2.$ and $3.$ are equalities, then we obtain a monotone strict extension.
		If $\Barr{T}$ commutes with the transpose operation as below, it is called \emph{symmetric}:
		\begin{enumerate}
			    \setcounter{enumi}{3}
			\item for all $M : A \profto B$, $N: C \profto D$ and $R : X \profto Y$, $(\Barr{T}(M ,N, R))^t = \Barr{T}(M^t ,N^t,R^t)$
		\end{enumerate}
\end{definition}

\begin{example}\label{ex:statenostrictextension}
The underlying functor of the parametrized state monad (Example~\ref{ex:paramStateMonad}.1) has a parametrized lax extension $\Barr{T}$ which we describe explicitly: for relations $M : A \profto B$, $N: C \profto D$ and $R : X \profto Y$, $\Barr{T}(M,N,R): (C \times X)^A \profto (D \times Y)^B$ is given by:
\[
(N \times R)^M:=\{ (\varphi, \psi) \mid \forall(a,b) \in M, (\varphi(a), \psi(b)) \in N \times R\}.
\]
\end{example}

\section{The Parametrized Barr extensions}\label{sec:Barr}

\subsection{Extending functors from $\Set$ to $\Rel$}\label{subsec:ordinaryBarr}
 
\paragraph*{The span representation of relations}
The Barr extension (also called canonical relation lifting) is based on the universal representation of relations as pairs of functions which we recall here.
A relation $R : X \profto Y$ induces a pair of projection functions $\pi^R_1 : \graph{R} \to X$ and $\pi^R_2 :  \graph{R} \to Y$ where $\graph{R}:=\{(x,y) \in X \times Y \mid (x,y) \in R\} \leq X \times Y$ is called the \emph{graph} or \emph{tabulator} of $R$. These two functions completely describe the relation in the following sense:
\begin{enumerate}
	\item tabulator universal property: for every other span of functions $(h : A \to X, k : A \to Y)$ satisfying $ h^* ; k_* \leq R$, there exists a unique function $u : A \to  \graph{R}$ such that $\pi_1^R \circ u =h$ and $\pi_2^R \circ u =k$.
	\item tabulators are strong: we can recover the original relation $R =  (\pi^R_1)^* ; ( \pi^R_2)_*$.
\end{enumerate}

\paragraph*{The cospan representation of relations}
There is a dual cospan representation of relations with the notion of \emph{collage} or \emph{cotabulator}. It is used to define the Barr extension of quantale-enriched relations, we will only define it for ordinary relations on sets and make use of its universal property in Section~\ref{sec:bisim}. For a relation $R : X \profto Y$, if we take the pushout of the span $(\pi^R_1 : \graph{R} \to X,\pi^R_2 :  \graph{R} \to Y)$, we obtain a pair of functions $\inj^R_1 : X \to \coll{R}$ and $\inj^R_2 : Y \to \coll{R}$ where $\coll{R}:= X+Y/_\sim$ is the quotient of the disjoint union $X + Y$ by the equivalence relation $\sim$ generated by $(2,y) \, \sim \, (1,x)$ for $(x,y) \in R$. It verifies the following:
	\begin{enumerate}
	\item cotabulator universal property: for every other cospan of functions $(h : X \to A , k : Y \to A)$ satisfying $ R \leq h_*;  k^* $, there exists a unique function $u : \coll{R} \to A$ such that $u\circ \inj_1^R =h$ and $ u\circ \inj_1^R =k$.
\end{enumerate}
 $\Rel$ does not have strong cotabulators, we only have $R \leq  (\inj^R_1)_*;( \inj^R_2)^* $ in general. Relations for which the cotabulator is strong are called \emph{difunctional} and they subsume equivalence relations~\cite{riguet1948relations,gumm2014coalgebraic,goncharovstacs25}.

\paragraph*{The Barr extension}

The representation of a relation by a universal span of functions is at the heart of the Barr extension construction. The Barr extension $\Barr{T}:\Rel(X,Y) \longrightarrow \Rel(TX, TY)$ is the operator
mapping a relation $R : X \profto Y$ to the relation $\Barr{T}(R):= (T\pi^R_1)^* ; (T\pi^R_2)_*: TX \profto TY$.

Without any additional assumptions on $T$, the operator $\Barr{T}$ is monotone (axiom $(1)$ of Definition~\ref{def:laxextension}) and satisfies axiom $(3)$ strictly. 
However, it is not in general a lax extension since it does not always satisfy the lax composition axiom $(2)$, it preserves composition in the reverse op-lax direction instead. 
To obtain a lax-extension, we need to assume additional properties on $T$ based on the notion of \emph{weak pullback}:
\begin{definition}\label{def:weakpb}
	A commutative square $h\circ f= k \circ g$ in $\Set$ as below is a weak pullback if the universal map from $A$ to the pullback of $h$ and $k$ is split epi.
	\vspace{-0.1cm}
	\begin{center}
		\begin{tikzpicture}[thick,xscale=0.5, yscale=0.6]
			\node (A) at (0,1.5) {\scriptsize$B$};
			\node (B) at (2,0) {\scriptsize$D$};
			\node (C) at (4,1.5) {\scriptsize$C$};
			\node (D) at (2,3.5) {\scriptsize$A$};
			\node (P) at (2,2.25) {\scriptsize$P$};
			\draw [-to, dotted] (D)  to  node [below left] {} (P);
			\draw [->] (A)  to  node [below left] {\scriptsize$h$} (B);
			\draw [->] (C)  to  node [below right] {\scriptsize$k$} (B);
			\draw [->] (D) to  [bend left =-33] node [above left] {\scriptsize$f$} (A);
			\draw [->] (D)  to [bend left =33] node [above right] {\scriptsize$g$} (C);
			\draw [->] (P) to  node [below] {\scriptsize$\pi_1$} (A);
			\draw [->] (P)  to node [below] {\scriptsize$\pi_2$} (C);
			\node at (2,1.25) {\scriptsize$\mathrm{pb}$};
		\end{tikzpicture}
		\vspace{-0.3cm}
	\end{center}
\end{definition}
This notion (also called Beck-Chevalley property) plays a central role in the theory of lax extensions as weak pullback squares can be equivalently characterized by the equality $g_* \circ f^* = k^* \circ h_*$ in $\Rel$ allowing to commute the left $(-)_*$ and right adjoints $(-)^*$.

\begin{theorem}[\cite{barrrelational,trnkova1977relational,carboni19912}]\label{Barrtheorem}
	For a functor $T : \Set \to \Set$, $T$ preserves weak-pullbacks if and only if the Barr extension is a monotone strict extension and it is the unique one.
\end{theorem}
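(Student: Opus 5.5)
Both directions rest on two facts about functions. First, every relation factors as $R = (\pi^R_1)^* ; (\pi^R_2)_*$, because tabulators are strong. Second, a commutative square is a weak pullback exactly when $g_* \circ f^* = k^* \circ h_*$.

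\textbf{($\Leftarrow$).} Assume $T$ preserves weak pullbacks. Monotonicity and axiom (3) hold for any $T$, as noted above, so only preservation of composition remains.

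1. For $f : X \to Y$ we have $\graph{f_*} \cong X$ with projections $(\id, f)$. Hence $\Barr{T}(f_*) = \id^* ; (Tf)_* = (Tf)_*$, and symmetrically $\Barr{T}(f^*) = (Tf)^*$.

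2. Take $R : X \profto Y$ and $S : Y \profto Z$, and let $P$ be the pullback of $\pi^R_2$ and $\pi^S_1$. Then
\[S \circ R = (\pi^R_1 p_1)^* ; (\pi^S_2 p_2)_*,\]
using $(\pi^R_2)_* ; (\pi^S_1)^* = p_1^* ; (p_2)_*$ (Beck--Chevalley for the pullback square). The span $(\pi^R_1 p_1, \pi^S_2 p_2)$ factors through $\graph{S \circ R}$ via a map $e : P \to \graph{S\circ R}$, and $e$ is surjective.

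3. $T$ preserves surjections, since these are split epis in $\Set$. So $(Te)^* ; (Te)_*$ is the identity, and therefore
\[\Barr{T}(S\circ R) = (T(\pi^R_1 p_1))^* ; (T(\pi^S_2 p_2))_*.\]

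4. Applying $T$ to the pullback square gives a weak pullback. By Beck--Chevalley,
\[(T\pi^R_2)_* ; (T\pi^S_1)^* = (Tp_1)^* ; (Tp_2)_*.\]
Substituting this into $\Barr{T}(R) ; \Barr{T}(S)$ yields exactly the expression in step 3. So $\Barr{T}$ is a functor $\Rel \to \Rel$ and a monotone strict extension.

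\textbf{Uniqueness.} Let $F$ be any monotone strict extension. By strictness, functoriality and the factorization of $R$,
\[F(R) = F\bigl((\pi_1^R)^*\bigr) ; F\bigl((\pi_2^R)_*\bigr) = (T\pi_1^R)^* ; (T\pi^R_2)_* = \Barr{T}(R).\]
To get $F(f^*) = (Tf)^*$, I use that $F$ is a $2$-functor. It therefore preserves the adjunction $f_* \dashv f^*$, so $F(f^*)$ is right adjoint to $(Tf)_*$. Right adjoints in a posetal $2$-category are unique, which forces $F(f^*) = (Tf)^*$.

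\textbf{($\Rightarrow$).} Assume $\Barr{T}$ is a strict extension. Take a weak pullback square $h f = k g$. By Beck--Chevalley, $f^* ; g_* = h_* ; k^*$. Applying $\Barr{T}$ and using functoriality together with $\Barr{T}(f_*) = (Tf)_*$ and $\Barr{T}(f^*) = (Tf)^*$ from step 1 gives
\[(Tf)^* ; (Tg)_* = (Th)_* ; (Tk)^*.\]
By Beck--Chevalley again, the image square is a weak pullback.

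The main obstacle is step 3 of ($\Leftarrow$). One must check carefully that replacing the pullback $P$ by the tabulator $\graph{S\circ R}$ through a surjection does not change the resulting relation after applying $T$; this is where preservation of epis by $T$ is essential. The adjoint-uniqueness step in the uniqueness argument is the other point requiring care.
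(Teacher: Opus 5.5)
Your proof is correct. The paper only cites this theorem, but its appendix proofs of the parametrized analogues (Theorem~\ref{th:paramBarr} and Proposition~\ref{prop:laxExtParam}) follow your pattern: Beck--Chevalley to rewrite $(T\pi^R_2)_* ; (T\pi^S_1)^*$ through the pullback $P$, the split-epi comparison map $P \to \graph{S \circ R}$ to identify the result with the extension of the composite, and factoring through tabulators for uniqueness (Theorem~\ref{th:paramBarr}.7 proves minimality in the same way). Your adjoint-uniqueness step, which derives $F(f^*) = (Tf)^*$ from monotonicity, is a worthwhile addition: the paper's definition of a strict extension of an endofunctor only imposes $F(f_*) = (Tf)_*$, whereas its minimality argument gets the corresponding inequality directly from axiom~(3) of a lax extension.
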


Before considering the extension of functors $T : \Set^\op \times \Set \times \Set \to \Set$, we first explain how to extend functors $ \Set^\op \to \Set$ and $\Set \times \Set \to \Set$ to isolate the different issues that each generalization entails. 

\subsection{Extending contravariant functors}

For a functor $T:\Set^\op \to \Set$, we define its Barr extension as the operator mapping a relation $R : X \profto Y$ to the relation $\Barr{T}(R):= (T\pi^R_1)_* ;(T\pi^R_2)^*: TX \profto TY$.
The operator $\Barr{T}$ is antitone with respect to the inclusion of relations, is normal and verifies $\Barr{T}(f^*) = (Tf)_*$ and $\Barr{T}(f_*) = (Tf)^*$ for a function $f$. 
The important point is that it always preserves composition of relations laxly without requiring additional conditions on $T$. To obtain a strict extension, it suffices for $T$ to map weak pullback squares to weak pullback squares in the following sense:
	\begin{center}
	\begin{tikzpicture}[thick,scale=0.5]
		\node (X) at (-2,0) {};
		\node (A) at (0,1.5) {\scriptsize$P$};
		\node (B) at (2,0) {\scriptsize$B$};
		\node (C) at (4,1.5) {\scriptsize$C$};
		\node (D) at (2,3) {\scriptsize$A$};
		
		\draw [->] (A)  to  node [below left] {\scriptsize$p_2$} (B);
		\draw [->] (B)  to  node [below right] {\scriptsize$g$} (C);
		\draw [->] (A) to  node [above left] {\scriptsize$p_1$} (D);
		\draw [->] (D)  to node [above right] {\scriptsize$f$} (C);
		\node at (2,1.5) {\scriptsize$\mathrm{wpb}$};
		\node at (5.5,1.5) {$\Rightarrow$};
		
		\begin{scope}[xshift=7cm]
				\node (A) at (0,1.5) {\scriptsize$TC$};
			\node (B) at (2,0) {\scriptsize$TB$};
			\node (C) at (4,1.5) {\scriptsize$TP$};
			\node (D) at (2,3) {\scriptsize$TA$};
			
			\draw [->] (A)  to  node [below left] {\scriptsize$T(g)$} (B);
			\draw [->] (B)  to  node [below right] {\scriptsize$T(p_2)$} (C);
			\draw [->] (A) to  node [above left] {\scriptsize$T(f)$} (D);
			\draw [->] (D)  to node [above right] {\scriptsize$T(p_1)$} (C);
			\node at (2,1.5) {\scriptsize$\mathrm{wpb}$};
		\end{scope}
	\end{tikzpicture}
\end{center}
Note however that this condition is quite strong and that the typical examples built from the functor $(-)\Rightarrow A: \Set^\op \to \Set$ do not satisfy this property as the exponentiation functor is only lax in $\Rel$. 
\subsection{Extending multivariable functors}
We consider here functors of the form $T: \Set \times \Set \to \Set$ with two arguments but this can be generalized to functors $\Set^n \to \Set$. We define the \emph{parametrized Barr extension} of $T$ as the operator $\Barr{T}$ mapping a pair of relations $(N: C\profto D, R: X\profto Y)$ to $\Barr{T}(N,R):= (T( \pi^N_1, \pi^R_1))^* ; T(\pi^N_2, \pi^R_2))_* : T(C,X)\profto T(D,Y)$.
It verifies the expected properties which we do not prove as they are a special case of Theorem~\ref{th:paramBarr} where we have an additional contravariant variable. Instead, we want to emphasize the fact that in the multivariable case, we have other possibilities for the choice of extensions we can define:
\[\begin{aligned}
	\mathcal{L}_1&(N,R):= (T( \pi^N_1, \id))^*; (T(\pi^N_2, \id ))_* ; (T( \id, \pi^R_1))^*;(T(\id, \pi^R_2))_* \\
	\mathcal{L}_2&(N,R):= (T( \id, \pi^R_1))^*;(T(\id, \pi^R_2))_* ; (T( \pi^N_1, \id))^*; (T(\pi^N_2, \id ))_* \\
\end{aligned}\]
We have that $\Barr{T}(N,R) \leq \mathcal{L}_1(N,R)$ and $\Barr{T}(N,R) \leq \mathcal{L}_2(N,R)$. While $\Barr{T}$ preserves composition oplaxly in both variables, $\mathcal{L}_1$ and $\mathcal{L}_2$ preserve composition oplaxly in each variable separately: $\mathcal{L}_i(N'\circ N, \id) \leq \mathcal{L}_i(N', \id)\circ \mathcal{L}_i( N, \id)$ and $ \mathcal{L}_i(\id, R'\circ R)\leq \mathcal{L}_i(\id, R')\circ \mathcal{L}_i(\id, R) $.

Recall that for an oplax $2$-functor, we have both $ T(f,g)\leq T(f, \id) \circ T(\id, g) $ and $  T(f,g)\leq T(\id, g) \circ T(f, \id)$ but $T(f, \id) \circ T(\id, g)$ and $T(\id, g) \circ T(f, \id)$ are not in general comparable. For multivariable oplax $2$-functors, the equality $T(f, \id) \circ T(\id, g) = T(\id, g) \circ T(f, \id)$ (or coherent isomorphism if we are not in the posetal case) corresponds to the intermediate notion of cubical or quasi-functors~\cite{gray2006formal,johnson20212}) between strict (or pseudo) $2$-functors and oplax ones.
 If for every $f: A \to B$, $g : C \to D$, the following interchange square is a weak-pullback
\begin{center}
	\begin{tikzpicture}[thick,scale=0.5]
		
		\node (A) at (0,1.5) {\scriptsize$T(B,C)$};
		\node (B) at (2,0) {\scriptsize$T(B,D)$};
		\node (C) at (4,1.5) {\scriptsize$T(A,D)$};
		\node (D) at (2,3) {\scriptsize$T(A,C)$};
		
		\draw [->] (A)  to  node [below left] {\scriptsize$T(\id, g)$} (B);
		\draw [->] (C)  to  node [below right] {\scriptsize$T(f, \id)$} (B);
		\draw [->] (D) to  node [above left] {\scriptsize$T(f, \id)$} (A);
		\draw [->] (D)  to node [above right] {\scriptsize$T(\id, g)$} (C);
		\node at (2,1.5) {\scriptsize$\mathrm{wpb}$};
		
	\end{tikzpicture}
\end{center} 
then $\Barr{T} = \mathcal{L}_1= \mathcal{L}_2$ which intuitively means that we can commute the order in which we do the Barr extension. 
This issue will be relevant in the mixed variance case where we have to consider the covariant and contravariant variables separately.

\subsection{Combining both}\label{sec:paramBarrextRel}
For a functor $T : \Set^\op \times \Set \times \Set \to \Set$, we obtain a lattice of \emph{parametrized Barr extensions}: for a triple of relations $(M : A\profto B, N: C\profto D, R: X\profto Y)$, we define
\[\begin{aligned}
	\mathcal{L}_1(M,N,R)&:=(T(\id, \pi_1, \pi_1))^* ;(T(\pi_1, \id, \id))_* ;(T(\pi_2, \id, \id))^*;(T(\id, \pi_2, \pi_2))_* \\
	\mathcal{L}_2(M,N,R)&:= (T(\pi_1, \id, \id))_* ;T(\pi_2, \pi_1, \pi_1))^* ;(T(\id, \pi_2, \pi_2))_*\\
	\mathcal{L}_3(M,N,R)&:= (T(\id, \pi_1, \pi_1)^* ; (T( \pi_1, \pi_2, \pi_2))_*;(T(\pi_2, \id, \id))^*\\
	\mathcal{L}_4(M,N,R)&:= (T(\pi_1, \id, \id))_* ;(T(\id, \pi_1, \pi_1))^* ;(T(\id, \pi_2, \pi_2))_* ;(T(\pi_2, \id, \id))^*
\end{aligned}\]
These operators are related by the following inclusions: $\mathcal{L}_1 \leq\mathcal{L}_2 \leq\mathcal{L}_4$ and $\mathcal{L}_1 \leq\mathcal{L}_3\leq\mathcal{L}_4$.

\begin{theorem}\label{th:paramBarr}
	For a functor $T$ as above and $i\in \{1,2,3,4\}$, the Barr extension $\mathcal{L}_i$ verifies:
\begin{enumerate}
		\item it is normal: for all $A,B,X$, $\id_{T(A,B,X)} = \mathcal{L}_i(\id_{A}, \id_{B},\id_X)$ 
	\item it is antitone in the first argument and monotone in the last two arguments: for all $M'\leq M : A \profto B$, $N\leq N': C \profto D$ and $R\leq R' : X \profto Y$, 
	\[\mathcal{L}_i(M ,N, R) \leq \mathcal{L}_i(M' ,N',R')\]
	\item it preserves composition laxly in the first argument: for all $M: A_1 \profto A_2$, $M' :A_2 \profto A_3$, 
	\[	\mathcal{L}_i(M,\id, \id) ; \mathcal{L}_i(M',\id, \id)  \leq  	\mathcal{L}_i(M; M',\id, \id)\]
	\item it preserves composition op-laxly in the last two arguments: for all 
	$N: B_1 \profto B_2$, $N' :B_2 \profto B_3$, $R: X_1 \profto X_2$ and $R' :X_2 \profto X_3$, 
	\[\mathcal{L}_i(\id ,N; N',R; R') \leq \mathcal{L}_i(\id ,N,R) ; \mathcal{L}_i(\id ,N',R')\]
	\item for all $f : A\to B$, $g : C \to D$ and $h : X \to Y$, 
	\[(T(f, g, h))_* = \mathcal{L}_i(f^*, g_*, h_* )\text{ and }(T(f, g, h))^*=  \mathcal{L}_i(f_*,  g^*, h^* )\]
	\item it is symmetric: for all $M : A \profto B$, $N: C \profto D$ and $R : X \profto Y$, 
	\[(\mathcal{L}_i(M ,N, R))^t = \mathcal{L}_i(M^t ,N^t,R^t)\]
	\item it is minimal: for any lax extension $\mathcal{L}$ of $T$ and relations $M,N,R$, we have \[\mathcal{L}_i(M,N,R) \leq \mathcal{L}(M,N,R).\]
\end{enumerate}
\end{theorem}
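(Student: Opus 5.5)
The plan is to reduce every item to a handful of calculus-of-relations facts in the posetal 2-category $\Rel$. These are: $f_* \dashv f^*$ (unit $\id \leq f_*;f^*$, counit $f^*;f_* \leq \id$); the equalities $f_*;f^* = \id$ when $f$ is injective and $f^*;f_* = \id$ when $f$ is surjective; $(f_*)^t = f^*$; and the tabulator identities $R = (\pi_1^R)^*;(\pi_2^R)_*$, where $\pi_1,\pi_2$ denote the tabulator projections of $M,N,R$ in the appropriate slots.

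\textbf{Normality, symmetry and item 5.} For the identity relation the projections $\pi_1=\pi_2$ are bijections $\graph{\id_X}\cong X$. Every factor in each $\mathcal{L}_i$ is then the graph of an isomorphism together with its inverse, so the composite collapses to the identity; this gives item 1. Symmetry (item 6) follows because transposition exchanges $\pi_1$ and $\pi_2$ (under $\graph{M^t}\cong\graph{M}$), turns each $(-)_*$ into $(-)^*$, and reverses the order of the composite. Each of $\mathcal{L}_1,\dots,\mathcal{L}_4$ is a palindrome up to this swap, since $\mathcal{L}_2$ and $\mathcal{L}_3$ are mirror-symmetric.

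For item 5 I compute the tabulator of $f^*$ as $\graph{f^*}\cong A$, with $\pi_1=f$ and $\pi_2=\id$; similarly $\graph{g_*}\cong C$ with $\pi_1=\id$, $\pi_2=g$. Substituting and using functoriality, most factors become identities. The remaining ones are of the form $T(\id,\id,\id)$ or a single $T(f,g,h)$ appearing with the correct variance, because contravariance in the first slot turns $T(f,\id,\id)$ into a map $T(B,-,-)\to T(A,-,-)$ pointing the right way. One has to check in each $\mathcal{L}_i$ that the surviving pieces compose to exactly $(T(f,g,h))_*$. The second equation then follows from the first by symmetry.

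\textbf{Monotonicity and the composition laws.} For item 2, I would express $M'\leq M$ as an inclusion $\iota:\graph{M'}\hookrightarrow\graph{M}$ commuting with the projections. Inserting $\iota_*;\iota^*\geq\id$ or $\iota^*;\iota_*\leq\id$ at the right positions then gives the inequality, antitone in the contravariant slot because $T$ flips the direction of $\iota$. For item 3, I set $\id,\id$ in the last two slots, so all $\mathcal{L}_i$ reduce to the contravariant Barr extension $(T\pi_1)_*;(T\pi_2)^*$. The lax law is then the argument already sketched for functors $\Set^\op\to\Set$. Let $P$ be the pullback of $\graph{M}$ and $\graph{M'}$ over $A_2$; it maps onto $\graph{M;M'}$ surjectively. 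I bound $(T\pi_2)^*;(T\pi'_1)_*$ from above using the square $T$ of this pullback, which needs only commutativity (lax direction), never weak pullback preservation. Surjectivity of $P\to\graph{M;M'}$, seen as a split epi after applying $T$ contravariantly, gives a split mono, which yields the comparison. Item 4 is the dual: the covariant-slot oplax law for the ordinary Barr extension. It holds again because we only need $\graph{N;N'}$ to be covered by the pullback $P$, choosing a section $s$ of $P\to\graph{N;N'}$ and factoring through $T(s)$.

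\textbf{Minimality (item 7).} This is the step I expect to be the main obstacle, since it must be done for all four shapes. Given a lax extension $\mathcal{L}$, I use axiom 3 of Definition~\ref{def:ParamLaxExtensionMonadSet} to bound each factor of $\mathcal{L}_i$ by a value of $\mathcal{L}$ on functional relations. For example, $(T(\id,\pi_1,\pi_1))^*\leq\mathcal{L}(\id,\pi_1^*,\pi_1^*)$ and $(T(\pi_1,\id,\id))_*\leq\mathcal{L}(\pi_1^*,\id,\id)$. I then merge the factors with lax composition (axiom 2). In every order the slot-wise composites come out as $\pi_1^*;\pi_2^*{}^{\,t}$-type expressions, namely $(\pi^M_1)_*$-then-$(\pi^M_2)^*$ in the first slot and $(\pi_1)^*;(\pi_2)_*$ in the others. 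Identities are absorbed using $\id\leq\mathcal{L}(\id,\id,\id)$, which comes from axiom 3 with identity functions. So the composite is at most $\mathcal{L}(\pi^M_1{}_*;\pi^M_2{}^*,\ N,\ R)$. The delicate point is the first slot. There one gets $(\pi_1^M)_*;(\pi_2^M)^*$, which is the transpose of $M$ rather than $M$, unless the variance conventions are tracked carefully. I would resolve this by rechecking the direction conventions of $f^*$ versus $f_*$ in the contravariant slot of axiom 3. Failing that, I would use antitonicity with the inequality relating $(\pi_1)_*;(\pi_2)^*$ and $M$ correctly oriented.
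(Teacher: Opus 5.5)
Your item-by-item plan follows the paper's own proof: isomorphic tabulators for normality, inserting $\iota$ for monotonicity, the pullback $P$ with its split-epi comparison to $\graph{M;M'}$ for items 3--4, explicit tabulators of $f^*,g_*,h_*$ for item 5, and axioms 3 and 2 for minimality. Those items do go through for all four $\mathcal{L}_i$.

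The step that fails is item 6 for $i\in\{2,3\}$. $\mathcal{L}_2$ and $\mathcal{L}_3$ are not palindromes under ``reverse the composite, swap $(-)_*\leftrightarrow(-)^*$, swap $\pi_1\leftrightarrow\pi_2$''. They are mirror images of \emph{each other}, so that operation exchanges them. Carrying it out gives exactly
\[
(\mathcal{L}_2(M,N,R))^t=\mathcal{L}_3(M^t,N^t,R^t),\qquad (\mathcal{L}_3(M,N,R))^t=\mathcal{L}_2(M^t,N^t,R^t),
\]
and the claimed symmetry is actually false for $i=2,3$. Here is a counterexample.
\begin{itemize}
\item Take the state functor $T(A,C,X)=(C\times X)^A$ with $X=Y=1$ and $R=\id_1$. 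Set $A=\{a\}$, $B=\{b_1,b_2\}$, $M=A\times B$, $C=D=\{0,1\}$ and $N=C\times D$.
\item Unfolding, $(\varphi,\psi)\in\mathcal{L}_2(M,N,\id_1)$ iff some $\theta:B\to\graph{N}$ has $\pi_2^N\theta=\psi$ and $\pi_1^N\theta(b)=\varphi(a)$ for all $(a,b)\in M$. Choosing $\theta(b_i)=(\varphi(a),\psi(b_i))$ shows this relation is full.
\item By contrast, $(\psi,\varphi)\in\mathcal{L}_2(M^t,N^t,\id_1)$ needs a single $\theta(a)\in\graph{N^t}$ with $\pi_1^{N^t}\theta(a)=\psi(b_1)=\psi(b_2)$, so this relation is not full.
\end{itemize}
The same data refute item 6 for $\mathcal{L}_3$, via the cross-identity.

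The paper proves item 6 only for $\mathcal{L}_1$ and says the other cases ``follow the same pattern''. So you are inheriting a defect of the statement rather than missing an idea of the paper. Still, your palindrome step cannot be repaired. What is provable is symmetry for $\mathcal{L}_1$ and $\mathcal{L}_4$, plus the cross-identity above.

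This also breaks your derivation of the second half of item 5 ``by symmetry'' for $i=2,3$. You can fix it in either of two ways:
\begin{itemize}
\item use the cross-identity together with the first half for the other index; or
\item compute directly, using $\graph{f_*}\cong A$ with $\pi_1=\id$, $\pi_2=f$, and $\graph{g^*}\cong C$ with $\pi_1=g$, $\pi_2=\id$. For instance, $\mathcal{L}_2(f_*,g^*,h^*)$ then collapses to $(T(f,g,h))^*$.
\end{itemize}

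Finally, the ``delicate point'' you raise in item 7 is not real. As in your own example, axiom 3 gives $(T(\pi_1^M,\id,\id))_*\leq\mathcal{L}((\pi_1^M)^*,\id,\id)$ and $(T(\pi_2^M,\id,\id))^*\leq\mathcal{L}((\pi_2^M)_*,\id,\id)$. Lax composition therefore produces $(\pi_1^M)^*;(\pi_2^M)_*=M$ in the first slot, not $M^t$. Your proposed fallback via antitonicity could not have worked anyway, since $M$ and $M^t$ are incomparable in general.
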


To obtain that $\mathcal{L}_i$ preserves composition laxly in the last two arguments, we need the functor $T$ to preserve weak pullbacks in the last two arguments. 

\begin{lemma}\label{lem:WeakPbkLastTwo} For $i\in \{1,2,3,4\}$, if for every weak pullback squares as on the left below,
		\begin{center}
		\begin{tikzpicture}[thick,scale=0.5]
			\node (A) at (0,1.5) {\scriptsize$B$};
			\node (B) at (2,0) {\scriptsize$D$};
			\node (C) at (4,1.5) {\scriptsize$C$};
			\node (D) at (2,3) {\scriptsize$A$};
			
			\draw [->] (A)  to  node [below left] {\scriptsize$h$} (B);
			\draw [->] (C)  to  node [below right] {\scriptsize$k$} (B);
			\draw [->] (D) to  node [above left] {\scriptsize$f$} (A);
			\draw [->] (D)  to node [above right] {\scriptsize$g$} (C);
			\node at (2,1.5) {\scriptsize$\mathrm{wpb}$};
			
			\begin{scope}[xshift=5cm]
				\node (A) at (0,1.5) {\scriptsize$X$};
				\node (B) at (2,0) {\scriptsize$Z$};
				\node (C) at (4,1.5) {\scriptsize$Y$};
				\node (D) at (2,3) {\scriptsize$W$};
				
				\draw [->] (A)  to  node [below left] {\scriptsize$l$} (B);
				\draw [->] (C)  to  node [below right] {\scriptsize$m$} (B);
				\draw [->] (D) to  node [above left] {\scriptsize$i$} (A);
				\draw [->] (D)  to node [above right] {\scriptsize$j$} (C);
				\node at (2,1.5) {\scriptsize$\mathrm{wpb}$};
				
					\node at (6,1.5) {$\Rightarrow$};
			\end{scope}

				\begin{scope}[xshift=14cm,yscale=1.2, xscale=1.1]
				\node (A) at (0,1.5) {\scriptsize$T(U,B,X)$};
				\node (B) at (2,0) {\scriptsize$T(U,D,Z)$};
				\node (C) at (4,1.5) {\scriptsize$T(U,C,Y)$};
				\node (D) at (2,3) {\scriptsize$T(U,A,W)$};
				
				\draw [->] (A)  to  node [below left] {\scriptsize$T(\id, h,l)$} (B);
				\draw [->] (C)  to  node [below right] {\scriptsize$T(\id, k.m)$} (B);
				\draw [->] (D) to  node [above left] {\scriptsize$T(\id, f,i)$} (A);
				\draw [->] (D)  to node [above right] {\scriptsize$T(\id, g,j)$} (C);
				\node at (2,1.5) {\scriptsize$\mathrm{wpb}$};
			\end{scope}
		\end{tikzpicture}
	\end{center} their image by $T$ as on the right is a weak pullback, then for all 
	$N: B_1 \profto B_2$, $N' :B_2 \profto B_3$, $R: X_1 \profto X_2$ and $R' :X_2 \profto X_3$, $ \mathcal{L}_i(\id ,N',R') \circ \mathcal{L}_i(\id ,N,R)  \leq	\mathcal{L}_i(\id ,N'\circ N,R'\circ R)$.
\end{lemma}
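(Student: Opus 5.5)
First observe that when the first argument is $\id_U$, all four operators collapse to the same relation. The tabulator of $\id_U$ is the diagonal, with $\pi_1=\pi_2=\id_U$ up to the isomorphism $\graph{\id_U}\cong U$. The factors $(T(\pi_1,\id,\id))_*$ and $(T(\pi_2,\id,\id))^*$ are therefore identities. In every case we get
\[\mathcal{L}_i(\id_U,N,R)=(T(\id,\pi^N_1,\pi^R_1))^*;(T(\id,\pi^N_2,\pi^R_2))_* .\]
This is exactly the (covariant, two-variable) Barr extension of the functor $T_U:=T(U,-,-):\Set\times\Set\to\Set$, evaluated at $(N,R)$. So the lemma reduces to the classical statement that the Barr extension of a weak-pullback-preserving functor is laxly functorial, here for $T_U$. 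The hypothesis of the lemma says precisely that $T_U$ maps pairs of weak pullbacks (i.e. weak pullbacks in $\Set\times\Set$) to weak pullbacks.

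Next I run the standard argument for $T_U$. Write $G_N=\graph{N}$, $G_{N'}=\graph{N'}$ and form the pullback $P_N$ of $\pi^{N}_2:G_N\to B_2$ and $\pi^{N'}_1:G_{N'}\to B_2$. Do the same for $R,R'$, giving $P_R$. Denote $T_U(f,g)=T(\id_U,f,g)$. The composite to bound is
\[(T_U\pi_1)^*;(T_U\pi_2)_*;(T_U\pi'_1)^*;(T_U\pi'_2)_*\]
with $\pi_k=(\pi^N_k,\pi^R_k)$ and $\pi'_k=(\pi^{N'}_k,\pi^{R'}_k)$. The middle pair $(T_U\pi_2)_*;(T_U\pi'_1)^*$ is the cospan part. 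The pair of pullback squares (for $P_N$ and $P_R$) is sent by $T_U$ to a weak pullback by hypothesis. By the Beck–Chevalley characterization recalled after Definition~\ref{def:weakpb}, this gives $(T_U\pi_2)_*;(T_U\pi'_1)^*=(T_U q_1)^*;(T_U q_2)_*$, where $q_1:P\to G$ and $q_2:P\to G'$ are the pullback projections in $\Set\times\Set$. Substituting and using functoriality of $(-)^*$ and $(-)_*$, the composite becomes
\[(T_U(\pi_1\circ q_1))^*;(T_U(\pi'_2\circ q_2))_* .\]

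Finally compare this with the Barr extension of $(N;N',R;R')$. The span $(\pi_1 q_1,\pi'_2 q_2)$ out of $P$ has image contained in $\graph{N;N'}\times\graph{R;R'}$. Hence it factors through the tabulators via a map $e:P\to \graph{N;N'}\times\graph{R;R'}$, by the tabulator universal property. Then
\[(T_U(\pi_1 q_1))^*;(T_U(\pi'_2 q_2))_*=(T_U\pi''_1)^*;(T_Ue)^*;(T_Ue)_*;(T_U\pi''_2)_*\le (T_U\pi''_1)^*;(T_U\pi''_2)_*,\]
since $f^*;f_*\le \id$ (the counit of $f_*\dashv f^*$, written in sequential notation). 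Here $\pi''_k$ are the tabulator projections of $(N;N',R;R')$. The right-hand side is $\mathcal{L}_i(\id,N';\!\circ N,R'\circ R)$ by the first step, which concludes.

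The only delicate points are the bookkeeping in the first step and the application of the hypothesis. For $\mathcal{L}_1,\dots,\mathcal{L}_4$ one must check that the identity tabulator really contributes identity relations (using $T(\id,\id,\id)=\id$). The squares fed to the hypothesis are honest pullbacks, hence weak pullbacks, in each coordinate separately, which is exactly the form the hypothesis requires. The main obstacle is the Beck–Chevalley equivalence between weak pullbacks and the equation $g_*\circ f^*=k^*\circ h_*$, which I take from the remark after Definition~\ref{def:weakpb}.
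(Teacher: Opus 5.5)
Your proof is correct and takes essentially the route the paper uses. The paper does not write out a separate proof of this lemma, but the same steps appear in its proofs of Theorem~\ref{th:paramBarr}.3 and Propositions~\ref{prop:laxExtParam}--\ref{prop:laxExtParamWeak}: turn the middle cospan $(T(\id,\pi_2^N,\pi_2^R))_*;(T(\id,\pi_1^{N'},\pi_1^{R'}))^*$ into a span via the weak-pullback hypothesis and Beck--Chevalley, then factor through the tabulator of $(N;N',R;R')$. The differences are minor: you use the counit inequality $e^*;e_*\le\id$ where the paper uses that the comparison map $u$ is split epi to get an equality, and in your reduction step the outer factors coming from $\graph{\id_U}$ are mutually inverse isomorphisms that cancel, not literal identities.
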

However, this is not sufficient to obtain a parametrized lax extension as in Definition~\ref{def:ParamLaxExtensionMonadSet}. 
The important point is the question of interchange of variables we explained in the previous section when dealing with two covariant variables. Here, we consider the interaction between the contravariant variable and the covariant ones. We do not obtain 
\[\mathcal{L}_i(M' ,N',R') \circ \mathcal{L}_i(M ,N,R)  \leq	\mathcal{L}_i(M' \circ M ,N'\circ N,R'\circ R)\] from the inclusions $\mathcal{L}_i(\id ,N',R') \circ \mathcal{L}_i(\id ,N,R)  \leq	\mathcal{L}_i(\id ,N'\circ N,R'\circ R) $ and  $	\mathcal{L}_i(M',\id, \id) \circ \mathcal{L}_i(M,\id, \id)  \leq  	\mathcal{L}_i(M'\circ M,\id, \id)$.
In order to obtain a parametrized lax extension, it suffices to have the following:
\begin{proposition}\label{prop:laxExtParam} We say that $T$ satisfies the \emph{strong interchange property} if for every $f: A \to B$, $g : C \to D$ and $h : X \to Y$, the following square is a weak pullback 
		\begin{center}
		\begin{tikzpicture}[thick,yscale=0.5,xscale=0.7]
		
				\node (A) at (0,1.5) {\scriptsize$T(A,C,X)$};
				\node (B) at (2,0) {\scriptsize$T(A,D,Y)$};
				\node (C) at (4,1.5) {\scriptsize$T(B,D,Y)$};
				\node (D) at (2,3) {\scriptsize$T(B,C,X)$};
				
				\draw [->] (A)  to  node [below left] {\scriptsize$T(\id, g,h)$} (B);
				\draw [->] (C)  to  node [below right] {\scriptsize$T(f, \id, \id)$} (B);
				\draw [->] (D) to  node [above left] {\scriptsize$T(f, \id, \id)$} (A);
				\draw [->] (D)  to node [above right] {\scriptsize$T(\id, g,h)$} (C);
				\node at (2,1.5) {\scriptsize$\mathrm{wpb}$};

		\end{tikzpicture}
	\end{center} 
	If $T$ preserves weak pullbacks in its last two arguments as in Lemma~\ref{lem:WeakPbkLastTwo} and satisfies the strong interchange property, then all the operators $\mathcal{L}_1$, $\mathcal{L}_2$, $\mathcal{L}_3$ and $\mathcal{L}_4$ become equal and are parametrized lax extensions. 
\end{proposition}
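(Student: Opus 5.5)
The plan is to first show that the strong interchange property collapses the lattice $\mathcal{L}_1 \leq \mathcal{L}_2,\mathcal{L}_3 \leq \mathcal{L}_4$, and then to derive lax composition for the common value.

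\textbf{Step 1: $\mathcal{L}_4 \leq \mathcal{L}_1$.} Write $\alpha = T(\pi_1,\id,\id)$, $\alpha' = T(\pi_2,\id,\id)$, $\beta = T(\id,\pi_1,\pi_1)$ and $\beta' = T(\id,\pi_2,\pi_2)$, with the implicit objects chosen appropriately. The weak pullback characterization gives $g_*\circ f^* = k^*\circ h_*$, and we apply it to the interchange square for $f=\pi^M_1$, $g=\pi^N_1$, $h=\pi^R_1$. This allows us to rewrite $\alpha_* ; \beta^*$ as $\beta^* ; \alpha_*$ with the appropriately re-indexed maps. The square for $f=\pi^M_2$, $g=\pi^N_2$, $h=\pi^R_2$ similarly yields $\beta'_* ; \alpha'^* = \alpha'^* ; \beta'_*$. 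Applying both rewrites turns $\mathcal{L}_4$ into $\mathcal{L}_1$, so $\mathcal{L}_4 = \mathcal{L}_1$. Combined with the inclusions stated before Theorem~\ref{th:paramBarr}, all four operators are then equal.

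The main care here is bookkeeping. Both composites must have the same domain and codomain, and the square's orientation (with $T(f,\id,\id)$ going from $T(B,\ldots)$ to $T(A,\ldots)$) must match the direction of the $(-)_*$ and $(-)^*$ used. I would check this on one square explicitly and treat the other symmetrically.

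\textbf{Step 2: lax composition.} Call the common operator $\mathcal{L}$. First, I would decompose $\mathcal{L}(M,N,R) = \mathcal{L}(M,\id,\id) ; \mathcal{L}(\id,N,R)$. This follows from the $\mathcal{L}_2$ formula, since the $\id$-components contribute tabulators of identities, whose projections are isomorphisms. It equally holds in the reverse order $\mathcal{L}(\id,N,R);\mathcal{L}(M,\id,\id)$ by the $\mathcal{L}_3$ formula. Using both orders, the composite $\mathcal{L}(M,N,R);\mathcal{L}(M',N',R')$ becomes
\[
\mathcal{L}(M,\id,\id);\mathcal{L}(\id,N,R);\mathcal{L}(M',\id,\id);\mathcal{L}(\id,N',R').
\]
Rewriting the middle $\mathcal{L}(\id,N,R);\mathcal{L}(M',\id,\id)$ as $\mathcal{L}(M',\id,\id);\mathcal{L}(\id,N,R)$ gives
\[
\mathcal{L}(M,\id,\id);\mathcal{L}(M',\id,\id);\mathcal{L}(\id,N,R);\mathcal{L}(\id,N',R').
\]
Now apply item 3 of Theorem~\ref{th:paramBarr} to the first two factors and Lemma~\ref{lem:WeakPbkLastTwo} to the last two. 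This bounds the expression by $\mathcal{L}(M;M',\id,\id);\mathcal{L}(\id,N;N',R;R') = \mathcal{L}(M;M',N;N',R;R')$.

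\textbf{Step 3: remaining axioms.} Monotonicity and axiom 3 of Definition~\ref{def:ParamLaxExtensionMonadSet} follow from items 2 and 5 of Theorem~\ref{th:paramBarr}. The expected main obstacle is the exact identity $\mathcal{L}(\id,N,R);\mathcal{L}(M',\id,\id) = \mathcal{L}(M',\id,\id);\mathcal{L}(\id,N,R)$ used in Step 2. It follows from $\mathcal{L}_2=\mathcal{L}_3$ once the identity components are shown to be trivial, which is where Step 1 is really needed.
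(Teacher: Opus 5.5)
Your proof is correct. Step~1 is exactly the identification the paper calls straightforward, but Step~2 follows a different route.

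The paper proves lax composition by a single calculation on the four-fold composite defining $\mathcal{L}_1$. Weak-pullback preservation turns the inner cospan $(T(\id,\pi_2^N,\pi_2^R))_*;(T(\id,\pi_1^{N'},\pi_1^{R'}))^*$ into a span through the pullbacks of the tabulators. The strong interchange property is then applied at the resulting maps $T(\id,p_N,p_R)$ and $T(\id,p_{N'},p_{R'})$, to slide them past $T(\pi_1^M,\id,\id)$ and $T(\pi_2^{M'},\id,\id)$. The contravariant middle is bounded via the pullback of $\pi_2^M$ and $\pi_1^{M'}$, and everything is recombined into the tabulators of $M;M'$, $N;N'$ and $R;R'$.

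You instead observe that, by functoriality alone, $\mathcal{L}_2(M,N,R)=\mathcal{L}(M,\id,\id);\mathcal{L}(\id,N,R)$ and $\mathcal{L}_3(M,N,R)=\mathcal{L}(\id,N,R);\mathcal{L}(M,\id,\id)$. Beyond the identity tabulators, this uses the factorization $T(\pi_2^M,\pi_1^N,\pi_1^R)=T(\pi_2^M,\id,\id)\circ T(\id,\pi_1^N,\pi_1^R)$ and its analogue for $\mathcal{L}_3$, which you should state explicitly. The interchange hypothesis is then used only to make the contravariant and covariant partial extensions commute, and lax composition follows formally from item~3 of Theorem~\ref{th:paramBarr} and Lemma~\ref{lem:WeakPbkLastTwo}.

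Your version buys modularity. It makes precise the cubical-functor intuition from the discussion of multivariable functors in Section~\ref{sec:Barr}. It also shows slightly more: weak-pullback preservation in the last two arguments plus the single equation $\mathcal{L}_2=\mathcal{L}_3$ already makes $\mathcal{L}_2$ a parametrized lax extension.

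The paper's calculation is instead self-contained. It inlines the weak-pullback argument rather than invoking Lemma~\ref{lem:WeakPbkLastTwo}, which the paper states without a separate proof.
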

Intuitively, the strong interchange property means that the contravariant variable is independent from the covariant variables. For example, if $T(A,C,X)= F(A) \times G(C,X)$ for some functors $F : \Set^\op \to \Set$ and $G: \Set \times \Set \to \Set$, then $T$ satisfies this property. An example which does not satisfy this property is the underlying functor of the state monad $T(A,C,X)= (C \times X)^A$ for which the operators $\mathcal{L}_1$, $\mathcal{L}_2$, $\mathcal{L}_3$ and $\mathcal{L}_4$ are all distinct, and only $\mathcal{L}_4$ is a parametrized lax extension. The functor $T$ satisfies a weaker form of interchange:
\begin{proposition}\label{prop:laxExtParamWeak} We say that $T$ satisfies the \emph{weak interchange property} if for every spans $(g : P \to C, h : P \to D)$ and $(u : Q \to X, v : Q \to Y)$ and split epi $f : A \to B$, the canonical inclusion 
	\[\begin{aligned}
	&(T(\id, g, u ))^* ; (T(f,\id, \id))_*; (T(f,\id, \id))^*; (T(\id,h, v))_*\leq \\
	&(T(f,\id, \id))_*; (T(\id, g, u ))^* ; (T(\id,h, v))_*(T(f,\id, \id))^*
	\end{aligned}\]
is an equality. 
	If $T$ preserves weak pullbacks in its last two arguments as in Lemma~\ref{lem:WeakPbkLastTwo} and satisfies the weak interchange property, then $\mathcal{L}_4$ is a parametrized lax extension.
\end{proposition}
We spelled out the results for functors of type $ \Set^\op \times \Set \times \Set \to \Set$ because they correspond to the underlying functor of parametrized monads, but the constructions presented above extend to arbitrary mixed variance functor $\Set^{v_1} \times \dots \Set^{v_n} \to \Set$ where $\Set^{v_i}$ is either $\Set$ or $\Set^\op$. For example, for a functor $T : \Set^\op \times \Set \times \Set^\op \times \Set \times \Set \to \Set$, we define $\mathcal{L}_4(M,N,O,P,Q)$ as
\[
 (T(\pi_1, \id, \pi_1, \id, \id))_* ;(T(\id, \pi_1, \id, \pi_1,\pi_1))^* ;(T(\id, \pi_2, \id, \pi_2, \pi_2))_* ; (T(\pi_2, \id, \pi_2, \id, \id))^*
\]
and we obtain analogous results.
\subsection{Extending the monad structure}
Recall that for an ordinary natural transformation $\alpha : F \Rightarrow G : \Set \to \Set$ with lax extensions $\Barr{F}$ and $\Barr{G}$ respectively, its canonical extension $\Barr{\alpha}: \Barr{F}\Rightarrow \Barr{G}$ has components $(\Barr{\alpha})_X := (\alpha_X)_* : F(X) \profto G(X)$. If $\Barr{F}$ and $\Barr{G}$ are the Barr extensions of $F$ and $G$, then $\Barr{\alpha}$ is in general an oplax transformation: \ie for a relation $R : X \profto Y$, $\Barr{F}(R) ; (\alpha_Y)_*\leq (\alpha_X)_*; \Barr{G}(R)$. To obtain (strict) naturality, we need $\alpha$ to be a weakly cartesian transformation, \ie that all its naturality squares are weak pullbacks.

\begin{definition}\label{def:laxextensionmonad}
	 A \emph{lax extension} of a strong monad $(T, \mu, \eta, \strength)$ on $\Set$ to $\Rel$ consists a lax extension $\Barr{T}$ of $T$, as in Definition~\ref{def:laxextension}, satisfying the following axioms:
	\begin{enumerate}
		\item for all $ R : X \profto Y$, $(\eta_Y)_*  \circ R\leq  \Barr{T}(R) \circ (\eta_X)_*$;
		\item for all $ R : X \profto Y$, $(\mu_Y)_*  \circ \Barr{T}(\Barr{T}(R)) \leq  \Barr{T}(R) \circ (\mu_X)_*$.
		\item for all $R : A \profto B$ and $S : X \profto Y$, $(\strength_{B,Y})_* \circ (R \times \Barr{T}(S)) \leq \Barr{T}(R \times S) \circ \strength_{A,X}$.
	\end{enumerate}
	If $\Barr{T}$ is a strict extension for $T$ and conditions $1.-3.$ are equalities, we say that $(\Barr{T}, \mu_*, \eta_*, \strength_*)$ is a \emph{strict extension} for $(T, \mu, \eta, \strength)$. 
\end{definition}

Notable examples of monads that do not have a strict extension are the powerset monad, the probability distribution monad, the ultrafilter monad and the global state monad. In the parametrized case, the unit and multiplication are transformations which are both natural in some variables and dinatural in others. We therefore need to understand what it means to extend a dinatural transformation. Recall that for ordinary $1$-categorical functors $F, G : \bicat{C}^{\op}\times \bicat{C} \to \bicat{D}$, a dinatural transformation $F \Rightarrow G$ is a family of morphisms $\theta_X : F(X,X) \to G(X,X)$ in $\bicat{D}$ such that for all $f:X \to Y$ in $\bicat{C}$, the hexagon axiom holds: 
\[G(\id, f)\circ \theta_X \circ F(f,\id) = G(f,\id) \circ \theta_Y \circ F(\id, f)\].
In the general $2$-categorical setting, the notion of $2$-dinaturality takes into account the various possible dualities and we have transformations between functors of type $ \bicat{C}^{\co\op}\times\bicat{C}^{\op}\times \bicat{C}^{\co}\times \bicat{C} \to \bicat{D}$ which satisfy coherences with $2$-morphisms~\cite{2DinatShulman}. In our case, we are only working with posetal $2$-categories so the situation is simplified. For $2$-functors $F, G : \Rel^\co \times \Rel \to \Rel$, a dinatural transformation $F \Rightarrow G$ is simply a family of relations $\theta_X : F(X,X) \profto G(X,X)$ such that for all $R : X\profto Y$, $ \theta_X ; G(R,R)= F(R,R) ;\theta_Y$. It is oplax if $  F(R,R) ;\theta_Y \leq \theta_X ; G(R,R)$ and lax if we have the reverse inclusion. 

\begin{definition}\label{def:LaxParamExtensionMonad}
	For a parametrized monad  $(T ,\eta,\mu)$, a \emph{parametrized lax extension} consists of a lax extensions $\Barr{T}$ as in Definition~\ref{def:ParamLaxExtensionMonadSet}, such that the relations $(\eta_{A,X})_*$ and $	(\mu_{A_1,A_2,A_3,X})_*$ verify:
	\begin{enumerate}
		\item for all $M : A \profto B$ and $R : X \profto Y$,
		\[
		R ;	(\eta_{B, Y})_* \leq (\eta_{A, X})_*; \Barr{T} (M,M,R)
		\]
		\item for all $M: A_1 \profto B_1$, $N: A_2 \profto B_2$, $O: A_3 \profto B_3$ and $R : X \profto Y$,
		\[
		\Barr{T}(M,N, \Barr{T}(N, O, R)) ;(\mu_{B_1, B_2, B_3,Y})_*  \leq (\mu_{A_1, A_2,A_3,X})_*;\Barr{T}(M,O, R)
		\]
	\end{enumerate}
	If the parametrized monad is equipped with a strength $\strength$, then a \emph{strong lax extension for $T$} is a lax extension as above satisfying the additional axiom:
	\begin{enumerate}
		\setcounter{enumi}{2}
		\item for all $M: A_1 \profto B_1$, $N: A_2 \profto B_2$, $R : X_1 \profto Y_1$ and $S : X_2 \profto Y_2$
		\[
		(R\times \Barr{T}(M,N, S) ); (\strength_{A_2, B_2, X_2, Y_2})_* \leq (\strength_{A_1, B_1, X_1, Y_1})_*;  \Barr{T}(M,N, R \times S).
		\]
	\end{enumerate}
	A \emph{strict parametrized extension} consists of a strict extension $\Barr{T}$ as in Definition~\ref{def:strictParamExtension} for which the inequalities $1.-3.$ above are equalities. 
\end{definition}

\begin{proposition}\label{prop:LaxParamExtensionMonad}
	For a strong parametrized monad  $(T ,\eta,\mu, \strength)$, if $\mathcal{L}_4$ is a parametrized lax extension for $T$, then $(\mathcal{L}_4, \eta_*, \mu_*, \strength_*)$ is a parametrized lax extension for $(T ,\eta,\mu, \strength)$ as in Definition~\ref{def:LaxParamExtensionMonad}.
\end{proposition}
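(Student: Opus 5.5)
Since $\mathcal{L}_4$ is assumed to be a parametrized lax extension, only the three inequalities of Definition~\ref{def:LaxParamExtensionMonad} remain. The plan is to prove each one by unfolding $\mathcal{L}_4$ as the composite of four function graphs and commuting the graph of $\eta$, $\mu$ or $\strength$ past each factor. Each such move uses either (i) naturality of the structure map, written as an equation of graphs $(T(\ldots))_*;(\theta)_* = (\theta)_*;(T(\ldots))_*$, or (ii) the adjunction $f_*\dashv f^*$, through the units and counits $\id\leq f_*;f^*$ and $f^*;f_*\leq\id$. Throughout, $M$, $N$, $O$, $R$ are replaced by their tabulating spans $(\pi_1,\pi_2)$ on $\graph{M}$, etc.

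\textbf{Unit.} I would first treat the span case. Write $R=(\pi_1^R)^*;(\pi_2^R)_*$ and $M=(\pi^M_1)^*;(\pi^M_2)_*$, and work with $\eta_{\graph{M},\graph{R}}$. Dinaturality of $\eta$ in $S$, applied to $\pi_1^M$, gives
\[
T(\pi_1,\id,\pi_1^R)\circ\eta_{A,\graph{R}}\circ\pi^R_1 \text{-side}
= T(\id,\pi_1,\pi_1^R)\circ\eta_{\graph{M},\graph{R}},
\]
and similarly for $\pi_2$. Given $(x,y)\in R$, I would take the element $\eta_{\graph{M},\graph{R}}(x,y)$ in $T(\graph{M},\graph{M},\graph{R})$ and push it along the four legs of $\mathcal{L}_4$. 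Concretely:
\begin{itemize}
\item $T(\pi_1,\id,\id)$ sends $\eta_A(x)$ to $T(\pi_1,\id,\id)\eta_A(x)$;
\item this element equals $T(\id,\pi_1,\pi_1)$ applied to $\eta_{\graph{M}}(x,y)$, by dinaturality and naturality;
\item its image under $T(\id,\pi_2,\pi_2)$ is $T(\pi_2,\id,\id)$ of $\eta_B(y)$.
\end{itemize}
This is just an elementwise witness chain, and it works because $\mathcal{L}_4$ is defined by existential witnesses.

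\textbf{Multiplication.} This step has the same shape but is the main obstacle, because the inner argument $\mathcal{L}_4(N,O,R)$ is itself a composite relation, not a tabulated one. Given $(\Phi,\Psi)\in\mathcal{L}_4(M,N,\mathcal{L}_4(N,O,R))$, I would proceed in three moves.
\begin{enumerate}
\item Obtain witnesses at the outer level, living in $T(\graph{M},\graph{N},\graph{\mathcal{L}_4(N,O,R)})$.
\item Each element of $\graph{\mathcal{L}_4(N,O,R)}$ carries a chain of inner witnesses. Choose them uniformly by the axiom of choice, giving a function $c$ into the set of witness tuples.
\item Apply $T(\id,\id,c)$ and then $\mu$ at the parameter $\graph{N}$. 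Dinaturality of $\mu$ in $S_2$ lets the projections $\pi_1^N$ and $\pi_2^N$ in the middle position move between the outer covariant slot and the inner contravariant slot.
\end{enumerate}
The middle parameter $N$ therefore cancels exactly as in the unit case, while naturality in $S_1$, $S_3$ and $X$ handles $M$, $O$ and $R$. The delicate point is the mismatch between the contravariant legs $(\pi_1)_*$ and $(\pi_2)^*$ of the inner $\mathcal{L}_4$ and the outer factors. If the elementwise choice fails, I would fall back on property 7 of Theorem~\ref{th:paramBarr} (minimality) to compare with a more tractable lax extension, but I expect the direct witness chase to go through.

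\textbf{Strength.} This is routine from naturality of $\strength$ in all variables. Pair the witness for $R$, namely $(r,\text{pair})\in\graph{R}$, with the $\mathcal{L}_4$-witnesses for $\Barr{T}(M,N,S)$ using $\strength_{\graph{R},\ldots}$. Then identify $\graph{R}\times\graph{S}$ with $\graph{R\times S}$, so that each of the four legs of $\mathcal{L}_4$ is preserved by strength.
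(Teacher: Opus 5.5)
Your proposal is correct. For the unit and the strength it is the elementwise form of the paper's relational computation: the witnesses are $\eta_{\graph{M},\graph{R}}(x,y)$ and $\strength_{\graph{M},\graph{N},\graph{R},\graph{S}}((x,y),\xi)$, using $\graph{R\times S}\cong\graph{R}\times\graph{S}$. For the multiplication, however, you take a genuinely different route.

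\textbf{The paper's route.} It first uses the hypothesis that $\mathcal{L}_4$ is a lax extension to make $\mathcal{L}_4(\id,-,-)$ strictly functorial. It then applies Lemma~\ref{lem:laxdblefunctor} to rewrite $\mathcal{L}_4(M,N,\mathcal{L}_4(N,O,R))$, as an equality, into a single four-leg composite of graphs through $T(\graph{M},\graph{N},T(\graph{N},\graph{O},\graph{R}))$. Finally it proves $(\star)$ and $(\star\star)$ by sliding $\mu_*$ past those graphs.

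\textbf{Your route.} You reach the same middle object with a choice function $c:\graph{Q}\to T(\graph{N},\graph{O},\graph{R})$, where $Q:=\mathcal{L}_4(N,O,R)$. Let $\Xi\in T(\graph{M},\graph{N},\graph{Q})$ be the outer witness for $(\Phi,\Psi)$. Your witness $\Theta:=\mu_{\graph{M},\graph{N},\graph{O},\graph{R}}(T(\id,\id,c)(\Xi))$ does work:
\begin{itemize}
\item naturality of $\mu$ in $S_3,X$ and the defining property of $c$ give $T(\id,\pi_1^O,\pi_1^R)(\Theta)=\mu_{\graph{M},\graph{N},A_3,X}(T(\id,\id,T(\pi_1^N,\id,\id)\circ\pi_1^Q)(\Xi))$;
\item dinaturality in $S_2$ at $\pi_1^N$ turns this into $\mu_{\graph{M},A_2,A_3,X}(T(\id,\pi_1^N,\pi_1^Q)(\Xi))=\mu_{\graph{M},A_2,A_3,X}(T(\pi_1^M,\id,\id)(\Phi))$;
\item naturality in $S_1$ then yields $T(\pi_1^M,\id,\id)(\mu_{A_1,A_2,A_3,X}(\Phi))$.
\end{itemize}
The other leg is symmetric. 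So the ``mismatch'' you worry about is exactly what dinaturality in $S_2$ absorbs, and the fallback through property 7 of Theorem~\ref{th:paramBarr} is not needed.

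\textbf{What each buys.} Your argument is more elementary. It only needs the inclusion of $\mathcal{L}_4(M,N,Q)$ into the nested composite, which choice supplies. Hence the three inequalities of Definition~\ref{def:LaxParamExtensionMonad} hold for $\mathcal{L}_4$ with no hypothesis on $T$, and the lax-extension assumption is used only for the requirement that $\mathcal{L}_4$ itself be a parametrized lax extension. The paper's argument stays point-free, in the graph calculus of Section~\ref{sec:Barr}, and gives the sharper equality for the nested relation, at the cost of depending on that assumption.
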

Note that if any of $\mathcal{L}_1,\mathcal{L}_2 $ or $\mathcal{L}_3$ are lax extensions, they must be equal to $\mathcal{L}_4$ by Theorem~\ref{th:paramBarr}.7.

\begin{example}
	The parametrized state monad has underlying functor $T$ as in Example~\ref{ex:paramStateMonad}.1. For sets $A,X$, the unit has components $\eta_{A,X} : X \to (A \times X)^{A}$ given by the function $x \mapsto \lambda a. (a,x)$. For sets $A,B,C$ and $X$, the component of the multiplication $	\mu_{A,B,C,X}	: T(A, B,T(B,C,X)) \to T(A,C,X)$ maps $\varphi  \in (B \times (C \times X)^B)^A$ to $\lambda a. (\pi_2(\varphi(a))(\pi_1(\varphi(a)))$. It is a strong monad: for sets $A,B,X$ and $Y$, the component of the strength $\strength_{A,B,X,Y}: X \times (B \times Y)^A \to  (B \times X\times Y)^A $ maps $(x, \psi) \in X \times (B \times Y)^A $ to $ \lambda a. (\pi_1(\psi(a)), x, \pi_2(\psi(a)))$. We obtain a parametrized lax extension $(\mathcal{L}_4, \eta_*, \mu_*, \strength_*)$, it corresponds to the monad structure induced from the parametrized adjunction for the (double) cartesian closed structure of $\Rel$. 
\end{example}

\section{Parametrized behavioral equivalence}\label{sec:bisim}

Many coinductive preorders and equivalences to reason on systems can be described coalgebraically. In this paper, we will focus on the three following notions: Aczel-Mendler bisimulation, $\Barr{F}$-bisimulation (for some lax extension $\Barr{F}$) and behavioral equivalence.
An important problem is how to develop sound and/or complete proof principles to establish behavioral equivalence of states, which is often difficult to show directly in practice. This problem has been well-studied for set-functors~\cite{aczel2005final}, and it has also been addressed in the quantale-enriched setting~\cite{worrell2000coalgebras, double}.  

 \subsection{Non-parametrized bisimulation and behavioral equivalence}\label{subsec:nonParamBisimExtension}
 For an endofunctor $F:\Set \to \Set$ and coalgebras $(X, \alpha: X \to F(X))$, $(Y, \beta : Y \to F(Y))$, a relation  $R : X \profto Y$  is a behavioral equivalence if there exists a coalgebra $(Z, \zeta:Z \to F(Z))$ and coalgebra morphisms $f : (X,\alpha) \to (Z, \zeta)$, $g : (Y, \beta) \to (Z, \zeta)$ as on the left below such that the diagram on the right is a weak pullback:

 	\begin{center}
 	\begin{tikzpicture}[thick, xscale=0.5, yscale=0.5]
 		\node (A) at (0,0) {\scriptsize$F(X)$};
 		\node (B) at (3,-1) {\scriptsize$F(Z)$};
 		\node (C) at (0,2) {\scriptsize$X$};
 		\node (D) at (3,1) {\scriptsize$Z$};
 		\node (E) at (6,2) {\scriptsize$Y$};
 		\node (F) at (6,0) {\scriptsize$F(Y)$};
 		
 		\draw [->] (C) -- node [left] {\scriptsize$\alpha$} (A);
 		\draw [->] (A) -- node [below left] {\scriptsize$F(f)$} (B);
 		\draw [->] (F) -- node [below right] {\scriptsize$F(g)$} (B);
 		\draw [->] (C) -- node [above] {\scriptsize$f$} (D);
 		\draw [->] (E) -- node [above] {\scriptsize$g$} (D);
 		\draw [->] (E) -- node [right] {\scriptsize$\beta$} (F);
 		\draw [->] (D) -- node [right] {\scriptsize$\zeta$} (B);
 			\begin{scope}[xshift=8.5cm]
 			\node (A) at (0,.5) {\scriptsize$X$};
 			\node (B) at (2,-1) {\scriptsize$Z$};
 			\node (C) at (4,0.5) {\scriptsize$Y$};
 			\node (D) at (2,2) {\scriptsize$\graph{R}$};
 			
 			\draw [->] (A)  to  node [below left] {\scriptsize$f$} (B);
 			\draw [->] (C)  to  node [below right] {\scriptsize$g$} (B);
 			\draw [->] (D) to  node [above left] {\scriptsize$\pi_1^R$} (A);
 			\draw [->] (D)  to node [above right] {\scriptsize$\pi_2^R$} (C);
 			\node at (2,.5) {\scriptsize$\mathrm{wpb}$};
 		\end{scope}
 	\end{tikzpicture}
 \end{center}

 	Two states $x\in X$ and $y\in Y$ are \emph{behaviorally equivalent} if there exists a relation $R$ as above such that $(x,y) \in R$.
 An important characterization of behavioral equivalence is that if $F$ has a final coalgebra $\out_F : \nu F \to F(\nu F)$, then two states $x$ and $y$ are behaviorally equivalent if and only if they have the same \emph{final semantics} \ie they are mapped to equal states in the final coalgebra.  We typically restrict to the case where $\alpha = \beta$ and $R$ is an actual equivalence relation, not just a difunctional relation as above.
 
 Two states $x \in X$ and $y \in Y$ are \emph{Aczel-Mendler bisimilar} (AM-bisimilar) if there exists a relation $R : X \profto Y$ and a coalgebra $\gamma: \graph{R} \to F(\graph{R})$ such that $(x,y) \in R$ and the projection maps $\pi_1 : \graph{R} \to X$ and $\pi_2: \graph{R} \to Y$ are coalgebra morphisms.

 \begin{definition}
 	For coalgebras $(X, \alpha)$ and $(Y, \beta)$ as above, a relation $R : X \profto Y$ is \emph{sound} for behavioral equivalence if for any $(x,y) \in R$, the states $x$ and $y$ are behaviorally equivalent. It is \emph{complete} if the reverse inclusion holds and it is \emph{fully abstract} if both inclusions hold.
 \end{definition}
 
 \begin{theorem}[\cite{rutten2000universal}]
 	 AM-bisimilarity is sound for behavioral equivalence; if $F$ preserves weak-pullbacks, then it is also fully abstract.
 \end{theorem}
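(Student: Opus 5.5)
The theorem has two parts, and I would prove them in order. \emph{Soundness} holds for an arbitrary functor $F$. \emph{Full abstraction} then reduces to completeness, assuming $F$ preserves weak pullbacks.

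For soundness, suppose $(x,y)\in R$ and $\gamma:\graph{R}\to F(\graph{R})$ makes $\pi_1,\pi_2$ coalgebra morphisms. Take the cotabulator, i.e. the pushout $Z:=\coll{R}$ of $(\pi_1^R,\pi_2^R)$ in $\Set$, with injections $\inj_1:X\to Z$ and $\inj_2:Y\to Z$. Coalgebras for $F$ are closed under colimits computed in $\Set$, because the forgetful functor creates colimits. So $Z$ carries a unique coalgebra structure $\zeta$ making $\inj_1,\inj_2$ coalgebra morphisms. It is induced by the cospan $F(\inj_1)\circ\alpha$, $F(\inj_2)\circ\beta$, which agree on $\graph{R}$ precisely because $\pi_1,\pi_2$ are coalgebra morphisms. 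Let $R'$ be the kernel-like relation $\{(x',y')\mid \inj_1(x')=\inj_2(y')\}$. Then the square with $\graph{R'}$ is an honest pullback, hence a weak pullback, and $R\le R'$. Therefore $x$ and $y$ are behaviorally equivalent.

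For completeness, start from a behavioral equivalence witnessed by coalgebra morphisms $f,g$ into $(Z,\zeta)$ with $(x,y)\in R$. We may take $R$ to be the pullback relation $\{(x',y')\mid f(x')=g(y')\}$, since the weak pullback condition makes $R$ coincide with it. The task is to put a coalgebra structure on $\graph{R}$, the pullback $P$ of $f$ and $g$. Consider the map $\langle \alpha\circ\pi_1,\beta\circ\pi_2\rangle : P\to F(X)\times F(Y)$. By naturality of the coalgebra morphisms, $F(f)\circ\alpha\circ\pi_1=\zeta\circ f\circ\pi_1=\zeta\circ g\circ\pi_2=F(g)\circ\beta\circ\pi_2$, so this map lands in the pullback of $F(f)$ and $F(g)$. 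Since $F$ preserves weak pullbacks, the square formed by $F(P)$, $F(X)$, $F(Y)$, $F(Z)$ is a weak pullback. Hence there is a map $\gamma:P\to F(P)$ with $F(\pi_1)\circ\gamma=\alpha\circ\pi_1$ and $F(\pi_2)\circ\gamma=\beta\circ\pi_2$. This is obtained by composing with a section of the comparison map, and uses the axiom of choice in $\Set$. These two equations say exactly that $\pi_1,\pi_2$ are coalgebra morphisms, so $x$ and $y$ are AM-bisimilar.

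The main obstacle is soundness in full generality: one must check that the pushout coalgebra really exists without any assumption on $F$. This rests on the fact that colimits of coalgebras are created by the forgetful functor, and I would verify it directly from the universal property of the pushout $\coll{R}$ recalled earlier. Completeness is routine once the weak pullback preservation is applied to the correct square.
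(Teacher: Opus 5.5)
Your proof is correct, and since the paper only cites Rutten for this theorem without reproving it, your argument is the standard one. Soundness comes from the coalgebra structure induced on the pushout $\coll{R}$, after which you pass to the full pullback relation $R'\supseteq R$ so that the square is an honest pullback. Completeness comes from using weak-pullback preservation to lift a coalgebra structure to $\graph{R}$. This matches the pattern of the paper's own parametrized analogues: the weak-pullback lifting in the proof of Proposition~\ref{prop:AMBehEquivsoundness} and the collage construction in the proof of Theorem~\ref{th:BehEquivExtFullabstract}.
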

 
 We are mainly interested in this paper in the notion of equivalence induced by a lax extension $\bar{F}$ of $F$, two states $x \in X$ and $y \in Y$ are \emph{$\bar{F}$-bisimilar} if there exists a relation $R : X \profto Y$ such that $(x,y) \in R$ and $R$ is included in $ \alpha_*;\bar{F}(R) ; 
 \beta^* $.
 
Bisimulations induced by lax extensions subsume AM-bisimulations. 
  \begin{proposition}\label{prop:AMsoundforLax}
	If $R$ is an AM-bisimulation, then it is a $\bar{F}$-bisimulation for any lax extension $\bar{F}$.
\end{proposition}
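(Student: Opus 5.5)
We work directly from the definitions. Let $R : X \profto Y$ be an AM-bisimulation, witnessed by a coalgebra $\gamma : \graph{R} \to F(\graph{R})$ such that the projections $\pi_1 := \pi^R_1$ and $\pi_2 := \pi^R_2$ are coalgebra morphisms, i.e.\ $\alpha \circ \pi_1 = F(\pi_1) \circ \gamma$ and $\beta \circ \pi_2 = F(\pi_2)\circ\gamma$. The goal is $R \leq \alpha_* ; \bar{F}(R) ; \beta^*$. The plan is to factor $R$ through its tabulator, push the coalgebra equations through the functions-to-relations inclusion, and use axiom 3 of Definition~\ref{def:laxextension} together with lax functoriality (axiom 2) to absorb the factorization.

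\textbf{Step 1.} Since tabulators are strong, $R = \pi_1^* ; (\pi_2)_*$. We also have the unit inequality $\id_{\graph{R}} \leq \gamma_* ; \gamma^*$, because $\gamma_*$ is left adjoint to $\gamma^*$. Hence
\[
R \leq \pi_1^* ; \gamma_* ; \gamma^* ; (\pi_2)_*.
\]

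\textbf{Step 2.} Turn the coalgebra-morphism equations into relational equalities. The inclusion $f\mapsto f_*$ is functorial and $f \mapsto f^*$ is contravariantly functorial. Therefore $(\pi_1)_* ; \alpha_* = \gamma_* ; (F\pi_1)_*$, and, transposing the second equation, $\beta^* ; \pi_2^* = (F\pi_2)^* ; \gamma^*$.

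From $\id \leq (\pi_1)_* ; \pi_1^*$ we get
\[
\pi_1^* ; \gamma_* \leq \pi_1^* ; \gamma_* ; (F\pi_1)_* ; (F\pi_1)^* = \pi_1^* ; (\pi_1)_* ; \alpha_* ; (F\pi_1)^* \leq \alpha_* ; (F\pi_1)^*,
\]
using the counit $\pi_1^* ; (\pi_1)_* \leq \id$. Symmetrically,
\[
\gamma^* ; (\pi_2)_* \leq (F\pi_2)_* ; \beta^*.
\]
Combining with Step 1 gives
\[
R \leq \alpha_* ; (F\pi_1)^* ; (F\pi_2)_* ; \beta^*.
\]

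\textbf{Step 3.} By axiom 3, $(F\pi_1)^* \leq \bar{F}(\pi_1^*)$ and $(F\pi_2)_* \leq \bar{F}((\pi_2)_*)$. By axiom 2,
\[
\bar{F}(\pi_1^*) ; \bar{F}((\pi_2)_*) \leq \bar{F}(\pi_1^* ; (\pi_2)_*) = \bar{F}(R).
\]
Composition of relations is monotone, so these inequalities chain together to give $R \leq \alpha_* ; \bar{F}(R) ; \beta^*$, which is the claim.

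\textbf{Main obstacle.} The only delicate point is the adjunction bookkeeping in Step 2, namely getting the direction of the unit and counit right when passing $\gamma$ across. This is routine once the coalgebra-morphism equations are written in relational form. Note that no weak-pullback hypothesis on $F$ is needed, because only laxness is used.
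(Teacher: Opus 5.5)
Your proof is correct and is essentially the argument the paper uses. The paper does not prove this proposition directly but writes the argument out for its parametrized analogue, Lemma~\ref{lem:AMLaxExtSoundness}: factor $R = \pi_1^* ; (\pi_2)_*$ through its tabulator, use the coalgebra-morphism squares to get $R \leq \alpha_* ; (F\pi_1)^* ; (F\pi_2)_* ; \beta^*$, and conclude with axioms 3 and 2. You also spell out the adjunction steps the paper leaves implicit.

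One small slip: the first inequality of Step 2 actually uses the unit $\id \leq (F\pi_1)_* ; (F\pi_1)^*$, not $\id \leq (\pi_1)_* ; \pi_1^*$ as you cite. This is harmless, since that unit inequality holds for every function.
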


\begin{theorem}[\cite{marti2015lax,goncharov2025relators}]\label{thm:LaxExtFullyAbstract}
	For a lax extension $\bar{F}$, $\bar{F}$-bisimilarity is complete for behavioral equivalence and if $\bar{F}$ is normal, then $\bar{F}$-bisimilarity is fully abstract.
\end{theorem}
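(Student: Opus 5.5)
The statement to prove is Theorem~\ref{thm:LaxExtFullyAbstract}: for a lax extension $\bar{F}$, $\bar{F}$-bisimilarity is complete for behavioral equivalence, and if $\bar{F}$ is normal it is also sound. I would treat the two directions separately.

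\textbf{Completeness.} Suppose $x$ and $y$ are behaviorally equivalent. Then there are coalgebra morphisms $f:(X,\alpha)\to(Z,\zeta)$ and $g:(Y,\beta)\to(Z,\zeta)$ with $f(x)=g(y)$. I take $R := f_*;g^*$, which relates $x'$ and $y'$ exactly when $f(x')=g(y')$; it contains $(x,y)$. I then show $R\le \alpha_*;\bar{F}(R);\beta^*$ by the chain of inequalities
\[
f_*;g^* \le \alpha_*;\alpha^*;f_*;g^*;\beta_*;\beta^* .
\]
This uses that $\alpha$ and $\beta$ are functions, so $\id\le\alpha_*;\alpha^*$ and $\id\le\beta_*;\beta^*$. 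Next, $\alpha^*;f_* \le (Ff)_*;\zeta^*$ follows from the coalgebra morphism square $\zeta\circ f = Ff\circ\alpha$, written $f_*;\zeta_* = \alpha_*;(Ff)_*$, together with adjunction manipulations. Symmetrically, $g^*;\beta_*\le \zeta_*;(Fg)^*$. Combining these with $\zeta^*;\zeta_*\le\id$ gives
\[
R\le \alpha_*;(Ff)_*;(Fg)^*;\beta^* .
\]
Finally, axiom~3 of Definition~\ref{def:laxextension} and lax composition (axiom~2) give $(Ff)_*;(Fg)^*\le \bar{F}(f_*);\bar{F}(g^*)\le\bar{F}(f_*;g^*)=\bar{F}(R)$. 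Hence $R$ is an $\bar{F}$-bisimulation. This direction is routine.

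\textbf{Soundness, assuming $\bar{F}$ normal.} Given an $\bar{F}$-bisimulation $R$, I want a coalgebra and coalgebra morphisms identifying related states. The natural candidate is a quotient. First I close $R$ under the operations the lax extension respects: take the disjoint union coalgebra $X+Y$ and let $E$ be the equivalence relation on $X+Y$ generated by $R$. I then show $E$ is itself an $\bar{F}$-bisimulation on $\alpha+\beta$. The generators are handled by assumption. The identity is handled by normality, via $\id\le\gamma_*;\gamma^* = \gamma_*;\bar{F}(\id);\gamma^*$, where $\gamma=\alpha+\beta$. Transposes and composites go through using lax composition and monotonicity, and $\bar{F}$-bisimulations are closed under unions (a directed union suffices, by monotonicity). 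I also need $\bar{F}(R)^t=\bar{F}(R^t)$, or at least an inequality in the right direction. The standard fact is that lax extensions satisfy $\bar{F}(R^t)\ge$ the appropriate reversed relation after symmetrisation, but this may fail in general. If it does, I would replace $\bar{F}$ by its symmetrisation $\bar{F}(R)\cap\bar{F}(R^t)^t$, which is again a normal lax extension with at least as many bisimulations containing $R$ and $R^t$. Alternatively, I would work with the difunctional closure $R;R^t;R;\dots$, which suffices for behavioral equivalence in the weak-pullback formulation.

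Next, let $q:X+Y\to Z:=(X+Y)/E$ be the quotient map. I need a coalgebra structure $\zeta$ on $Z$ with $\zeta\circ q=Fq\circ\gamma$, which amounts to showing that $uEv$ implies $Fq(\gamma u)=Fq(\gamma v)$. From $E\le\gamma_*;\bar{F}(E);\gamma^*$ and $E\le q_*;q^*$, monotonicity gives
\[
\bar{F}(E)\le\bar{F}(q_*;q^*)\le \bar{F}(q_*);\bar{F}(q^*).
\]
The key claim I need is $\bar{F}(q_*;q^*)\le (Fq)_*;(Fq)^*$. This is exactly where normality enters: since $q$ is split epi, $q^*;q_*=\id_Z$, and the standard lemma for normal lax extensions gives $\bar{F}(f_*)=(Ff)_*$ and $\bar{F}(f^*)=(Ff)^*$ with equality. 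That lemma is proven via $\bar{F}(f_*)\le (Ff)_*;(Ff)^*;\bar{F}(f_*)\le (Ff)_*;\bar{F}(f^*;f_*)=(Ff)_*;\bar{F}(\id)=(Ff)_*$ for surjective $f$, and analogously in general using $f^*;f_*\le\id$. With this, $Fq\circ\gamma$ is constant on $E$-classes, so $\zeta$ is well-defined. The maps $q\circ\inj_1$ and $q\circ\inj_2$ are then coalgebra morphisms identifying related states, which gives behavioral equivalence.

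\textbf{Main obstacle.} I expect the main obstacle to be establishing $\bar{F}(f_*)=(Ff)_*$ under normality, and justifying the closure of $\bar{F}$-bisimulations under converse. I would first try to prove the equality of the lemma directly, and fall back on symmetrisation of $\bar{F}$ if the converse closure does not go through.
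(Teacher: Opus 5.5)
Your completeness half is correct and follows the same route as the paper's argument for the parametrized analogue (Theorem~\ref{th:BehEquivExtFullabstract}): bound $f_*;g^*$ by $\alpha_*;(Ff)_*;(Fg)^*;\beta^*$, then finish with axioms 2 and 3.

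The soundness half has a genuine gap. Your plan needs the equivalence closure $E$ of $R$ to be an $\bar{F}$-bisimulation. Closure under converse requires $\bar{F}(S)^t\le\bar{F}(S^t)$, and this is not part of Definition~\ref{def:laxextension}: the paper treats symmetry as a separate axiom (axiom 4 of Definition~\ref{def:ParamLaxExtensionMonadSet}) and assumes it explicitly whenever it wants converse-closure. Neither fallback repairs this.
\begin{itemize}
\item The symmetrisation $\bar{F}^s(S)=\bar{F}(S)\cap\bar{F}(S^t)^t$ is a normal lax extension, but $\bar{F}^s\le\bar{F}$. So it has \emph{fewer} bisimulations, not more, and a given $\bar{F}$-bisimulation $R$ need not be an $\bar{F}^s$-bisimulation.
\item The difunctional closure $R;R^t;R;\dots$ hits the same obstacle, since bounding it requires controlling $\bar{F}(R)^t$.
\end{itemize}
Separately, the displayed step $\bar{F}(q_*;q^*)\le\bar{F}(q_*);\bar{F}(q^*)$ is the oplax direction. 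Lax extensions do not provide it, and the equalities $\bar{F}(f_*)=(Ff)_*$, $\bar{F}(f^*)=(Ff)^*$ do not justify it.

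Both issues disappear if you never close $R$. Put $f=q\circ\inj_1$ and $g=q\circ\inj_2$, so that $R\le f_*;g^*$. Sandwich on both sides, as in Lemma~\ref{lem:laxdblefunctor}, using $\id\le(Ff)_*;(Ff)^*$, $\id\le(Fg)_*;(Fg)^*$, axiom 3 and lax composition. For normal $\bar{F}$ this gives
\[
\bar{F}(f_*;g^*)\le (Ff)_*;\bar{F}(f^*;f_*;g^*;g_*);(Fg)^*\le (Ff)_*;\bar{F}(\id);(Fg)^*=(Ff)_*;(Fg)^* .
\]
Hence $R\le\alpha_*;(Ff)_*;(Fg)^*;\beta^*$. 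That is, $Fq\circ\gamma$ identifies every generating pair of $E$, where $\gamma=[F\inj_1\circ\alpha,F\inj_2\circ\beta]$ (not $\alpha+\beta$).

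The kernel of $Fq\circ\gamma$ is automatically an equivalence relation, so it contains $E$, and $\zeta$ is well defined. Then $q\circ\inj_1$ and $q\circ\inj_2$ are coalgebra morphisms identifying $R$-related states. This is exactly the cotabulator universal property the paper invokes in its proof of Theorem~\ref{th:BehEquivExtFullabstract}. Your $(X+Y)/E$ is $\coll{R}$, and the universal property only asks for $R\le h_*;k^*$, which is why no symmetry and no closure of bisimulations is needed.
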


\subsection{Parametrized bisimulation from lax extensions}
We fix a parametrized lax-extension $\Barr{T}$ of a functor $T : \Set^\op \times \Set \times \Set \to \Set$. We first need to determine what is the notion of coalgebra in this setting. In this paper, we do not follow the approach of~\cite{atkey2009algebras} where a parametrized coalgebra would be a functor $c : \Set^\op  \to \Set$ equipped with a family of functions $\alpha_{A,C}: c(C)\to T(A,C,c(C))$ natural in $A$ and dinatural in $C$. 
This definition requires the state space to depend on the output parameter, which is too restrictive for some of the examples we consider.
Moreover, we want to be able to consider a single coalgebra $\alpha : X \to T(A,C,X)$ at a time, together with relations between parameters chosen by the user, without requiring those relations to be defined for all sets. In our setting, we fix coalgebras $\alpha : X \to T(A,C,X)$ and $\beta : Y \to T(B,D,Y)$ and relations $M : A \profto B$ and $N : C \profto D$ on the parameters.  We want to understand what it means for a relation $R : X \profto Y$ between states to be a bisimulation with respect to $(M,N)$. The most straightforward generalization to our setting is the lax-extension notion of bisimulation:
\begin{definition}\label{def:parambisim}
	A relation $R : X \profto Y$ is a $\Barr{T}$-bisimulation w.r.t $(M,N)$ if it satisfies $R \leq  \alpha_* ; \Barr{T}(M,N, R) ; \beta^*$.
\end{definition}
Since $\Barr{T}$ is monotone in its third argument, the mapping $R \mapsto \alpha_* ; \Barr{T}(M,N, R) ; \beta^*$ has a greatest fixpoint which we call \emph{$\Barr{T}$-bisimilarity w.r.t $(M,N)$}. Note that using the formalism of~\cite{nora2025relational}, $\Barr{T}(M,N, -)$ is an example of \emph{relational connector} between the functors $T(A,C,-)$ and $T(B,D,-)$ and the notion of $\Barr{T}$-bisimilarity w.r.t $(M,N)$ coincides with the notion of bisimilarity induced by this relational connector.

Assume now that $\beta =\alpha$ (so that $M : A\profto A$ and $N : C \profto C$), in the non-parametrized case, the axioms of lax extension ensure that $\Barr{T}$-bisimulations are closed under composition and identity which implies that $\Barr{T}$-bisimilarity is a preorder. In our case, we need additional conditions on $M$ and $N$:
\begin{lemma}\label{lem:paramcond}
	If $M \leq \id_A$ and $ \id_C \leq N$, then $\id_X$ is a $\Barr{T}$-bisimulation w.r.t $(M,N)$. If $M \leq M \circ M $, $ N \circ N \leq N $ and $R$ is a $\Barr{T}$-bisimulation w.r.t $(M,N)$, then so is $R \circ R$.
\end{lemma}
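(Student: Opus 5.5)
Both claims follow from the axioms of Definition~\ref{def:ParamLaxExtensionMonadSet} together with the adjunction $f_* \dashv f^*$ in $\Rel$, i.e.\ $\id \leq f_*;f^*$ and $f^*;f_* \leq \id$. The one subtlety is variance: $\Barr{T}$ is antitone in its first argument and monotone in the last two. This is exactly why the hypotheses on $M$ point the opposite way from those on $N$.

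For the identity, the goal is $\id_X \leq \alpha_* ; \Barr{T}(M,N,\id_X) ; \alpha^*$. Axiom 3 applied to the identity functions gives $\id_{T(A,C,X)} = (T(\id,\id,\id))_* \leq \Barr{T}(\id_A,\id_C,\id_X)$. Since $M \leq \id_A$ and $\Barr{T}$ is antitone in the first argument, $\Barr{T}(\id_A,\id_C,\id_X) \leq \Barr{T}(M,\id_C,\id_X)$. Since $\id_C \leq N$ and $\Barr{T}$ is monotone in the second argument, this is in turn $\leq \Barr{T}(M,N,\id_X)$. It then remains to use $\id_X \leq \alpha_*;\alpha^* = \alpha_*;\id_{T(A,C,X)};\alpha^*$ and insert the previous chain in the middle, using monotonicity of composition. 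No normality is needed, only the lax unit inequality coming from axiom 3.

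For composition, assume $R \leq \alpha_*;\Barr{T}(M,N,R);\alpha^*$. We first compose the two copies:
\[
R;R \leq \alpha_*;\Barr{T}(M,N,R);\alpha^*;\alpha_*;\Barr{T}(M,N,R);\alpha^*.
\]
We then collapse the middle using the counit $\alpha^*;\alpha_* \leq \id_{T(A,C,X)}$. Lax composition (axiom 2) then gives $\Barr{T}(M,N,R);\Barr{T}(M,N,R) \leq \Barr{T}(M;M,N;N,R;R)$. Finally, $M \leq M;M$ together with antitonicity yields $\Barr{T}(M;M,\ldots) \leq \Barr{T}(M,\ldots)$, and $N;N \leq N$ together with monotonicity yields $\Barr{T}(\ldots,N;N,\ldots) \leq \Barr{T}(\ldots,N,\ldots)$. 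Together these give $R;R \leq \alpha_*;\Barr{T}(M,N,R;R);\alpha^*$, as required.

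Nothing here is a genuine obstacle. The only step requiring care is tracking directions correctly: the counit inequality $\alpha^*;\alpha_* \leq \id$ (rather than its reverse), and the antitone first argument, which forces the hypotheses $M \leq \id_A$ and $M \leq M\circ M$ instead of their duals.
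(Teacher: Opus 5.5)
Your proof is correct. The identity case follows from $\id \leq \Barr{T}(\id_A,\id_C,\id_X)$ (axiom 3) together with the variance in axiom 1, and the composition case from the counit $\alpha^*;\alpha_* \leq \id$, lax composition, and the two variance steps, with every direction tracked properly. The paper gives no written proof of this lemma; it only appeals to the lax-extension axioms as in the non-parametrized case, and your direct verification is exactly that intended argument.
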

In practice, the conditions on $M$ are too restrictive and we typically want $M$ to be a preorder. While $\Barr{T}$-bisimilarity will not be a preorder for arbitrary coalgebras $\alpha$, it will hold for the specific cases we are interested in.

To obtain that $\Barr{T}$-bisimilarity is an equivalence relation, we need it to be symmetric and it suffices to require that $\Barr{T}$ commutes with the operation of taking the transpose of relations:

\begin{lemma}
	Assume that $M^t = M$, $N^t = N$ and $\Barr{T}$ is symmetric (axiom $4.$ in Definition~\ref{def:ParamLaxExtensionMonadSet}). If $R$ is a $\Barr{T}$-bisimulation w.r.t $(M,N)$, then so is $R^t$.
\end{lemma}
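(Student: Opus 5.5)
The plan is to take the transpose of the bisimulation inequality directly and then use the symmetry axiom to bring $\Barr{T}$ back into its original form. Let $R : X \profto X$ be a $\Barr{T}$-bisimulation w.r.t.\ $(M,N)$ for the coalgebra $\alpha : X \to T(A,C,X)$, so that $R \leq \alpha_* ; \Barr{T}(M,N,R) ; \alpha^*$. We want to show $R^t \leq \alpha_* ; \Barr{T}(M,N,R^t) ; \alpha^*$.

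We use three elementary facts about the transpose in $\Rel$:
\begin{itemize}
\item it is monotone with respect to inclusion, since $R \leq S$ as subsets of $X \times Y$ implies $R^t \leq S^t$;
\item it reverses composition, $(R;S)^t = S^t ; R^t$, because it is the $2$-functor $(-)^t : \Rel^\op \to \Rel$ recalled earlier;
\item it exchanges graphs and cographs of functions, $(f_*)^t = f^*$ and $(f^*)^t = f_*$, which is immediate from the definitions of $f_*$ and $f^*$.
\end{itemize}

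Applying the transpose to the bisimulation inequality and using these facts gives
\[
R^t \leq \bigl(\alpha_* ; \Barr{T}(M,N,R) ; \alpha^*\bigr)^t = (\alpha^*)^t ; \bigl(\Barr{T}(M,N,R)\bigr)^t ; (\alpha_*)^t = \alpha_* ; \bigl(\Barr{T}(M,N,R)\bigr)^t ; \alpha^* .
\]
By symmetry of $\Barr{T}$ (axiom 4 of Definition~\ref{def:ParamLaxExtensionMonadSet}), $\bigl(\Barr{T}(M,N,R)\bigr)^t = \Barr{T}(M^t,N^t,R^t)$. The hypotheses $M^t = M$ and $N^t = N$ turn this into $\Barr{T}(M,N,R^t)$. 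Hence $R^t \leq \alpha_* ; \Barr{T}(M,N,R^t) ; \alpha^*$, which is exactly the statement that $R^t$ is a $\Barr{T}$-bisimulation w.r.t.\ $(M,N)$.

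There is no real obstacle here; the argument is a short calculation. The one point needing care is the bookkeeping of directions. The composite $\alpha_* ; \Barr{T}(M,N,R) ; \alpha^*$ is a relation $X \profto X$, and its transpose reverses both the order of composition and the roles of $(-)_*$ and $(-)^*$. These two reversals cancel, so the same shape $\alpha_* ; - ; \alpha^*$ reappears. The symmetry of $M$ and $N$ is used only to return to the same parameter relations; without it we would obtain a bisimulation w.r.t.\ $(M^t,N^t)$ instead.
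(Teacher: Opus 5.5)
Your proof is correct. The paper states this lemma without a written proof, and your calculation is the routine argument it leaves implicit: transpose $R \leq \alpha_* ; \Barr{T}(M,N,R) ; \alpha^*$, use $(R;S)^t = S^t;R^t$ and $(\alpha_*)^t = \alpha^*$, then apply the symmetry axiom together with $M^t = M$ and $N^t = N$.
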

\subsection{Parametrized AM-bisimulation}\label{sec:AMbisim}

A relation $R : X \profto Y$ is an \emph{AM-bisimulation w.r.t $(M,N)$} if there exists a coalgebra $\gamma : \graph{R} \to T(\graph{M}, \graph{N}, \graph{R})$ such that the two pentagons below commute:
	\begin{center}
	\begin{tikzpicture}[thick, xscale=1.2, yscale=0.8]
		\node (A) at (0,0) {\scriptsize$T(A,C,X)$};
		\node (A1) at (1.5,-1.6) {\scriptsize$T(\graph{M},C,X)$};
		\node (B) at (3,0) {\scriptsize$T(\graph{M}, \graph{N}, \graph{R})$};
		\node (C) at (0,1.4) {\scriptsize$X$};
		\node (D) at (3,1.4) {\scriptsize$\graph{R}$};
		\node (E) at (6,1.4) {\scriptsize$Y$};
		\node (F) at (6,0) {\scriptsize$T(B,D,Y)$};
		\node (F1) at (4.5,-1.6) {\scriptsize$T(\graph{M},D,Y)$};
		
		\draw [->] (C) -- node [left] {\scriptsize$\alpha$} (A);
		\draw [->] (B) -- node [fill=white] {\scriptsize$T(\id, \pi_1^N,\pi_1^R)$} (A1);
		\draw [->] (A) -- node [left] {\scriptsize$T(\pi_1^M, \id, \id )$} (A1);
		\draw [->] (F) -- node [right] {\scriptsize$T(\pi_2^M, \id, \id )$} (F1);
		\draw [->] (B) -- node [fill=white] {\scriptsize$T(\id, \pi_2^N,\pi_2^R)$} (F1);
		\draw [->] (D) -- node [above] {\scriptsize$\pi_1^R$} (C);
		\draw [->] (D) -- node [above] {\scriptsize$ \pi_2^R$} (E);
		\draw [->] (E) -- node [right] {\scriptsize$\beta$} (F);
		\draw [->] (D) -- node [right] {\scriptsize$\gamma$} (B);
	\end{tikzpicture}
\end{center}
This notion can be defined without any restrictions on the parameter relations $M$ and $N$, and from the axioms of parametrized lax extension, we obtain soundness similarly to Proposition~\ref{prop:AMsoundforLax}:

\begin{lemma}\label{lem:AMLaxExtSoundness}
	If $R$ is an AM-bisimulation w.r.t $(M,N)$, then it is a $\Barr{T}$-bisimulation w.r.t $(M,N)$.
\end{lemma}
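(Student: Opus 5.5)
We have to show $R \leq \alpha_* ; \Barr{T}(M,N,R) ; \beta^*$, working purely in the posetal $2$-category $\Rel$. The plan is to rewrite the two commuting pentagons as inclusions of relations, then use axiom 3 of Definition~\ref{def:ParamLaxExtensionMonadSet} to turn each projection into a $\Barr{T}$-term. Axiom 2 (lax composition) then merges these into a single $\Barr{T}$-term whose arguments are the strong-tabulator expressions of $M$, $N$ and $R$.

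First, strong tabulators give $R = (\pi^R_1)^* ; (\pi^R_2)_*$. Since $\gamma$ is a function, $\id \leq \gamma_*;\gamma^*$, so
\[ R \leq (\pi_1^R)^*;\gamma_*;\gamma^*;(\pi_2^R)_*. \]
Next we transport along the pentagons. From the left pentagon, $T(\id,\pi_1^N,\pi_1^R)\circ\gamma = T(\pi_1^M,\id,\id)\circ\alpha\circ\pi_1^R$. Using $(\pi_1^R)^*;(\pi_1^R)_*\leq \id$ and functoriality of $(-)_*$, this yields
\[ (\pi_1^R)^*;\gamma_*;(T(\id,\pi_1^N,\pi_1^R))_* \leq \alpha_*;(T(\pi_1^M,\id,\id))_*. \]
Symmetrically, the right pentagon gives
\[ (T(\id,\pi_2^N,\pi_2^R))^*;\gamma^*;(\pi_2^R)_* \leq (T(\pi_2^M,\id,\id))^*;\beta^*. \]
To use these, we insert the unit $\id\leq f_*;f^*$ for $f=T(\id,\pi_1^N,\pi_1^R)$ and the unit $\id\leq g_*;g^*$ for $g=T(\id,\pi_2^N,\pi_2^R)$ around the middle of the previous bound, in the pattern $\gamma_*;f_*;f^*;g_*;g^*;\gamma^*$. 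This gives
\[ R\leq \alpha_*;(T(\pi_1^M,\id,\id))_*;(T(\id,\pi_1^N,\pi_1^R))^*;(T(\id,\pi_2^N,\pi_2^R))_*;(T(\pi_2^M,\id,\id))^*;\beta^*. \]

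It remains to show that the middle four-factor composite is below $\Barr{T}(M,N,R)$; this step carries the real content, and it is exactly $\mathcal{L}_4(M,N,R)\leq\Barr{T}(M,N,R)$, i.e.\ minimality (Theorem~\ref{th:paramBarr}.7). To check it directly, apply axiom 3 to each factor: $(T(\pi_1^M,\id,\id))_* \leq \Barr{T}((\pi_1^M)^*,\id,\id)$, $(T(\id,\pi_1^N,\pi_1^R))^*\leq \Barr{T}(\id,(\pi_1^N)^*,(\pi_1^R)^*)$, and similarly for the other two. Lax composition (axiom 2) then bounds the composite by
\[ \Barr{T}\big((\pi_1^M)^*;\id;\id;(\pi_2^M)_*,\ (\pi_1^N)^*;(\pi_2^N)_*,\ (\pi_1^R)^*;(\pi_2^R)_*\big)=\Barr{T}(M,N,R), \]
where the identities follow from strong tabulators. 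Two bookkeeping points need care. First, the contravariant slot must receive $(\pi_1^M)^*$ and $(\pi_2^M)_*$ with the correct variance, matching axiom 3 (the $f^*$ in the first slot accompanies $T(\cdots)_*$). Second, the identity relations in the unused slots must be correctly typed: for instance, $\Barr{T}(\id_A,\id_C,\id_X)$ should dominate the identity, which follows from axiom 3 applied to identity functions, so no normality assumption is needed.
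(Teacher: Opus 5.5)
Your proposal is correct and follows the paper's proof. Both bound $R=(\pi_1^R)^*;(\pi_2^R)_*$ by $\alpha_*;\mathcal{L}_4(M,N,R);\beta^*$ using the two pentagons, then apply axiom 3 to each of the four factors and lax composition to reach $\Barr{T}(M,N,R)$. You additionally spell out the unit/counit manipulations behind the first inequality, which the paper states without justification, and you correctly end with $\beta^*$ where the paper's display has the typo $\alpha^*$.
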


\subsection{Parametrized behavioral equivalence}\label{subsec:paramBehequiv}
We consider the case where $\alpha = \beta$ and $M : A\profto A$ and $N : C \profto C$ are endorelations. For the notion of $(M,N)$-behavioral equivalence, we also require the relation $N$ to be difunctional: there exists $h : C \to W$ and $k : C \to W$ such that the left square below is a weak-pullback
	\begin{center}
	\begin{tikzpicture}[thick,yscale=0.6, xscale=0.5]
		\node (A) at (0,1.5) {\scriptsize$C$};
	\node (B) at (2,0) {\scriptsize$W$};
	\node (C) at (4,1.5) {\scriptsize$C$};
	\node (D) at (2,3) {\scriptsize$\graph{N}$};
	
	\draw [->] (A)  to  node [below left] {\scriptsize$h$} (B);
	\draw [->] (C)  to  node [below right] {\scriptsize$k$} (B);
	\draw [->] (D) to  node [above left] {\scriptsize$\pi_1^N$} (A);
	\draw [->] (D)  to node [above right] {\scriptsize$\pi_2^N$} (C);
	\node at (2,1.5) {\scriptsize$\mathrm{wpb}$};
	
	\begin{scope}[xshift=7.5cm]
			\node (A) at (0,1.5) {\scriptsize$X$};
		\node (B) at (2,0) {\scriptsize$Z$};
		\node (C) at (4,1.5) {\scriptsize$X$};
		\node (D) at (2,3) {\scriptsize$\graph{R}$};
		
		\draw [->] (A)  to  node [below left] {\scriptsize$f$} (B);
		\draw [->] (C)  to  node [below right] {\scriptsize$g$} (B);
		\draw [->] (D) to  node [above left] {\scriptsize$\pi_1^R$} (A);
		\draw [->] (D)  to node [above right] {\scriptsize$\pi_2^R$} (C);
		\node at (2,1.5) {\scriptsize$\mathrm{wpb}$};
	\end{scope}
	
	\begin{scope}[xshift=15cm,yshift=1cm,xscale=1.6]
			\node (A) at (0,0) {\scriptsize$T(A,C,X)$};
		\node (B) at (3,-1) {\scriptsize$T(\graph{M},W,Z)$};
		\node (C) at (0,2) {\scriptsize$X$};
		\node (D) at (3,1) {\scriptsize$Z$};
		\node (E) at (6,2) {\scriptsize$X$};
		\node (F) at (6,0) {\scriptsize$T(A,C,X)$};
		
		\draw [->] (C) -- node [left] {\scriptsize$\alpha$} (A);
		\draw [->] (A) -- node [below left] {\scriptsize$T(\pi_1^M, h,f)$} (B);
		\draw [->] (F) -- node [below right] {\scriptsize$T(\pi_2^M, k, g)$} (B);
		\draw [->] (C) -- node [above] {\scriptsize$f$} (D);
		\draw [->] (E) -- node [above] {\scriptsize$g$} (D);
		\draw [->] (E) -- node [right] {\scriptsize$\alpha$} (F);
		\draw [->] (D) -- node [right] {\scriptsize$\zeta$} (B);
	\end{scope}
	\end{tikzpicture}
\end{center}
In this setting, two states $x,y \in X$ are \emph{$(M,N)$-behaviorally equivalent} if there exist a relation $R : X\profto X$ and functions $f : X \to Z$, $g : X \to Z$ such that the middle square above is a weak-pullback, $(x,y)\in R$, and there exists a coalgebra $\zeta : Z \to  T(\graph{M},W,Z)$ such that the two squares on the right commute.
\begin{lemma}\label{lem:paramBehEquivFinal}
	For $\alpha$ and $(M,N)$ as above, if the functor $T(\graph{M}, W, -) : \Set \to \Set$ has a final coalgebra $\omega : \Omega \to T(\graph{M}, W, \Omega)$, then two states $(x,y) \in X \times X$ are behaviorally equivalent if and only if $\sem{x}_1 = \sem{y}_2$ where $\sem{-}_1$ and $\sem{-}_2$ are the two unique coalgebra morphisms as below:
	\begin{center}
	\begin{tikzpicture}[thick, xscale=0.9, yscale=0.5]
		\node (A) at (0,0) {\scriptsize$T(A,C,X)$};
		\node (A1) at (0,-2) {\scriptsize$T(\graph{M},W,X)$};
		\node (B) at (3,-3) {\scriptsize$T(\graph{M},W,\Omega)$};
		\node (C) at (0,2) {\scriptsize$X$};
		\node (D) at (3,1) {\scriptsize$\Omega$};
		\node (E) at (6,2) {\scriptsize$X$};
		\node (F) at (6,0) {\scriptsize$T(A,C,X)$};
		\node (F1) at (6,-2) {\scriptsize$T(\graph{M},W,X)$};
		
		\draw [->] (C) -- node [left] {\scriptsize$\alpha$} (A);
		\draw [->] (A1) -- node [below left] {\scriptsize$T(\id, \id ,\sem{-}_1)$} (B);
		\draw [->] (A) -- node [left] {\scriptsize$T(\pi_1^M, h,\id)$} (A1);
		\draw [->] (F) -- node [ right] {\scriptsize$T(\pi_2^M, k, \id)$} (F1);
		\draw [->] (F1) -- node [below right] {\scriptsize$T(\id, \id, \sem{-}_2)$} (B);
		\draw [->] (C) -- node [above] {\scriptsize$ \sem{-}_1$} (D);
		\draw [->] (E) -- node [above] {\scriptsize$ \sem{-}_2$} (D);
		\draw [->] (E) -- node [right] {\scriptsize$\alpha$} (F);
		\draw [->] (D) -- node [right] {\scriptsize$\omega$} (B);
	\end{tikzpicture}
\end{center}
\end{lemma}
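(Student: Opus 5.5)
We reduce the lemma to the standard final-coalgebra characterization of ordinary behavioral equivalence, working throughout with the endofunctor $G := T(\graph{M},W,-) : \Set \to \Set$. The maps $\sem{-}_1$ and $\sem{-}_2$ are the unique coalgebra morphisms into $\omega$ from the two $G$-coalgebras on $X$:
\[
\alpha_1 := T(\pi_1^M,h,\id)\circ\alpha \quad\text{and}\quad \alpha_2 := T(\pi_2^M,k,\id)\circ\alpha .
\]
The key observation is that a $Z$ and $\zeta$ as in the definition of $(M,N)$-behavioral equivalence are exactly a $G$-coalgebra $(Z,\zeta)$ together with $G$-coalgebra morphisms $f:(X,\alpha_1)\to(Z,\zeta)$ and $g:(X,\alpha_2)\to(Z,\zeta)$. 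This follows from functoriality of $T$: for instance,
\[
T(\pi_1^M,h,f)\circ\alpha = T(\id,\id,f)\circ T(\pi_1^M,h,\id)\circ\alpha = G(f)\circ\alpha_1 .
\]
So the right-hand squares in the definition say precisely that $\zeta\circ f = G(f)\circ\alpha_1$, and similarly for $g$ with $\alpha_2$.

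\emph{($\Leftarrow$)} Suppose $\sem{x}_1=\sem{y}_2$. We take $Z:=\Omega$, $\zeta:=\omega$, $f:=\sem{-}_1$ and $g:=\sem{-}_2$. For $R$ we take the pullback relation
\[
R:=\{(x',y')\mid \sem{x'}_1=\sem{y'}_2\},
\]
so that $\graph{R}$ is literally the pullback of $f$ and $g$, hence a weak pullback. Then $(x,y)\in R$, and the two squares commute because $\sem{-}_1$ and $\sem{-}_2$ are coalgebra morphisms.

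\emph{($\Rightarrow$)} Suppose given $R$, $f$, $g$, $Z$ and $\zeta$ witnessing $(M,N)$-behavioral equivalence with $(x,y)\in R$. By finality there is a unique $G$-coalgebra morphism $u:(Z,\zeta)\to(\Omega,\omega)$. Then $u\circ f$ and $u\circ g$ are $G$-coalgebra morphisms from $(X,\alpha_1)$ and $(X,\alpha_2)$ respectively, so uniqueness gives $u\circ f=\sem{-}_1$ and $u\circ g=\sem{-}_2$. Since the middle square commutes and $(x,y)\in R$, we have $f(x)=g(y)$, hence
\[
\sem{x}_1=u(f(x))=u(g(y))=\sem{y}_2 .
\]

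The main point needing care is the identification in the first paragraph: checking that the mixed-variance factorization $T(\pi_1^M,h,f)=T(\id,\id,f)\circ T(\pi_1^M,h,\id)$ is legitimate. It is, since $T$ is a functor on $\Set^\op\times\Set\times\Set$ and $(\pi_1^M,h,f)=(\id,\id,f)\circ(\pi_1^M,h,\id)$ in that product category. The hypotheses that $N$ is difunctional and that the middle square is a weak pullback are not needed for this equivalence itself; they only fix $h$, $k$ and $W$. Once the identification is made, both directions are a routine finality argument.
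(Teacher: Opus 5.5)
Your proof is correct and follows essentially the same route as the paper's. For $(\Rightarrow)$ you compose with the unique morphism $u : Z \to \Omega$ to get $\sem{-}_1 = u\circ f$ and $\sem{-}_2 = u\circ g$; for $(\Leftarrow)$ you take $R := \{(x',y') \mid \sem{x'}_1 = \sem{y'}_2\}$, which is the paper's $f_*;g^*$ with $f,g$ the semantic maps, whose graph is the pullback. You also spell out the factorization $T(\pi_1^M,h,f)=T(\id,\id,f)\circ T(\pi_1^M,h,\id)$, which the paper leaves implicit.
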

\begin{proposition}\label{prop:AMBehEquivsoundness}
	If $T$ preserves weak-pullbacks in the last two variables, then $(M,N)$-behavioral equivalence is sound for AM-bisimilarity w.r.t $(M,N)$.
\end{proposition}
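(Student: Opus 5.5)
Given an $(M,N)$-behavioral equivalence witnessed by $R : X \profto X$, $f,g : X \to Z$ and $\zeta : Z \to T(\graph{M},W,Z)$, the plan is to build a coalgebra $\gamma : \graph{R} \to T(\graph{M},\graph{N},\graph{R})$ making both pentagons commute. This is the classical argument that weak-pullback preservation turns a cospan of coalgebra morphisms into a span. The key tool is the hypothesis that $T$ preserves weak pullbacks in its last two arguments, as in Lemma~\ref{lem:WeakPbkLastTwo}. We apply it with the first argument fixed to $U = \graph{M}$, to the pair of weak pullback squares formed by the difunctional witness $(\pi_1^N,\pi_2^N,h,k)$ for $N$ and by $(\pi_1^R,\pi_2^R,f,g)$ for $R$. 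The result is that the square
\[
T(\graph{M},\graph{N},\graph{R}) \rightrightarrows T(\graph{M},C,X),\ T(\graph{M},C,X) \rightrightarrows T(\graph{M},W,Z)
\]
with legs $T(\id,\pi_1^N,\pi_1^R)$, $T(\id,\pi_2^N,\pi_2^R)$ and $T(\id,h,f)$, $T(\id,k,g)$ is a weak pullback.

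Next we produce a cone over the cospan $(T(\id,h,f), T(\id,k,g))$ with vertex $\graph{R}$. Its two legs are
\[
p_1 := T(\pi_1^M,\id,\id)\circ\alpha\circ\pi_1^R \quad\text{and}\quad p_2 := T(\pi_2^M,\id,\id)\circ\alpha\circ\pi_2^R,
\]
both landing in $T(\graph{M},C,X)$. To check that this is a cone, use functoriality of $T$ to rewrite $T(\id,h,f)\circ T(\pi_1^M,\id,\id) = T(\pi_1^M,h,f)$. Then $T(\id,h,f)\circ p_1 = T(\pi_1^M,h,f)\circ\alpha\circ\pi_1^R = \zeta\circ f\circ\pi_1^R$ by the left square of the behavioral-equivalence diagram. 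Symmetrically, $T(\id,k,g)\circ p_2 = \zeta\circ g\circ\pi_2^R$. These two agree because $f\circ\pi_1^R = g\circ\pi_2^R$, which holds since the middle square commutes.

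Since the square is a weak pullback in $\Set$, there exists a (not necessarily unique) mediating map $\gamma : \graph{R}\to T(\graph{M},\graph{N},\graph{R})$ with $T(\id,\pi_i^N,\pi_i^R)\circ\gamma = p_i$ for $i=1,2$. These two equations are exactly the commutativity of the two pentagons in the definition of AM-bisimulation w.r.t.\ $(M,N)$. Hence $R$ is an AM-bisimulation w.r.t.\ $(M,N)$ containing $(x,y)$, which gives soundness.

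The main obstacle is the variance bookkeeping: making sure that the pentagons' legs $T(\pi_i^M,\id,\id)$ in the contravariant slot compose correctly with $T(\id,h,f)$ into $T(\pi_1^M,h,f)$, as used in the definition of $\zeta$. This needs only bifunctoriality of $T$, so no interchange property is required here. We also have to check that Lemma~\ref{lem:WeakPbkLastTwo}'s hypothesis is used at the fixed contravariant object $U=\graph{M}$, which is where the difunctionality assumption on $N$ enters.
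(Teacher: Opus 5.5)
Your proposal is correct and follows the paper's proof: you apply weak-pullback preservation in the last two variables at the fixed contravariant object $\graph{M}$ to the squares witnessing difunctionality of $N$ and $R$. You then factor the cone $\bigl(T(\pi_1^M,\id,\id)\circ\alpha\circ\pi_1^R,\ T(\pi_2^M,\id,\id)\circ\alpha\circ\pi_2^R\bigr)$ through the resulting weak pullback to get $\gamma$, and the two factorization equations are exactly the two pentagons.
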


Before stating the full-abstraction result analogous to Theorem~\ref{thm:LaxExtFullyAbstract}, we give a useful lemma which is similar to Proposition III.1.4.3~\cite{Hofmann_Seal_Tholen_2014} for the non-parametrized setting:
\begin{lemma}\label{lem:laxdblefunctor}
	For relations $M,N,R$ and functions $f,g,h,k,i,j$ as below
	\begin{center}
		\begin{tikzpicture}[thick,xscale=0.4, yscale =0.4, decoration={ markings, mark=at position 0.5 with {\arrow{|}}}]
			\node (A) at (0,3) {$A_1$};
			\node (B) at (4,3) {$A_2$};
			\node (C) at (0,0) {$A_3$};
			\node (D) at (4,0) {$A_4$};
			
			\draw [->] (A) -- node [left] {$f$} (C);
			\draw [->, postaction=decorate] (A) -- node [above] {$M$} (B);
			\draw [->] (B) -- node [right] {$g$} (D);
			\begin{scope}[xshift=8cm]
				\node (A) at (0,3) {$B_1$};
				\node (B) at (4,3) {$B_2$};
				\node (C) at (0,0) {$B_3$};
				\node (D) at (4,0) {$B_4$};
				
				\draw [->] (A) -- node [left] {$h$} (C);
				\draw [->, postaction=decorate] (C) -- node [below] {$N$} (D);
				\draw [->] (B) -- node [right] {$k$} (D);
				
			\end{scope}
			\begin{scope}[xshift=16cm]
				\node (A) at (0,3) {$X_1$};
				\node (B) at (4,3) {$X_2$};
				\node (C) at (0,0) {$X_3$};
				\node (D) at (4,0) {$X_4$};
				
				\draw [->] (A) -- node [left] {$i$} (C);
				\draw [->, postaction=decorate] (C) -- node [below] {$R$} (D);
				\draw [->] (B) -- node [right] {$j$} (D);
			\end{scope}
		\end{tikzpicture}
		
	\end{center}
	if $\Barr{T}$ is a parametrized lax extension, then 
	\[
	\Barr{T} (f^*;M;g_*, h_*;N;k^*, i_*;R;j^*) = (T(f,h,i))_*; \Barr{T} (M, N, R) ;  (T(g,k,j))^*
	\]	
\end{lemma}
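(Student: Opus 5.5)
The plan is to prove the two inclusions separately. In each direction I use only the axioms of Definition~\ref{def:ParamLaxExtensionMonadSet}, namely monotonicity/antitonicity, lax composition and the comparison with functions, together with the adjunction $\phi_*\dashv\phi^*$ in $\Rel$. Concretely, for any function $\phi$ we have $\id\leq \phi_*;\phi^*$ and $\phi^*;\phi_*\leq\id$. This mirrors the non-parametrized argument of Proposition III.1.4.3 in~\cite{Hofmann_Seal_Tholen_2014}. The one new ingredient is that in the first variable the inequalities between $f_*;f^*$, $f^*;f_*$ and the identity point the opposite way, and this is compensated by the antitonicity of $\Barr{T}$ in that variable.

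First I check the types. Both sides are relations $T(A_3,B_1,X_1)\profto T(A_4,B_2,X_2)$. Here $f^*;M;g_*:A_3\profto A_4$, $h_*;N;k^*:B_1\profto B_2$ and $i_*;R;j^*:X_1\profto X_2$. Also $T(f,h,i):T(A_3,B_1,X_1)\to T(A_1,B_3,X_3)$ and $T(g,k,j):T(A_4,B_2,X_2)\to T(A_2,B_4,X_4)$, using contravariance of $T$ in the first variable.

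For the inclusion ``$\geq$'', axiom~3 gives $(T(f,h,i))_*\leq \Barr{T}(f^*,h_*,i_*)$ and $(T(g,k,j))^*\leq\Barr{T}(g_*,k^*,j^*)$. Precomposing and postcomposing $\Barr{T}(M,N,R)$ with these bounds and applying lax composition (axiom~2) twice yields
\[(T(f,h,i))_*;\Barr{T}(M,N,R);(T(g,k,j))^*\leq \Barr{T}(f^*;M;g_*,\,h_*;N;k^*,\,i_*;R;j^*).\]

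For the inclusion ``$\leq$'', write $L$ for the left-hand side. Since $T(f,h,i)$ and $T(g,k,j)$ are functions, we get
\[L\leq (T(f,h,i))_*;(T(f,h,i))^*;L;(T(g,k,j))_*;(T(g,k,j))^*.\]
Next I bound the inner factors using axiom~3 in the form $(T(f,h,i))^*\leq\Barr{T}(f_*,h^*,i^*)$ and $(T(g,k,j))_*\leq\Barr{T}(g^*,k_*,j_*)$. Lax composition then bounds the middle three factors by
\[\Barr{T}(f_*;f^*;M;g_*;g^*,\ h^*;h_*;N;k^*;k_*,\ i^*;i_*;R;j^*;j_*).\]
In the first argument we have $f_*;f^*\geq\id_{A_1}$ and $g_*;g^*\geq\id_{A_2}$, so this argument is $\geq M$, and antitonicity lets us replace it by $M$. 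In the other two arguments, $h^*;h_*\leq\id$, $k^*;k_*\leq\id$ and their analogues for $i,j$ show the arguments are $\leq N$ and $\leq R$, so monotonicity lets us replace them by $N$ and $R$. Hence $L\leq (T(f,h,i))_*;\Barr{T}(M,N,R);(T(g,k,j))^*$.

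The main point needing care is this last step, the bookkeeping of variances. One has to check that in each argument the adjunction inequality goes in exactly the direction that the (anti)monotonicity of $\Barr{T}$ in that argument can absorb. The orientation conventions in the statement (with $f^*,g_*$ in the contravariant slot but $h_*,k^*$ and $i_*,j^*$ in the covariant ones) are precisely what makes this work, so I would verify that case carefully. The rest is routine.
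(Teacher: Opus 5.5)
Your proposal is correct and is essentially the paper's proof. The inclusion $\geq$ uses axiom~3 and two applications of lax composition. The inclusion $\leq$ inserts $\id\leq\phi_*;\phi^*$ for $\phi=T(f,h,i)$ and $\phi=T(g,k,j)$, bounds the inner factors by axiom~3, composes laxly, and then absorbs $f_*;f^*\geq\id$ and $g_*;g^*\geq\id$ by antitonicity, and $h^*;h_*\leq\id$, $k^*;k_*\leq\id$ (and their analogues for $i,j$) by monotonicity. The paper leaves this variance check implicit; you spell it out.
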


\begin{theorem}\label{th:BehEquivExtFullabstract}
	$\Barr{T}$-bisimilarity w.r.t $(M,N)$ is complete for $(M,N)$-behavioral equivalence and if $\Barr{T}$ is normal, then it is fully abstract.
\end{theorem}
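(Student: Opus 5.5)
Completeness means: if $x,y$ are $(M,N)$-behaviorally equivalent, witnessed by the data of Section~\ref{subsec:paramBehequiv} (difunctional $N$ with $h,k:C\to W$, the weak-pullback square $\graph{R}$ over $f,g:X\to Z$, and $\zeta : Z\to T(\graph{M},W,Z)$ making both squares commute), then the witnessing relation $R$ is a $\Barr{T}$-bisimulation w.r.t.\ $(M,N)$. Since $R$ comes from a weak pullback, the Beck--Chevalley characterisation after Definition~\ref{def:weakpb} gives $R = f_*;g^*$. Likewise difunctionality of $N$ gives $N = h_*;k^*$. For $M$ we use the trivial factorisation $M = (\pi_1^M)^*;(\pi_2^M)_*$ coming from its strong tabulator.

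I would then compute the right-hand side $\alpha_*;\Barr{T}(M,N,R);\alpha^*$ using Lemma~\ref{lem:laxdblefunctor}. Write $M = (\pi^M_1)^*;\id_{\graph{M}};(\pi^M_2)_*$, $N = h_*;\id_W;k^*$ and $R = f_*;\id_Z;g^*$. The lemma yields
\[\Barr{T}(M,N,R) = (T(\pi^M_1,h,f))_*;\Barr{T}(\id,\id,\id);(T(\pi^M_2,k,g))^*.\]
The identity relation lies below $\Barr{T}(\id,\id,\id)$: apply axiom 3 of Definition~\ref{def:ParamLaxExtensionMonadSet} with identity functions, since $\id^*=\id_*=\id$. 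Hence
\[(T(\pi^M_1,h,f))_*;(T(\pi^M_2,k,g))^* \leq \Barr{T}(M,N,R).\]
Now take $(x,y)\in R$, so $f(x)=g(y)$. Commutativity of the two squares gives $T(\pi^M_1,h,f)(\alpha x) = \zeta(f x) = \zeta(g y) = T(\pi^M_2,k,g)(\alpha y)$. Therefore $(\alpha x,\alpha y)$ lies in the relation above, hence in $\Barr{T}(M,N,R)$. This gives $R\leq \alpha_*;\Barr{T}(M,N,R);\alpha^*$, and $(x,y)$ is $\Barr{T}$-bisimilar w.r.t.\ $(M,N)$.

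For soundness under normality, let $R$ be a $\Barr{T}$-bisimulation w.r.t.\ $(M,N)$; I must produce behavioral-equivalence data. Following the non-parametrized proof (Marti--Venema), I would take $Z$ to be a quotient of $X$ by an equivalence generated by $R$. A cleaner route is to use the cotabulator $\coll{R}$ recalled in Section~\ref{subsec:ordinaryBarr}, together with $R\leq(\inj_1)_*;(\inj_2)^*$, to obtain $f,g$. I would first enlarge $R$ to the largest bisimulation, then close it under the difunctional closure $R;R^t;R$ and its iterates. Showing that this closure is still a bisimulation is the step that uses the lax-composition axiom together with the symmetry/transposition behaviour, and normality via $\Barr{T}(\id,\id,\id)=\id$.

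Then define $\zeta$ on the quotient $Z$ by $\zeta([x]) := T(\pi^M_1,h,q)(\alpha x)$. The main obstacle is well-definedness: for $(x,y)$ in the closure I need $T(\pi^M_1,h,q)(\alpha x)=T(\pi^M_2,k,q)(\alpha y)$. I plan to get this from
\[(\alpha x,\alpha y)\in\Barr{T}(M,N,R)\leq\Barr{T}(M,N,q_*;q^*) = (T(\pi^M_1,h,q))_*;\Barr{T}(\id,\id,\id);(T(\pi^M_2,k,q))^*,\]
where the equality is again Lemma~\ref{lem:laxdblefunctor}. Normality collapses the middle factor to the identity, which gives exactly the required equation. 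The remaining care concerns the mixed variance in the first argument and the fact that the coalgebra uses the same $X$ on both sides. Once these are checked, the weak-pullback condition on the middle square follows because the kernel of $q$ equals the (closed) bisimilarity.
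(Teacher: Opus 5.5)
Your completeness half is correct and is the paper's argument. You pass through Lemma~\ref{lem:laxdblefunctor} and $\id\le\Barr{T}(\id,\id,\id)$, while the paper applies axioms 3 and 2 directly; the content is the same.

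The soundness half has a genuine gap: you build the witness from a single quotient map, $f=g=q$. Both squares must commute on all of $X$, so this forces $T(\pi_1^M,h,q)(\alpha x)=\zeta(qx)=T(\pi_2^M,k,q)(\alpha x)$ for every $x\in X$. By your own completeness computation, $\ker q\supseteq\id_X$ is then a bisimulation, so every state must be $\Barr{T}$-bisimilar to itself w.r.t.\ $(M,N)$. Nothing in the hypotheses gives this; it needs something like $M\le\id_A$ and $\id_C\le N$, cf.\ Lemma~\ref{lem:paramcond}.

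Here is a counterexample. Take Mealy machines $T(I,O,X)=(O\times X)^I$ with $I=\{a,b\}$, $O=\{0,1\}$, $M=\{(a,b)\}$ and $N=\id_O$ (difunctional, $h=k=\id_O$). Let $X=\{x,y\}$, with all transitions loops, $\out(x,b)=1$ and all other outputs $0$.
\begin{itemize}
\item $\{(x,y)\}$ is a bisimulation, and bisimilarity is $\{(x,y),(y,y)\}$: already difunctional, neither reflexive nor symmetric.
\item Since $\out(x,a)\neq\out(x,b)$, no witness with $f=g$ exists at all, yet $x$ and $y$ are $(M,N)$-behaviorally equivalent.
\end{itemize}
So ``$\ker q$ equals the closed bisimilarity'' cannot hold. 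Closing under $R;R^t;R$ does not help, and it also uses symmetry of $\Barr{T}$, which the statement does not assume. This is also why the paper's own proof, which treats the bisimilarity $E$ as an equivalence relation, has to impose inside the proof $M\le\id_A$, $M\le M\circ M$, $M\le M^t$, $\id_C\le N$, $N\circ N\le N$, $N^t\le N$ and symmetry. Your quotient route can only be completed under such extra conditions.

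The fix is the cotabulator you mention in passing, with its two \emph{distinct} injections and no closure at all. For any bisimulation $R$, put $Z=\coll{R}$, $f=\inj_1^R$, $g=\inj_2^R$. Since $R\le(\inj_1^R)_*;(\inj_2^R)^*$, monotonicity, Lemma~\ref{lem:laxdblefunctor} and normality give
\[R\le\alpha_*;\Barr{T}(M,N,R);\alpha^*\le\alpha_*;(T(\pi_1^M,h,\inj_1^R))_*;(T(\pi_2^M,k,\inj_2^R))^*;\alpha^*.\]
This is exactly the hypothesis of the cotabulator universal property for the cospan $\bigl(T(\pi_1^M,h,\inj_1^R)\circ\alpha,\;T(\pi_2^M,k,\inj_2^R)\circ\alpha\bigr)$. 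It yields $\zeta:\coll{R}\to T(\graph{M},W,\coll{R})$ making both squares commute. Then $(\inj_1^R)_*;(\inj_2^R)^*\supseteq R$ is a witnessing relation whose graph is a pullback.

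Your displayed ``well-definedness'' condition, with $\pi_1^M$ at $x$ and $\pi_2^M$ at $y$, is precisely this cotabulator condition. It is not what a map $[x]\mapsto T(\pi_1^M,h,q)(\alpha x)$ on a quotient needs, so the missing step is to keep $f\neq g$.
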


\section{Case studies}\label{sec:calculus}

\subsection{Mealy Machines}
We first consider a simple example of deterministic finite state machines: for fixed sets $I$ and $O$ of inputs and outputs respectively, a \emph{Mealy machine} is a triple $M=(X\in \Set, \nextt : X \times I \to X,\out : X\times I \to O)$, or equivalently a coalgebra for the functor $F: X \mapsto (X \times O)^I$~(\eg~\cite{pattinson2003introduction}) which has a strict extension $\Barr{F}$ to $\Rel$. For two machines $M_1$ and $M_2$, a relation $R: X_1 \profto X_2$ between their state spaces is an $\Barr{F}$-bisimulation if for every $(x_1, x_2) \in R$, and $i \in I$, $(\nextt_1(x_1,i),\nextt_2(x_2,i)) \in R$ and $\out_1(x_1, i) = \out_2(x_2,i)$.

In the parametrized case, we allow the sets of inputs and outputs to vary and consider the parametrized functor $T: (I, O, X) \mapsto (O \times X)^I$ which verifies the conditions of Proposition~\ref{prop:laxExtParamWeak} and therefore has a parametrized lax extension $\Barr{T}$. We can now consider two machines $M_1, M_2$ on different sets of inputs $I_1, I_2$ and outputs $O_1,O_2$. For relations $M : I_1 \profto I_2$ and $N: O_1 \profto O_2$ specifying how inputs and outputs can be related, the induced notion of $\Barr{T}$-bisimulation w.r.t $(M,N)$ relaxes the notion of bisimulation above by requiring only that on two related inputs, the two corresponding outputs should be related: for all $(x_1, x_2) \in R$, $(i_1, i_2)\in M$,  $(\nextt_1(x_1,i_1),\nextt_2(x_2,i_2)) \in R$ and $(\out_1(x_1, i_1) ,\out_2(x_2,i_2)) \in N$.

\subsection{Parametrized state contextual equivalence}\label{sect:parstate}

We finally instantiate our framework to a higher-order calculus with parametrized monadic effects and sketch how it supports notions of parametrized contextual equivalence that cannot be directly addressed using the ordinary notion of lax extension (Definition~\ref{def:laxextension}). 

We fix a (possibly partial) monoid $(\mathcal{L}, \varepsilon, \cdot, \leq)$. Our motivating example is the monoid of memory regions for a fixed set of locations $\Loc$, a memory region is simply a subset $L \subseteq \Loc$. The monoid multiplication $L_1 \cdot L_2$ is defined when the two subsets are disjoint and it is given by their union. We consider a monadic type system with state and computation types, programs are terms in a standard $\lambda$-calculus
given in fine-grain call-by-value style, which we give below.
{\small
	\begin{align*}
\textbf{State types }L_1, L_2&::=  \varepsilon \mid L_1 \cdot L_2\\
\textbf{Computation types }A_1, A_2&::= \bool \mid \unit  \mid (A_1, L_1) \to (A_2, L_2)
\end{align*}}
	\begin{center} 
		\vspace*{-8mm}  
		 \small
		\begin{mathpar}\mprset{sep=0.4em,  vskip =0.1ex}
			\inferrule* []{\\}{ \Gamma , x: A \vdashv[] x: A}
			\and 
			\inferrule* []{\\}{ \Gamma \vdashv[] \true : \bool}	
			\and
			\inferrule* []{\\}{\Gamma \vdashv[] \false :\bool}	
			\and
			\inferrule* []{\\}{\Gamma \vdashc[] \star :\unit}	
			\and
			\inferrule* []{\Gamma, x: A_1 ; L_1 \vdashc[] M : A_2 ; L_2}{\Gamma \vdashv[] 
			\lambda(x:A_1;L_1). M: (A_1, L_1) \to (A_2, L_2)}	
			\and
			\inferrule* []{\Gamma  \vdashv[] V: A}{\Gamma;L  \vdashc[] \return_L V: A;L}	
			\and
			\inferrule* []{\Gamma  \vdashv[] V: (A_1, L_1) \to (A_2, L_2)\\ \Gamma  \vdashv[] W:A_1}{\Gamma; L_1  \vdashc[] VW : A_2 ; L_2 }
			\and
			\inferrule* []{ \Gamma ; L_1 \vdashc[] M : A ; L_2 \\ \Gamma, x: A ; L_2  \vdashc[] N : B ; L_3}
			{\Gamma ; L_1\vdashc[] \letin{ x = M}{N}: B ; L_3}
			\and
			\inferrule* []{\Gamma  \vdashv[] V_i: A_1 \\ \op : (A_1,L_1)^n \to (A_2, L_2)}{\Gamma ; L_1 \vdashc[] \op(V_1, \dots, V_n): A_2 :L_2}	
			\and 
			\inferrule* []{ \Gamma ; L_1 \vdashc[] M : A ; L_2}
			{\Gamma ; L_1 \cdot L_3 \vdashc[]M : A ; L_2\cdot L_3}
		\end{mathpar}
			\vspace{-0.5cm}
	\end{center}
  The metavariable $\op$ stands for an operation accessing or modifying the underlying store. Examples of operations are $\mathtt{set}^L_\ell : (\mathbf{bool}, L) \rightarrow (\mathbf{1}, L)$ and $\mathtt{get}^L_\ell : (\mathbf{1}, L) \rightarrow (\mathbf{bool}, L)$ where $\ell$ is a location in the region $L$ and $\mathtt{new}^L_\ell :\; (\mathbf{bool}, L) \rightarrow (\mathbf{1}, L \cdot \{l\})$ for a new location $\ell \in \Loc \setminus L$. 
For simplicity, we restrict locations to carry only boolean values. 

We denote by $\Lambda(L_1, L_2, A)$ the
set of terms $M$ such that $\emptyset; L_1 \vdashc[] M : A ; L_2$ and by $ \Val(A)$ is the set of values $V$ such that
$\emptyset\vdashv[] V:A$.
The small step operational semantics is an indexed family of partial functions $\{\Lambda( L_1, L_2,A) \rightharpoonup T(L_1, L_2,\Lambda(L_1, L_2,A) \}_{L_1,L_2,A}$ where $T$ is the parameterized state monad defined as
$T(L_1,L_2,X)=(\sem{L_2}\times X)^{\sem{L_1}}$ and 
$\sem{L}$ is the set of stores (functions mapping locations in $L$
to boolean values). Examples of small step reductions are:
\[
(\return_L (\lambda (x:A;L_1). M)V, s) \to (M[V/x], s) \text{ and } (\mathtt{new}^L_\ell \true, s) \to (\return \star, [s, \ell \mapsto \true])
\]
where $[s, \ell \mapsto \true] : L \cup \{\ell\} \to \{\true, \false\}$ maps $\ell' \in L$ to $s(\ell')$ and $\ell$ to $\true$.
The induced big step operational semantics is an indexed family of functions $\{\Downarrow: \Lambda(L_1, L_2, A) \to T( L_1, L_2, \Val(A)) \}_{A,L_1,L_2}$.

We are interested in comparing terms, 
a term relation consists of a pair of families of relations
$(R^{\Val} : \Val \profto \Val, R^\Lambda: \Lambda \profto \Lambda)$ 
indexed following the aforementioned pattern. 
Standard contextual equivalence is the largest equivalence that is adequate 
with respect to the big step operational semantics and compatible with respect to 
all term constructors. The theory we have developed allows us to refine this notion by modulating the power of contexts depending on which memory region they have access to. This is precisely what happens in the 
so-called adversarial rules often used in security proofs (e.g. ~\cite{ABGGKS21}), where the role of the context 
is played by an adversary that has access only to some (and not all) of the memory locations available 
to the honest parties.

Consider a memory region $L = L_1 \uplus L_2$ and a term $\emptyset; L_1 \vdashc[] M : A ; L_1$ which has access only to the region $L_1$. Assume that we make changes to the memory outside of $L_1$, for example,  define $N := \letin{ x = \mathtt{set}^L_{\ell_2}\true}{M}$ where $\ell_2 \in L_2$,  then a context which has access to the whole region $L$ will be able to distinguish $M$ from $N$ (even if the variable $x$ does not occur in $M$). Indeed, for $M$ and $N$ to be contextually equivalent, the following must hold
\[
\forall s, r \in \sem{L}, (C[M] (s)\Downarrow (\true, r) \quad \Leftrightarrow \quad C[N] (s)\Downarrow (\true, r)) 
\]
 for every context $C[\cdot]$ such that $\emptyset; L \vdashc[] C[M],C[N]: \bool ; L$. We can easily construct a context in which the equivalence above fails, for instance by using $\mathtt{get}^L_{l_2}$.

Instead, $M$ and $N$ are equivalent for weaker contexts that can access only the region $L_1$.
We define an equivalence relation $E$ on stores $\sem{L} = \Set(L, \{\true, \false\})$ as 
\[E:= \{ (s,s') \mid \forall \ell \in L_1, s(\ell) = s'(\ell)\}\]
where two stores are equivalent when they are equal in the restricted region $L_1$.
We say that two terms $M, N$ are $L_1$-equivalent if for all $(s_1, s_1') \in E$:
\[\begin{aligned}
	(\exists s_2 \in \sem{L}, C[M] (s_1)\Downarrow (\true, s_2))& \Rightarrow (\exists s_2' \in \sem{L}, C[N] (s_1')\Downarrow (\true, s_2') \wedge (s_2, s_2') \in E)\\
	(\exists s_2' \in \sem{L}, C[N] (s_1')\Downarrow (\true, s_2')) &\Rightarrow  (\exists s_2 \in \sem{L}, C[M] (s_1)\Downarrow (\true, s_2)  \wedge (s_2, s_2') \in E)
\end{aligned}
\]
for every context $C[\cdot]$ containing only constructors with locations in $L_1$ and such that $\emptyset; L \vdashc[] C[M],C[N]: \bool ; L$. 
Coming back to our example, $M$ and $N$ will be equivalent for this notion of refined equivalence. This equivalence can further be used to capture a finer form of commutativity for effects acting on disjoint sets of memory locations which we cannot show using ordinary contextual equivalence since the state monad is a typical example of a non-commutative monad where the effect of writing on the memory is not commutative. We can also recover standard contextual equivalence by replacing $E$ with the equality relation and allowing contexts $C[\cdot]$ to be built with constructor accessing and modifying all locations.

This refined notion of contextual equivalence can be coinductively characterized as the largest term equivalence relation $R= (R^{\Val}, R^\Lambda)$ satisfying the following conditions:
\begin{enumerate}
	\item $R$ is $(E,E)$-adequate with respect to the big step operational semantics, which corresponds to a $\Barr{T}$-bisimulation w.r.t $(E,E)$: $R^\Lambda(L,L, A) \leq (\Downarrow)_* ; \Barr{T}(E,E, R^{\Val}(A));  (\Downarrow)^*$ where $\Barr{T}$ is the lax-extension of Example~\ref{ex:statenostrictextension}.
	\item $R$ is $L_1$-compatible: when defining the compatible closure of $R$, we restrict the standard definition to term constructors with locations in $L_1$. 
\end{enumerate}
So far, we used the same relation $E$ for both parameters to restrict the scope of testing contexts but our framework also allows the user to specify different predicates on the memory region before and after the execution of a computation.

\paragraph*{Conclusion and future work}

In this paper, we have extended the theory of lax extensions to parametrized functors and monads on sets, and showed that it allows to refine the usual notions of behavioral equivalence by studying conditions on the parameters to derive sound and complete proof principles establishing behavioral equivalence. The parametrized setting is particularly well-suited for monadic effects where the parameters correspond to a notion of pre and post condition updated by the execution of a program such as global state and I/O (input/output) effects.

In the ordinary non-parametrized setting, there is a close connection between predicate liftings and lax extensions~\cite{baldan2014behavioral,marti2015lax,wild2022characteristic,kupferman_kantorovich_2023,goncharovstacs25} which we aim to develop in the parametrized case. In particular, the adequate notion of coalgebraic logics derived from liftings and lax extensions in the parametrized case remains to be investigated.

\newpage
\bibliography{biblio}
\appendix

\section{Proofs for Section~\ref{sec:Barr}}\label{sec:appendixlaxextensions}
\subsection{Extending parametrized functors from $\Set$ to $\Rel$}

\begin{lemma}
If a function $f: A \to B$ is split epi, then $f^*; f_* =\id_B: B \profto B$ and if a function $f: A \to B$ is split mono, then $f_*: f^*=\id_A : A \profto A$.
\end{lemma}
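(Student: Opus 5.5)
The plan is a direct element-wise computation using the paper's conventions: $f_* = \{(a, f(a))\}$, $f^* = \{(f(a), a)\}$, and $R;S$ means ``first $R$, then $S$''.

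\textbf{Split epi case.} Suppose $f : A \to B$ has a section $s : B \to A$ with $f \circ s = \id_B$.

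\begin{itemize}
\item[(i)] Compute $f^*; f_*$. A pair $(b, b')$ lies in $f^*; f_*$ iff there is $a \in A$ with $(b, a) \in f^*$ and $(a, b') \in f_*$. This says $b = f(a)$ and $b' = f(a)$, so $f^*; f_* = \{(f(a), f(a)) \mid a \in A\}$.
\item[(ii)] This set is the diagonal restricted to the image of $f$, so $f^*; f_* \leq \id_B$ holds for every function $f$. This is the counit of the adjunction $f_* \dashv f^*$ recalled earlier.
\item[(iii)] For the reverse inclusion, take $b \in B$ and use the witness $a = s(b)$. Then $f(a) = b$, so $(b, b) \in f^*; f_*$. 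Hence $\id_B \leq f^*; f_*$, and the two relations are equal.
\end{itemize}

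\textbf{Split mono case.} Suppose $f : A \to B$ has a retraction $r : B \to A$ with $r \circ f = \id_A$.

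\begin{itemize}
\item[(i)] Compute $f_*; f^*$. A pair $(a, a')$ lies in it iff there is $b$ with $b = f(a)$ and $b = f(a')$. So $f_*; f^* = \{(a, a') \mid f(a) = f(a')\}$, the kernel of $f$.
\item[(ii)] The kernel always contains the diagonal, so $\id_A \leq f_*; f^*$. This is the unit of the adjunction.
\item[(iii)] For the reverse inclusion, suppose $f(a) = f(a')$. Applying $r$ gives $a = r(f(a)) = r(f(a')) = a'$, so the pair is diagonal. Hence $f_*; f^* = \id_A$.
\end{itemize}

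Nothing here is a real obstacle. The only point needing care is keeping the diagrammatic order of $;$ straight, so that the composite lands on $B$ in the first case and on $A$ in the second. In each case, one inclusion is automatic from the adjunction $f_* \dashv f^*$. The other inclusion uses only that $f$ is surjective (first case) or injective (second case). The splitting is needed just to produce the witness $s(b)$, or to justify the cancellation.
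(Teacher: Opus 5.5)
Your proof is correct. The paper states this lemma without proof, and your element-wise computation is exactly the routine verification it leaves implicit: you identify $f^*;f_*$ as the diagonal on the image of $f$ and $f_*;f^*$ as the kernel of $f$, get one inclusion in each case from the counit/unit of $f_* \dashv f^*$, and get the other from the section or retraction. You also read the statement's second composite (printed as $f_*: f^*$) correctly as $f_*;f^*$.
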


\begin{lemma}
	For relations $R : A \profto  B$ and $S : B\profto C$, if we take the pullback as on the left below (or equivalently the composition of the two tabulation spans), then the unique morphism $u: P \to \graph{S \circ R}$ as on the right is split epi.
	\begin{center}
		\begin{tikzpicture}[thick,xscale=0.7, yscale=0.7]
			\node (A) at (2.5,0) {$A$};
			\node (Y1) at (4,1.5) {$\graph{R}$};
			\node (B) at (5.5,0) {$B$};
			\draw [->] (Y1)  to node [above left] {$\pi_1^R$} (A);
			\draw [->] (Y1)  to node [below left] {$\pi_2^R$} (B);
			
			\node (X2) at (7,1.5) {$\graph{S}$};
			\node (P) at (5.5,3) {$P$};
			\node (C) at (8.5,0) {$C$};
			\draw [->] (P) to node [above left] {$p_R$} (Y1);
			\draw [->] (P) to node [above right] {$p_S$} (X2);
			\draw [->] (X2)  to node [below right] {$\pi_1^S$} (B);
			\draw [->] (X2)  to node [above right] {$\pi_2^S$} (C);

			\node at (5.5,1.5) {$\mathrm{pb}$};
			
			\begin{scope}[xshift=10cm]
	\node (A) at (2.5,0) {$A$};
\node (Y1) at (3.5,1.5) {$\graph{R}$};
\node (B) at (5.5,0.8) {$\graph{S\circ R}$};
\draw [->] (Y1)  to node [above left] {$\pi_1^R$} (A);
\draw [->] (B)  to node [below] {$\pi_1^{S\circ R}$} (A);

\node (X2) at (7.5,1.5) {$\graph{S}$};
\node (P) at (5.5,3) {$P$};
\node (C) at (8.5,0) {$C$};
\draw [->] (P) to node [above left] {$p_R$} (Y1);
\draw [->, dotted] (P) to node [ left] {$u$} (B);
\draw [->] (P) to node [above right] {$p_S$} (X2);
\draw [->] (B)  to node [below] {$\pi_1^{S\circ R}$} (C);
\draw [->] (X2)  to node [above right] {$\pi_2^S$} (C);

			\end{scope}
		\end{tikzpicture}
	\end{center}

\end{lemma}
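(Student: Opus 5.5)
The plan is to show that $u$ is surjective; in $\Set$, with the axiom of choice, every surjection is split epi. So I first make the objects explicit.

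Since tabulators are subsets of products, $\graph{R}=\{(a,b)\mid (a,b)\in R\}$ with the two projections, and similarly for $\graph{S}$. The pullback $P$ of $\pi_2^R$ and $\pi_1^S$ is the set of pairs $((a,b),(b',c))$ with $(a,b)\in R$, $(b',c)\in S$ and $b=b'$. It can therefore be identified with the triples $(a,b,c)$ such that $(a,b)\in R$ and $(b,c)\in S$.

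Next I define the span $(\pi_1^R\circ p_R,\ \pi_2^S\circ p_S): P\to A\times C$, which sends $(a,b,c)\mapsto(a,c)$. To apply the tabulator universal property of $\graph{S\circ R}$, I must check the condition $(\pi_1^R p_R)^*;(\pi_2^S p_S)_*\leq S\circ R$. Concretely, this relation consists of the pairs $(a,c)$ for which some $(a,b,c)\in P$ exists. Such a $b$ witnesses $(a,c)\in S\circ R$ by the definition of relational composition. The universal property then gives the unique $u$ with $u(a,b,c)=(a,c)$, and both triangles commute.

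The key step is surjectivity. Take $(a,c)\in\graph{S\circ R}$. By definition of composition there is some $b\in B$ with $(a,b)\in R$ and $(b,c)\in S$. Then $(a,b,c)\in P$ and $u(a,b,c)=(a,c)$.

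To get the splitting, choose for each $(a,c)$ such a witness $b_{(a,c)}$ (axiom of choice) and set $s(a,c)=(a,b_{(a,c)},c)$. Then $u\circ s=\id$.

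There is no real obstacle here beyond making the identification of $P$ explicit and invoking choice for the section. One could alternatively phrase this as: the factorisation $P\to\graph{S\circ R}\hookrightarrow A\times C$ is the (surjection, injection) image factorisation of $P\to A\times C$, and surjections split in $\Set$.
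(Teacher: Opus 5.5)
Your proof is correct: identifying $P$ with the triples $(a,b,c)$ such that $(a,b)\in R$ and $(b,c)\in S$, the induced map is $u(a,b,c)=(a,c)$, which is surjective by the very definition of $S\circ R$, and choosing a witness $b$ for each $(a,c)$ gives a section. The paper states this lemma without proof, and your argument is exactly the routine justification it leaves implicit, including the unavoidable appeal to choice (the same choice that underlies the paper's split-epi formulation of weak pullbacks).
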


\begin{proof}[Proof of Theorem~\ref{th:paramBarr}]
	We will only prove the theorem for $\mathcal{L}_1$, the proofs for $i={2,3,4}$ follow the same pattern.
	\begin{enumerate}
		\item[]
	\item To obtain that $\mathcal{L}_1$ is normal, we use the fact that the tabulation of the identity relation is isomorphic to the identity span.
	\item If $M' \leq M : A \profto B$, $N\leq N': C \profto D$ and $R \leq R' : X \profto Y$, then there exist functions $\alpha: \graph{M'}\to \graph{M}$, $\beta : \graph{N} \to \graph{N'}$ and $\gamma : \graph{R} \to \graph{R}'$ such that $\pi_1^{M} \circ \alpha = \pi_1^{M'}$, $\pi_2^{M} \circ \alpha = \pi_2^{M'}$, $\pi_1^{N'} \circ \beta = \pi_1^N$, $\pi_2^{N'} \circ \beta = \pi_2^N$, $\pi_1^{R'} \circ \gamma = \pi_1^R$ and $\pi_2^{R'} \circ \gamma = \pi_2^R$. We obtain:
	\[	\begin{aligned}
		&\mathcal{L}_1(M,N,R)
		=(T(\id, \pi_1^N, \pi_1^R))^*;
		(T(\pi_1^{M}, \id, \id))_*  ;(T(\pi_2^{M}, \id, \id))^* ;(T(\id, \pi_2^N, \pi_2^R))_* \\
		\leq&(T(\id, \pi_1^N, \pi_1^R))^*;
		(T(\pi_1^{M}, \id, \id))_*  ;(T(\alpha, \id, \id))_* ;\\
		&\qquad\qquad\qquad (T(\alpha, \id, \id))^* ;(T(\pi_2^{M}, \id, \id))^* ;(T(\id, \pi_2^N, \pi_2^R))_* \\
		=&(T(\id, \pi_1^N, \pi_1^R))^*;
		(T(\pi_1^{M'}, \id, \id))_*  ;(T(\pi_2^{M'}, \id, \id))^* ;(T(\id, \pi_2^N, \pi_2^R))_* \\
		=&(T(\id, \pi_1^{N'}, \pi_1^{R'}))^*;(T(\id, \beta, \gamma))^*;
		(T(\pi_1^{M'}, \id, \id))_*  ;\\
		&\qquad\qquad\qquad (T(\pi_2^{M'}, \id, \id))^* ;(T(\id, \beta, \gamma))_*;(T(\id, \pi_2^{N'}, \pi_2^{R'}))_* \\
		\leq & (T(\id, \pi_1^{N'}, \pi_1^{R'}))^*;
		(T(\pi_1^{M'}, \id, \id))_* ;(T(\id, \beta, \gamma))^*;\\
		&\qquad\qquad\qquad 
		  (T(\id, \beta, \gamma))_*;(T(\pi_2^{M'}, \id, \id))^*;(T(\id, \pi_2^{N'}, \pi_2^{R'}))_* \\
		\leq & (T(\id, \pi_1^{N'}, \pi_1^{R'}))^*;
		(T(\pi_1^{M'}, \id, \id))_* ;(T(\pi_2^{M'}, \id, \id))^*;(T(\id, \pi_2^{N'}, \pi_2^{R'}))_* \\
		=&	\mathcal{L}_1(M',N',R')\\
	\end{aligned}\]

	\item for $M: A_1 \profto A_2$, $M' :A_2 \profto A_3$, we have:
	\begin{center}
		\begin{tikzpicture}[thick,xscale=0.7, yscale=0.7]
			\node (A) at (2.5,0) {$A_1$};
			\node (Y1) at (4,1.5) {$\graph{M}$};
			\node (B) at (5.5,0) {$A_2$};
			\draw [->] (Y1)  to node [above left] {$\pi_1^{M}$} (A);
			\draw [->] (Y1)  to node [below left] {$\pi_2^M$} (B);
			
			\node (X2) at (7,1.5) {$\graph{M'}$};
			\node (P) at (5.5,3) {$P_{M,M'}$};
			\node (C) at (8.5,0) {$A_3$};
			\draw [->] (P) to node [above left] {$p_M$} (Y1);
			\draw [->] (P) to node [above right] {$p_{M'}$} (X2);
			\draw [->] (X2)  to node [below right] {$\pi_1^M$} (B);
			\draw [->] (X2)  to node [above right] {$\pi_2^{M'}$} (C);
			
			\node at (5.5,1.5) {pb};
			
			\begin{scope}[xshift=10cm]
				\node (A) at (2.5,0) {$A_1$};
				\node (Y1) at (3.5,1.5) {$\graph{M}$};
				\node (B) at (5.5,0.8) {$\graph{M; M'}$};
				\draw [->] (Y1)  to node [above left] {$\pi_1^M$} (A);
				\draw [->] (B)  to node [below] {$\pi_1^{M; M'}$} (A);
				
				\node (X2) at (7.5,1.5) {$\graph{M'}$};
				\node (P) at (5.5,3) {$P_{M; M'}$};
				\node (C) at (8.5,0) {$A_3$};
				\draw [->] (P) to node [above left] {$p_{M'}$} (Y1);
				\draw [->, dotted] (P) to node [] {$u_{M,M'}$} (B);
				\draw [->] (P) to node [above right] {$p_{M'}$} (X2);
				\draw [->] (B)  to node [below] {$\pi_1^{M; M'}$} (C);
				\draw [->] (X2)  to node [above right] {$\pi_2^{M'}$} (C);
				
			\end{scope}
		\end{tikzpicture}
	\end{center}
		\[\begin{aligned}
	&\mathcal{L}_1(M,\id_C, \id_X);	\mathcal{L}_1(M',\id_C, \id_X)\\
	&= 	
	(T(\pi_1^{M'}, \id, \id))_*  ;(T(\pi_2^{M'}, \id, \id))^*;(T(\pi_1^{M'}, \id, \id))_*  ;(T(\pi_2^{M'}, \id, \id))^* \\
	&\leq (T(\pi_1^{M'}, \id, \id))_*  ;(T(p_M, \id, \id))_*;(T(p_{M'}, \id, \id))^*  ;(T(\pi_2^{M'}, \id, \id))^* \\
	&=(T(\pi_1^{M; M'}, \id, \id))_*  ;(T(u_{M,M'}, \id, \id))_*;(T(u_{M,M'}, \id, \id))^*  ;(T(\pi_2^{M; M'}, \id, \id))^* \\
	&=(T(\pi_1^{M; M'}, \id, \id))_*   ;(T(\pi_2^{M; M'}, \id, \id))^* =	\mathcal{L}_1(M; M',\id_C, \id_X)
\end{aligned}\]
	\item the proof for $\mathcal{L}_1(\id ,N'\circ N,R'\circ R) \leq \mathcal{L}_1(\id ,N',R') \circ \mathcal{L}_1(\id ,N,R)  $	is similar.

		\item for $f : A\to B$, $g : C \to D$ and $h : X \to Y$, we have  
		 \begin{center}
			\begin{tikzpicture}[thick,scale=0.5]
				\node (A) at (0,1.5) {\small$B$};
				\node (B) at (2,3) {\small$\graph{f^*}$};
				\node (C) at (4,1.5) {\small$A$};
				\node (D) at (2,0) {\small$A$};
				
				\draw [->] (B)  to  node [above left] {\scriptsize$\pi_1^{f^*}$} (A);
				\draw [->] (B)  to  node [above right] {\scriptsize$\pi_2^{f^*}$} (C);	
				\draw [->] (D) to  node [below left] {\scriptsize$f$} (A);
				\draw [double=] (D)  to node [below right] {} (C);
				\draw [->] (B) to  node [left] {\scriptsize$\pi_2^{f^*}$} (D);
				
				\begin{scope}[xshift=7cm]
			\node (A) at (0,1.5) {\small$C$};
\node (B) at (2,3) {\small$\graph{g_*}$};
\node (C) at (4,1.5) {\small$D$};
\node (D) at (2,0) {\small$C$};

\draw [->] (B)  to  node [above left] {\scriptsize$\pi_1^{g_*}$} (A);
\draw [->] (B)  to  node [above right] {\scriptsize$\pi_2^{g_*}$} (C);	
\draw [double=] (D) to  node [below left] {} (A);
\draw [->] (D)  to node [below right] {\scriptsize$g$} (C);
\draw [->] (B) to  node [left] {\scriptsize$\pi_1^{g_*}$} (D);
				\end{scope}
							\begin{scope}[xshift=14cm]
								\node (A) at (0,1.5) {\small$X$};
					\node (B) at (2,3) {\small$\graph{h_*}$};
					\node (C) at (4,1.5) {\small$Y$};
					\node (D) at (2,0) {\small$X$};
					
					\draw [->] (B)  to  node [above left] {\scriptsize$\pi_1^{h_*}$} (A);
					\draw [->] (B)  to  node [above right] {\scriptsize$\pi_2^{h_*}$} (C);	
					\draw [double=] (D) to  node [below left] {} (A);
					\draw [->] (D)  to node [below right] {\scriptsize$h$} (C);
					\draw [->] (B) to  node [left] {\scriptsize$\pi_1^{h_*}$} (D);
				\end{scope}
			\end{tikzpicture}
		\end{center}
		where $\pi_2^{f^*}$, $\pi_1^{g_*}$ and $\pi_1^{h_*}$ are split epi. We obtain:
			\[\begin{aligned}
			&\mathcal{L}_1(f^*, g_*, h_* )= (T(\id, \pi_1^{g_*}, \pi_1^{h_*}))^* ; (T(\pi_1^{f^*}, \id, \id))_* ;(T(\pi_2^{f^*}, \id, \id))^*	;(T(\id, \pi_2^{g_*}, \pi_2^{h_*}))_*  \\
			&= 	(T(\id, \pi_1^{g_*}, \pi_1^{h_*}))^*;(T(f, \id, \id))_* ;(T(\pi_2^{f^*}, \id, \id))_* ;\\
			&\qquad\qquad\qquad(T(\pi_2^{f^*}, \id, \id))^* ;(T(\id, \pi_1^{g_*}, \pi_1^{h_*}))_* ;T(\id, g, h))_* \\
			&= (T(\id, \pi_1^{g_*}, \pi_1^{h_*}))^*;(T(f, \id, \id))_* ;(T(\id, \pi_1^{g_*}, \pi_1^{h_*}))_* ;(T(\id, g, h))_* \\
			&= (T(\id, \pi_1^{g_*}, \pi_1^{h_*}))^*;(T(\id, \pi_1^{g_*}, \pi_1^{h_*}))_*;(T(f, \id, \id))_*  ;(T(\id, g, h))_* \\
			&= (T(f, g, h))_*
		\end{aligned}\]
		The proof for $	(T(f, g, h))^*=  \mathcal{L}_1(f_*,  g^*, h^* )$ is similar.
		\item for $M : A \profto B$, $N: C \profto D$ and $R : X \profto Y$, we use that the graph of the transpose switches source and target in spans:
		\begin{center}
			\begin{tikzpicture}[thick,scale=0.5]
			\node (A) at (0,1.5) {\small$B$};
			\node (B) at (2,3) {\small$\graph{M^t}$};
			\node (C) at (4,1.5) {\small$A$};
			\node (D) at (2,0) {\small$\graph{M}$};
			
			\draw [->] (B)  to  node [above left] {\scriptsize$\pi_1^{M^t}$} (A);
			\draw [->] (B)  to  node [above right] {\scriptsize$\pi_2^{M^t}$} (C);	
			\draw [->] (D) to  node [below left] {\scriptsize$\pi_2^{M}$} (A);
			\draw [->] (D)  to node [below right] {\scriptsize$\pi_1^{M}$} (C);
			\draw [->] (B) to  node [left] {\scriptsize$\cong$} (D);
			
			\begin{scope}[xshift=6cm]
							\node (A) at (0,1.5) {\small$D$};
				\node (B) at (2,3) {\small$\graph{N^t}$};
				\node (C) at (4,1.5) {\small$C$};
				\node (D) at (2,0) {\small$\graph{N}$};
				
				\draw [->] (B)  to  node [above left] {\scriptsize$\pi_1^{N^t}$} (A);
				\draw [->] (B)  to  node [above right] {\scriptsize$\pi_2^{N^t}$} (C);	
				\draw [->] (D) to  node [below left] {\scriptsize$\pi_2^{N}$} (A);
				\draw [->] (D)  to node [below right] {\scriptsize$\pi_1^{N}$} (C);
				\draw [->] (B) to  node [left] {\scriptsize$\cong$} (D);
			\end{scope}
			
			\begin{scope}[xshift=12cm]
							\node (A) at (0,1.5) {\small$Y$};
				\node (B) at (2,3) {\small$\graph{R^t}$};
				\node (C) at (4,1.5) {\small$X$};
				\node (D) at (2,0) {\small$\graph{R}$};
				
				\draw [->] (B)  to  node [above left] {\scriptsize$\pi_1^{R^t}$} (A);
				\draw [->] (B)  to  node [above right] {\scriptsize$\pi_2^{R^t}$} (C);	
				\draw [->] (D) to  node [below left] {\scriptsize$\pi_2^{R}$} (A);
				\draw [->] (D)  to node [below right] {\scriptsize$\pi_1^{R}$} (C);
				\draw [->] (B) to  node [left] {\scriptsize$\cong$} (D);
			\end{scope}
		\end{tikzpicture}
		\end{center}
		We obtain:
		\[	\begin{aligned}
			\mathcal{L}_1(M^t,N^t,R^t) &=(T(\id, \pi_2^N, \pi_2^R))^* ;
			(T(\pi_2^M, \id, \id))_* ;(T(\pi_1^M, \id, \id))^* ;(T(\id, \pi_1^N, \pi_1^R))_* \\
			&=\left((T(\id, \pi_1^N, \pi_1^R))^* ;(T(\pi_1^M, \id, \id))_* ;(T(\pi_2^M, \id, \id))^*;(T(\id, \pi_2^N, \pi_2^R))_*  \right)^t\\
			&= 	(\mathcal{L}_1(M,N,R))^t
		\end{aligned}\]

		\item let $\mathcal{L}$ be an abitrary lax extension for $T$, for relations $M : A \profto B$, $N: C \profto D$ and $R : X \profto Y$, we have:
\begin{align*}
			&\mathcal{L}_1(M,N,R)=(T(\id, \pi_1^N, \pi_1^R))^*;
			(T(\pi_1^M, \id, \id))_*  ;(T(\pi_2^M, \id, \id))^* ;(T(\id, \pi_2^N, \pi_2^R))_* \\
			&\leq \mathcal{L}(\id, (\pi_1^N)^*, (\pi_1^R)^*) ; \mathcal{L}((\pi_1^M)^* \id, \id) ; \mathcal{L}((\pi_2^M)_*, \id, \id);\mathcal{L}(\id, (\pi_2^N)_*, (\pi_2^R)_*) \\
			&\leq \mathcal{L}((\pi_1^M)^*;(\pi_2^M)_*, (\pi_1^N)^*; (\pi_2^N)_* (\pi_1^R)^*;(\pi_2^R)_*) = \mathcal{L}(M,N,R)\qedhere
		\end{align*}
\end{enumerate}
\end{proof}
\begin{proof}[Proof of Proposition~\ref{prop:laxExtParam}] It is straightforward to obtain $\mathcal{L}_1=\mathcal{L}_2=\mathcal{L}_3=\mathcal{L}_4$ from the strong interchange property. We show that $\mathcal{L}_1$ satisfies axiom $(2)$ in Definition~\ref{def:ParamLaxExtensionMonadSet}:
		\item For relations $M: A_1 \profto A_2$, $M' :A_2 \profto A_3$,  $N: B_1 \profto B_2$, $N' :B_2 \profto B_3$, $R: X_1 \profto X_2$ and $R' :X_2 \profto X_3$, we have:
			\[	\begin{aligned}
			&\mathcal{L}_1(M,N,R);  \mathcal{L}_1(M,'N',R')\\
			=&(T(\id, \pi_1^N, \pi_1^R))^*;
			(T(\pi_1^M, \id, \id))_*  ;(T(\pi_2^M, \id, \id))^* ;(T(\id, \pi_2^N, \pi_2^R))_* \\
			&\qquad;(T(\id, \pi_1^{N'}, \pi_1^{R'}))^*;
			(T(\pi_1^{M'}, \id, \id))_*  ;(T(\pi_2^{M'}, \id, \id))^* ;(T(\id, \pi_2^{N'}, \pi_2^{R'}))_* \\
			=&(T(\id, \pi_1^N, \pi_1^R))^*;
			(T(\pi_1^M, \id, \id))_*  ;(T(\pi_2^M, \id, \id))^* 
			;(T(\id, p_N, p_R))^*\\
			&\qquad ;(T(\id, p_{N'}, p_{R'}))_*;
			(T(\pi_1^{M'}, \id, \id))_*  ;(T(\pi_2^{M'}, \id, \id))^* ;(T(\id, \pi_2^{N'}, \pi_2^{R'}))_* \\
			=&T(\id, \pi_1^N, \pi_1^R))^*;
			(T(\pi_1^M, \id, \id))_*  
			;(T(\id, p_N, p_R))^* ;
						(T(\pi_2^M, \id, \id))^*\\
						&\qquad ;
			(T(\pi_1^{M'}, \id, \id))_*  ;
				(T(\id, p_{N'}, p_{R'}))_*;
			(T(\pi_2^{M'}, \id, \id))^* ;(T(\id, \pi_2^{N'}, \pi_2^{R'}))_* \\
			\leq &T(\id, \pi_1^N, \pi_1^R))^*;
			(T(\pi_1^M, \id, \id))_*  
			;(T(\id, p_N, p_R))^* ;
				(T(p_M, \id, \id))_*\\
				&\qquad ; 	(T(p_{M'}, \id, \id))^* ;
			(T(\id, p_{N'}, p_{R'}))_*;
			(T(\pi_2^{M'}, \id, \id))^* ;(T(\id, \pi_2^{N'}, \pi_2^{R'}))_* \\
				= &T(\id, \pi_1^N, \pi_1^R))^*;
			(T(\id, p_N, p_R))^* ;
			(T(\pi_1^M, \id, \id))_* ;
			(T(p_M, \id, \id))_* \\
			&\qquad; 	(T(p_{M'}, \id, \id))^* ;
			(T(\pi_2^{M'}, \id, \id))^* ;
			(T(\id, p_{N'}, p_{R'}))_*;
			(T(\id, \pi_2^{N'}, \pi_2^{R'}))_* \\
			=& 	(T(\id, \pi_1^{N;N'}, \pi_1^{R;R'}))^*;(T(\pi_1^{M; M'}, \id, \id))_*   ;(T(\pi_2^{M; M'}, \id, \id))^* ; 	(T(\id, \pi_2^{N;N'}, \pi_2^{R;R'}))_*
		\end{aligned}\]
					We use that $T$ preserves pullbacks in the last two variables for the second equality and that the interchange squares are weak pullbacks for the penultimate equality. \qedhere
\end{proof}

\begin{proof}[Proof of Proposition~\ref{prop:laxExtParamWeak}]
	We show that $\mathcal{L}_4$ satisfies axiom $(2)$ in Definition~\ref{def:ParamLaxExtensionMonadSet}:
		\item For relations $M: A_1 \profto A_2$, $M' :A_2 \profto A_3$,  $N: B_1 \profto B_2$, $N' :B_2 \profto B_3$, $R: X_1 \profto X_2$ and $R' :X_2 \profto X_3$, we have:
		\[	\begin{aligned}
			&\mathcal{L}_4(M,N,R); \mathcal{L}_4(M,'N',R')\\
			=&	(T(\pi_1^M, \id, \id))_*  ;(T(\id, \pi_1^N, \pi_1^R))^*;(T(\id, \pi_2^N, \pi_2^R))_* ;
			(T(\pi_2^M, \id, \id))^* \\
			&\qquad;(T(\pi_1^{M'}, \id, \id))_*  ;(T(\id, \pi_1^{N'}, \pi_1^{R'}))^*;
		(T(\id, \pi_2^{N'}, \pi_2^{R'}))_*;	(T(\pi_2^{M'}, \id, \id))^*  \\
		\leq &	(T(\pi_1^M, \id, \id))_*  ;(T(\id, \pi_1^N, \pi_1^R))^*;(T(\id, \pi_2^N, \pi_2^R))_* ;
		(T(p_M, \id, \id))_* \\
		&\qquad;(T(p_{M'}, \id, \id))^*  ;(T(\id, \pi_1^{N'}, \pi_1^{R'}))^*;
		(T(\id, \pi_2^{N'}, \pi_2^{R'}))_*;	(T(\pi_2^{M'}, \id, \id))^*  \\
		= &	(T(\pi_1^M, \id, \id))_*  ;(T(\id, \pi_1^N, \pi_1^R))^*;(T(p_M, \id, \id))_* ;(T(\id, \pi_2^N, \pi_2^R))_* ;
		\\
		&\qquad ;(T(\id, \pi_1^{N'}, \pi_1^{R'}))^*;(T(p_{M'}, \id, \id))^* ;
		(T(\id, \pi_2^{N'}, \pi_2^{R'}))_*;	(T(\pi_2^{M'}, \id, \id))^*  \\
		\leq &	(T(\pi_1^M, \id, \id))_* ;(T(p_M, \id, \id))_*  ;(T(\id, \pi_1^N, \pi_1^R))^*;(T(\id, \pi_2^N, \pi_2^R))_* ;
		\\
		&\qquad ;(T(\id, \pi_1^{N'}, \pi_1^{R'}))^* ;
		(T(\id, \pi_2^{N'}, \pi_2^{R'}))_*;	(T(p_{M'}, \id, \id))^*;(T(\pi_2^{M'}, \id, \id))^*  \\
		= &	(T(\pi_1^{M; M'}, \id, \id))_*  ;(T(u_{M,M'}, \id, \id))_*;
		T(\id, \pi_1^{N;N'}, \pi_1^{R;R'}))_*	\\
		&\qquad ;	T(\id, \pi_2^{N;N'}, \pi_2^{R;R'}))_*
		(T(u_{M,M'}, \id, \id))^*  ;
		(T(\pi_2^{M; M'}, \id, \id))^*\\
		= &	(T(\pi_1^{M; M'}, \id, \id))_*  ;
		T(\id, \pi_1^{N;N'}, \pi_1^{R;R'}))_*;(T(u_{M,M'}, \id, \id))_*	\\
		&\qquad ;	
		(T(u_{M,M'}, \id, \id))^*  ; T(\id, \pi_2^{N;N'}, \pi_2^{R;R'}))_*;
		(T(\pi_2^{M; M'}, \id, \id))^*\\
		= 	&	(T(\pi_1^{M; M'}, \id, \id))_*  ;
		T(\id, \pi_1^{N;N'}, \pi_1^{R;R'}))_*; T(\id, \pi_2^{N;N'}, \pi_2^{R;R'}))_*;
		(T(\pi_2^{M; M'}, \id, \id))^*\\
		=&	\mathcal{L}_4(M; M',N;N', R;R')
		\end{aligned}\]
		We use that $T$ preserves pullbacks in the last two variables for the third equality and the weak interchange property for the fourth equality. \qedhere
\end{proof}

\begin{proof}[Proof of Proposition~\ref{prop:LaxParamExtensionMonad}]
	\begin{itemize}
		\item[]
		\item Unit: for all $M :A \profto B$ and $R : X \profto Y$. we have
		\[\begin{aligned}
		R ;	(\eta_{B, Y})_* &=(\pi_1^R)^* ; (\pi_2^R)_* ; (\eta_{B, Y})_* \\
		&\leq (\pi_1^R)^* ; (\eta_{\graph{M}, \graph{R}})_*; (\eta_{\graph{M}, \graph{R}})^*;(\pi_2^R)_* ; (\eta_{B, Y})_*\\
		&\leq  (\eta_{A, X})_*;(T(\pi_1^M, \id, \id))_* ;(T(\id, \pi_1^M, \pi_1^R))^* ;(T(\id, \pi_2^M, \pi_2^R))_* ;(T(\pi_2^M, \id, \id))^*\\
		&= (\eta_{A, X})_*;	\mathcal{L}_4(M,M,R)
		\end{aligned}\]
		\item Multiplication: we want to show that for all $M: A_1 \profto B_1$, $N: A_2 \profto B_2$, $O: A_3 \profto B_3$ and $R : X \profto Y$, the following holds:
		\[	\mathcal{L}_4(M,N,\mathcal{L}_4(N, O, R)) ;(\mu_{B_1, B_2, B_3,Y})_*
	 \leq (\mu_{A_1, A_2,A_3,X})_*;\mathcal{L}_4(M,O, R)
		\]
	Since $\mathcal{L}_4$ is a lax extension, the two variable operator $\mathcal{L}_4(\id, -, -)$ preserves composition strictly. Indeed, we have $\mathcal{L}_4(\id, N; N', R; R') \geq \mathcal{L}_4(\id, N, R) ;\mathcal{L}_4(\id, N', R')$ because it is a lax extension, and by construction, we always have $\mathcal{L}_4(\id, N; N', R; R') \leq \mathcal{L}_4(\id, N, R) ;\mathcal{L}_4(\id, N', R')$ (Theorem~\ref{th:paramBarr}.4). 
	Therefore, using Lemma~\ref{lem:laxdblefunctor}, the desired inequality above becomes:
	\[\begin{aligned}
	 &(T(\pi_1, \id, T(\pi_1, \id, \id)))_* ;(T(\id, \pi_1, T(\id, \pi_1,\pi_1)))^* ;(T(\id, \pi_2,T( \id, \pi_2, \pi_2)))_* ;\\
	 &\qquad \qquad \qquad \qquad(T(\pi_2, \id,T( \pi_2, \id, \id)))^*;(\mu_{B_1, B_2, B_3,Y})_* \\
	 &\leq (\mu_{A_1, A_2,A_3,X})_*;(T(\pi_1, \id, \id))_* ;(T(\id, \pi_1, \pi_1))^* ;(T(\id, \pi_2, \pi_2))_* ;(T(\pi_2, \id, \id))^*
	\end{aligned}\]
	We will combine the following two inequalities to establish the desired inequality above:
	\begin{align*}
&(T(\pi_1, \id, T(\pi_1, \id, \id)))_* ;(T(\id, \pi_1, T(\id, \pi_1,\pi_1)))^* ;(\mu_{\graph{M}, \graph{N},\graph{O},\graph{R}})_* \\
&\leq (\mu_{A_1, A_2,A_3,X})_*;(T(\pi_1, \id, \id))_* ;(T(\id, \pi_1, \pi_1))^* \tag{$\star$} \label{ineq1} 
	\end{align*}
	\begin{align*}
		&(\mu_{\graph{M}, \graph{N},\graph{O},\graph{R}})^*; (T(\id, \pi_2,T( \id, \pi_2, \pi_2)))_* ;(T(\pi_2, \id,T( \pi_2, \id, \id)))^*;\\
		& \qquad(\mu_{B_1, B_2, B_3,Y})_* \\
	&\leq	(T(\id, \pi_2, \pi_2))_* ;(T(\pi_2, \id, \id))^* \tag{$\star\star$} \label{ineq2} 
	\end{align*}
	\begin{itemize}
		\item proof of (\ref{ineq1}) 
		\[\begin{aligned}
			&(T(\pi_1, \id, T(\pi_1, \id, \id)))_* ;(T(\id, \pi_1, T(\id, \pi_1,\pi_1)))^* ;(\mu_{\graph{M}, \graph{N},\graph{O},\graph{R}})_* \\
			=&(T(\pi_1, \id, \id))_* ; (T(\id, \id, T(\pi_1, \id, \id)))_*		;(T(\id, \pi_1, \id))^*	;(T(\id, \id, T(\id, \pi_1,\pi_1)))^* ;\\
			&\qquad (\mu_{\graph{M}, \graph{N},\graph{O},\graph{R}})_* \\
			=&(T(\pi_1, \id, \id))_* 	;(T(\id, \pi_1, \id))^*	; (T(\id, \id, T(\pi_1, \id, \id)))_*	;(T(\id, \id, T(\id, \pi_1,\pi_1)))^* ;\\
			&\qquad (\mu_{\graph{M}, \graph{N},\graph{O},\graph{R}})_* \\
			=\leq&(T(\pi_1, \id, \id))_* 	;(T(\id, \pi_1, \id))^*	; (T(\id, \id, T(\pi_1, \id, \id)))_*	; (\mu_{\graph{M}, \graph{N},A_3,X})_*  ; \\
			&\qquad (T(\id, \pi_1,\pi_1))^* \\
			\leq&(T(\pi_1, \id, \id))_* 	;(\mu_{\graph{M},A_2,A_3,X})_*; (T(\id, \pi_1,\pi_1))^* \\
			\leq & (\mu_{A_1, A_2,A_3,X})_*;(T(\pi_1, \id, \id))_* ;(T(\id, \pi_1, \pi_1))^*
		\end{aligned}
		\]
		\item proof of (\ref{ineq2}) 
		\[\begin{aligned}
			&(\mu_{\graph{M}, \graph{N},\graph{O},\graph{R}})^*; (T(\id, \pi_2,T( \id, \pi_2, \pi_2)))_* ;(T(\pi_2, \id,T( \pi_2, \id, \id)))^*;\\
			& \qquad(\mu_{B_1, B_2, B_3,Y})_* \\
			=& (\mu_{\graph{M}, \graph{N},\graph{O},\graph{R}})^*; 	(T(\id, \id,T( \id, \pi_2, \pi_2)))_* ;	(T(\id, \pi_2,\id))_* ;\\
			& \qquad 	(T(\id ,\id,T( \pi_2, \id, \id)))^*;	(T(\pi_2, \id,\id)))^*;(\mu_{B_1, B_2, B_3,Y})_* \\
			=& (\mu_{\graph{M}, \graph{N},\graph{O},\graph{R}})^*; 	(T(\id, \id,T( \id, \pi_2, \pi_2)))_* ;	(T(\id, \id,T( \pi_2, \id, \id)))^*;\\
			& \qquad 	(T(\id, \pi_2,\id))_*	;	(T(\pi_2, \id,\id)))^*;(\mu_{B_1, B_2, B_3,Y})_* \\
			\leq& (\mu_{\graph{M}, \graph{N},\graph{O},\graph{R}})^*; 	(T(\id, \id,T( \id, \pi_2, \pi_2)))_* ;	(T(\id, \id,T( \pi_2, \id, \id)))^*;\\
			& \qquad 	(T(\id, \pi_2,\id))_*	;(\mu_{\graph{M}, B_2, B_3,Y})_* ;	(T(\pi_2, \id,\id))^*\\
			\leq& (\mu_{\graph{M}, \graph{N},\graph{O},\graph{R}})^*; 	(T(\id, \id,T( \id, \pi_2, \pi_2)))_* ;	(\mu_{\graph{M}, \graph{N}, B_3,Y})_*;	(T(\pi_2, \id,\id))^*\\
			\leq&	(T(\id, \pi_2, \pi_2))_* ;(T(\pi_2, \id, \id))^* \
		\end{aligned}
		\]
	\end{itemize}
	In the proofs of (\ref{ineq1}) and (\ref{ineq2}), the second equality uses the fact that $\mathcal{L}_4(\id, -, -)$ preserves composition strictly. 
	\item Strength: for all $M: A_1 \profto B_1$, $N: A_2 \profto B_2$, $R : X_1 \profto Y_1$ and $S : X_2 \profto Y_2$, we have 
	\[\begin{aligned}
			&(R\times \mathcal{L}_4(M,N, S)) ; (\strength_{A_2, B_2, X_2, Y_2})_*\\
			=& (\id \times (T(\pi_1, \id, \id))_* ) ;((\pi_1)^* \times (T(\id, \pi_1, \pi_1))^*);((\pi_2)_* \times T(\id, \pi_2, \pi_2))_* ;\\
			&\qquad (\id \times (T(\pi_2, \id, \id))^*);( \strength_{A_2, B_2, X_2, Y_2})_* \\
			\leq& (\id \times (T(\pi_1, \id, \id))_* ) ;((\pi_1)^* \times (T(\id, \pi_1, \pi_1))^*);((\pi_2)_* \times T(\id, \pi_2, \pi_2))_* ;\\
			&\qquad ( \strength_{\graph{M}, B_2, X_2, Y_2})_* ;(T(\pi_2, \id, \id\times \id))^*\\
			\leq& (\id \times (T(\pi_1, \id, \id))_* ) ;((\pi_1)^* \times (T(\id, \pi_1, \pi_1))^*);( \strength_{\graph{M}, \graph{N}, \graph{R}, \graph{S}})_*; \\
			&\qquad ( T(\id, \pi_2, \pi_2 \times \pi_2))_*;(T(\pi_2, \id, \id\times \id))^*\\
			\leq& (\id \times (T(\pi_1, \id, \id))_* ) ; ( \strength_{\graph{M},B_1,X_1,Y_1})_* ;	(T(\id, \pi_1, \pi_1 \times \pi_1))^*; \\
			&\qquad ( T(\id, \pi_2, \pi_2 \times \pi_2))_*;(T(\pi_2, \id, \id\times \id))^*\\
			\leq& ( \strength_{A_1,B_1,X_1,Y_1})_*; (T(\pi_1 , \id, \id \times \id))_*   ;	(T(\id, \pi_1, \pi_1 \times \pi_1))^*; \\
&\qquad ( T(\id, \pi_2, \pi_2 \times \pi_2))_*;(T(\pi_2, \id, \id\times \id))^*\\
=& (\strength_{A_1, B_1, X_1, Y_1})_*;  \mathcal{L}_4(M,N, R \times S).
	\end{aligned}\]
		\end{itemize}
\end{proof}
\section{Proofs of Section~\ref{sec:bisim}}

\begin{proof}[Proof of Lemma~\ref{lem:AMLaxExtSoundness}]

	Assume that we have an AM-bisimulation w.r.t $(M,N)$ as in Section~\ref{sec:AMbisim}, then we obtain that it is a $\Barr{T}$-bisimulation w.r.t $(M,N)$ from the axioms of parametrized lax extension:
	\begin{align*}
		&(\pi_1^R)^*; 	(\pi_2^R)_* \\
		&\leq\alpha_* ;
		(T(\pi_1^{M}, \id, \id))_*  ;(T(\id, \pi_1^N, \pi_1^R))^*;(T(\id, \pi_2^N, \pi_2^R))_*;(T(\pi_2^{M}, \id, \id))^* ; \beta^* \\
		&\leq \alpha_* ;
		\Barr{T}((\pi_1^{M})^*, \id, \id)  ;	\Barr{T}(\id, (\pi_1^N)^*, (\pi_1^R)^*);	\Barr{T}(\id, (\pi_2^N)_*, (\pi_2^R)_*);\Barr{T}((\pi_2^{M})_*, \id, \id) ; \alpha^*\\
		&\leq \alpha_* ; \Barr{T}(M,N,R); \beta^* \qedhere
	\end{align*}
\end{proof}

\begin{proof}[Proof of Lemma~\ref{lem:paramBehEquivFinal}]
	\begin{itemize}
		\item[]
		\item Assume that $(x,y)$ are behaviorally equivalent. Let $\sem{-}_3: Z \to \Omega$ be the unique coalgebra morphism from $\gamma$ to $\omega$. By uniqueness, we obtain $\sem{-}_1 = \sem{-}_3 \circ f$ and $\sem{-}_2= \sem{-}_3\circ g$. Since $(x,y) \in \graph{R}$ and $f(x) = g(y)$, we obtain $\sem{x}_1 = \sem{y}_2$.
		\item For the other direction, assume that $\sem{x}_1 = \sem{y}_2$. Let $(P, p_1, p_2)$ be the following pullback:
			\begin{center}
			\begin{tikzpicture}[thick,scale=0.5]
			
					\node (A) at (0,1.5) {$X$};
					\node (B) at (2,0) {$Z$};
					\node (C) at (4,1.5) {$X$};
					\node (D) at (2,3) {$P$};
					
					\draw [->] (A)  to  node [below left] {\scriptsize$f$} (B);
					\draw [->] (C)  to  node [below right] {\scriptsize$g$} (B);
					\draw [->] (D) to  node [above left] {\scriptsize$p_1$} (A);
					\draw [->] (D)  to node [above right] {\scriptsize$p_2$} (C);
					\node at (2,1.5) {\scriptsize$\mathrm{pb}$};

			\end{tikzpicture}
		\end{center}
		and define $R := f_*; g^*$. From the universal property of tabulators, we obtain that the span $(P, p_1, p_2)$ is isomorphic to $(\graph{R}, \pi_1^R, \pi_2^R)$ which finishes the proof.\qedhere
	\end{itemize}
\end{proof}

\begin{proof}[Proof of Proposition~\ref{prop:AMBehEquivsoundness}]
	Assume that
	$N: C \profto C$ is difunctional \ie that there exists $h : C \to W$ and $k : B \to W$ such that the left square below is a weak-pullback
	\begin{center}
		\begin{tikzpicture}[thick,scale=0.5]
			\node (A) at (0,1.5) {$C$};
			\node (B) at (2,0) {$W$};
			\node (C) at (4,1.5) {$C$};
			\node (D) at (2,3) {$\graph{N}$};
			
			\draw [->] (A)  to  node [below left] {\scriptsize$h$} (B);
			\draw [->] (C)  to  node [below right] {\scriptsize$k$} (B);
			\draw [->] (D) to  node [above left] {\scriptsize$\pi_1^N$} (A);
			\draw [->] (D)  to node [above right] {\scriptsize$\pi_2^N$} (C);
			\node at (2,1.5) {\scriptsize$\mathrm{wpb}$};
			
			\begin{scope}[xshift=7cm]
				\node (A) at (0,1.5) {$X$};
				\node (B) at (2,0) {$Z$};
				\node (C) at (4,1.5) {$X$};
				\node (D) at (2,3) {$\graph{R}$};
				
				\draw [->] (A)  to  node [below left] {\scriptsize$f$} (B);
				\draw [->] (C)  to  node [below right] {\scriptsize$g$} (B);
				\draw [->] (D) to  node [above left] {\scriptsize$\pi_1^R$} (A);
				\draw [->] (D)  to node [above right] {\scriptsize$\pi_2^R$} (C);
				\node at (2,1.5) {\scriptsize$\mathrm{wpb}$};
			\end{scope}
		\end{tikzpicture}
	\end{center}
	By definition, a relation $R : X\profto X$ is a \emph{behavioral equivalence w.r.t $(M,N)$} if there exists $f : X \to Z$ and  $g : X \to Z$ and a coalgebra $\zeta : Z \to  T(\graph{M},W,Z)$ such that the right square above is a weak-pullback and the following two squares below commute:

	\begin{center}
		\begin{tikzpicture}[thick, xscale=0.8, yscale=0.7]
			\node (A) at (0,0) {\small$T(A,C,X)$};
			\node (B) at (3,-1) {\small$T(\graph{M},W,Z)$};
			\node (C) at (0,2) {\small$X$};
			\node (D) at (3,1) {\small$Z$};
			\node (E) at (6,2) {\small$X$};
			\node (F) at (6,0) {\small$T(A,C,X)$};
			
			\draw [->] (C) -- node [left] {\scriptsize$\alpha$} (A);
			\draw [->] (A) -- node [below left] {\scriptsize$T(\pi_1^M, h,f)$} (B);
			\draw [->] (F) -- node [below right] {\scriptsize$T(\pi_2^M, k, g)$} (B);
			\draw [->] (C) -- node [above] {\scriptsize$f$} (D);
			\draw [->] (E) -- node [above] {\scriptsize$g$} (D);
			\draw [->] (E) -- node [right] {\scriptsize$\alpha$} (F);
			\draw [->] (D) -- node [right] {\scriptsize$\zeta$} (B);
		\end{tikzpicture}
	\end{center}
	We want to show that $R$ is an AM-bisimulation w.r.t $(M,N)$. Since $T$ preserves weak pullbacks in the last two variables, the square below is a weak pullback.
	\begin{center}
		\begin{tikzpicture}[thick,xscale=0.7, yscale=0.8]
			\node (A) at (0,1.5) {\scriptsize$T(\graph{M}, C,X)$};
			\node (B) at (2,0) {\scriptsize$T(\graph{M}, W,Z)$};
			\node (C) at (4,1.5) {\scriptsize$T(\graph{M}, C,X)$};
			\node (D) at (2,3) {\scriptsize$T(\graph{M}, \graph{N},\graph{R})$};
			
			\draw [->] (A)  to  node [below left] {\scriptsize$T(\id, h,f)$} (B);
			\draw [->] (C)  to  node [below right] {\scriptsize$T(\id, k,g)$} (B);
			\draw [->] (D) to  node [above left] {\scriptsize$T(\id, \pi_1^N,  \pi_1^R)$} (A);
			\draw [->] (D)  to node [above right] {\scriptsize$T(\id, \pi_2^N,  \pi_2^R)$} (C);
			\node at (2,1.5) {\scriptsize$\mathrm{wpb}$};
			
		\end{tikzpicture}
	\end{center}
	Note that we see here why both $N$ and $R$ are assumed to be difunctional.
	By the universal property of weak-pullbacks, since the diagram below commutes,
		\begin{center}
		\begin{tikzpicture}[thick, xscale=0.9, yscale=0.6]
			\node (A) at (0,0) {\scriptsize$T(A,C,X)$};
			\node (A1) at (0,-2) {\scriptsize$T(\graph{M},C,X)$};
			\node (B) at (3,-3.5) {\scriptsize$T(\graph{M},W,Z)$};
			\node (C) at (0,2) {\scriptsize$X$};
			\node (D0) at (3,3.5) {\scriptsize$\graph{R}$};
			\node (D) at (3,0.5) {\scriptsize$Z$};
			\node (E) at (6,2) {\scriptsize$X$};
			\node (F) at (6,0) {\scriptsize$T(A,C,X)$};
			\node (F1) at (6,-2) {\scriptsize$T(\graph{M},C,X)$};
			
			\draw [->] (C) -- node [left] {\scriptsize$\alpha$} (A);
			\draw [->] (A1) -- node [below left] {\scriptsize$T(\id, h,f)$} (B);
			\draw [->] (A) -- node [left] {\scriptsize$T(\pi_1^R, \id, \id)$} (A1);
			\draw [->] (F) -- node [ right] {\scriptsize$T(\pi_2^M, \id, \id )$} (F1);
			\draw [->] (F1) -- node [below right] {\scriptsize$T(\id, k,g)$} (B);
			\draw [->] (C) -- node [above] {\scriptsize$f$} (D);
			\draw [->] (E) -- node [above] {\scriptsize$g$} (D);
			\draw [->] (D0) -- node [above] {\scriptsize$\pi_1^R$} (C);
			\draw [->] (D0) -- node [above] {\scriptsize$\pi_2^R$} (E);
			\draw [->] (E) -- node [right] {\scriptsize$\alpha$} (F);
			\draw [->] (D) -- node [right] {\scriptsize$\zeta$} (B);

		\end{tikzpicture}
	\end{center}
	 there exists a morphism $\gamma : \graph{R} \to T(\graph{M}, \graph{N},\graph{R}) $ such that:
	\begin{center}
	\begin{tikzpicture}[thick, xscale=0.9, yscale=0.6]

			\node (A) at (0,0) {\scriptsize$T(A,C,X)$};
			\node (A1) at (0,-2) {\scriptsize$T(\graph{M},C,X)$};
			\node (B) at (3,-3.5) {\scriptsize$T(\graph{M},W,Z)$};
			\node (C) at (0,2) {\scriptsize$X$};
			\node (D0) at (3,3.5) {\scriptsize$\graph{R}$};
			\node (D) at (3,0) {\scriptsize$T(\graph{M},\graph{N},\graph{R})$};
			\node (E) at (6,2) {\scriptsize$X$};
			\node (F) at (6,0) {\scriptsize$T(A,C,X)$};
			\node (F1) at (6,-2) {\scriptsize$T(\graph{M},C,X)$};
			
			\draw [->] (C) -- node [left] {\scriptsize$\alpha$} (A);
			\draw [->] (A1) -- node [below left] {\scriptsize$T(\id, h,f)$} (B);
			\draw [->] (A) -- node [left] {\scriptsize$T(\pi_1^M, \id, \id)$} (A1);
			\draw [->] (F) -- node [ right] {\scriptsize$T(\pi_2^M, \id, \id )$} (F1);
			\draw [->] (F1) -- node [below right] {\scriptsize$T(\id, k,g)$} (B);
			\draw [->] (D) -- node [fill =white] {\scriptsize$T(\id, \pi_1^N, \pi_1^R)$} (A1);
			\draw [->] (D) -- node [fill=white] {\scriptsize$T(\id, \pi_2^N, \pi_2^R)$} (F1);
			\draw [->] (D0) -- node [above] {\scriptsize$\pi_1^R$} (C);
			\draw [->] (D0) -- node [above] {\scriptsize$\pi_2^R$} (E);
			\draw [->] (E) -- node [right] {\scriptsize$\alpha$} (F);
			\draw [->, dotted] (D0) -- node [right] {\scriptsize$\gamma$} (D);

	\end{tikzpicture}
\end{center}
	which finishes the proof. 
\end{proof}

\begin{proof}[Proof of Lemma~\ref{lem:laxdblefunctor}]
	We show both inclusions:
	\begin{align*}
			(T(f,h,i))_*; \Barr{T} (M, N, R) ;  (T(g,k,j))^* &\leq 
			\Barr{T} (f^*,h_*,i_*); \Barr{T} (M, N, R) ;  \Barr{T} (g_*,k^*,j^*) \\
			& \leq \Barr{T} (f^*;M;g_*, h_*;N;k^*, i_*;R;j^*)
	\end{align*}
		\begin{align*}
			&\Barr{T} (f^*;M;g_*, h_*;N;k^*, i_*;R;j^*) \\
			&\leq (T(f,h,i))_*; (T(f,h,i))^*; \Barr{T} (f^*;M;g_*, h_*;N;k^*, i_*;R;j^*) ; (T(g,k,j))_*; (T(g,k,j))^*\\
				&\leq (T(f,h,i))_*; \Barr{T}(f_*,h^*,i^*); \Barr{T} (f^*;M;g_*, h_*;N;k^*, i_*;R;j^*) ; \Barr{T}(g^*,k_*,j_*); (T(g,k,j)^*\\
				&\leq (T(f,h,i))_*; \Barr{T} (f_*;f^*;M;g_*;g^*, h^*;h_*;N;k^*;k_*, i^*i_*;R;j^*;j_*);(T(g,k,j))^*\\
				&\leq 	(T(f,h,i))_*; \Barr{T} (M, N, R) ;  (T(g,k,j))^*\qedhere
		\end{align*}
\end{proof}
\begin{proof}[Proof of Theorem~\ref{th:BehEquivExtFullabstract}]
	\begin{itemize}
		\item[]
		\item 
	Assume that $R$ is an $(M,N)$-behavioral equivalence as in Section~\ref{subsec:paramBehequiv}, then we obtain that it is a $\Barr{T}$-bisimulation w.r.t $(M,N)$:
	\[\begin{aligned}
		f_*; g^* &\leq  	\alpha_* ;  	(T(\pi_1, h,f))_*;(T(\pi_2, k,g))^* ; \alpha^*\\
		& \leq 
		\alpha_* ; 
		\Barr{T}((\pi_1)^*, h_*,f_*);\Barr{T}((\pi_2)_*, k^*,g^*) ; \alpha^*\\
		&=  \alpha_*;	\Barr{T}(M,N,R); \alpha^*
	\end{aligned}
	\]
	\item For the other direction, we want to show that $\Barr{T}$-bisimilarity w.r.t $(M,N)$ is an $(M,N)$-behavioral equivalence.

	The operator $\Rel(X,X) \to \Rel(X,X)$ mapping $R : X\profto X$ to $\alpha_* ; \Barr{T}(M, N, R) ; \alpha^*$ is monotone in $R$ so it has a greatest fixpoint which we denote $E : X\profto X$. If $M \leq \id_{A}$,  $M \leq M \circ M$, $M \leq M^t $, $\id_C \leq N$, $ N \circ N\leq N$, $N^t \leq N$ and $\Barr{T}$ verifies axiom $(6)$ in Definition~\ref{def:ParamLaxExtensionMonadSet}, then $E$ is an equivalence relation (in particular it is difunctional). Since $N$ and $E$ are difunctional, taking the collage of $N$ and $E$, the two squares below are both pushout and weak pullback squares.
	
	\begin{center}
		\begin{tikzpicture}[thick,scale=0.7]
			\node (A) at (0,1.5) {$C$};
			\node (B) at (2,0) {$\coll{N}$};
			\node (C) at (4,1.5) {$C$};
			\node (D) at (2,3) {$\graph{N}$};
			
			\draw [->] (A)  to  node [below left] {\scriptsize$\iota_1^N$} (B);
			\draw [->] (C)  to  node [below right] {\scriptsize$\iota_2^N$} (B);
			\draw [->] (D) to  node [above left] {\scriptsize$\pi_1^N$} (A);
			\draw [->] (D)  to node [above right] {\scriptsize$\pi_2^N$} (C);
			\node at (2,1.5) {$\mathrm{wpb}/\mathrm{psht}$};
			
			\begin{scope}[xshift=7cm]
				\node (A) at (0,1.5) {$X$};
				\node (B) at (2,0) {$\coll{E}$};
				\node (C) at (4,1.5) {$X$};
				\node (D) at (2,3) {$\graph{E}$};
				
				\draw [->] (A)  to  node [below left] {\scriptsize$\iota_1^E$} (B);
				\draw [->] (C)  to  node [below right] {\scriptsize$\iota_2^E$} (B);
				\draw [->] (D) to  node [above left] {\scriptsize$\pi_1^E$} (A);
				\draw [->] (D)  to node [above right] {\scriptsize$\pi_2^E$} (C);
				\node at (2,1.5) {$\mathrm{wpb}/\mathrm{psht}$};
			\end{scope}
		\end{tikzpicture}
	\end{center}
	In particular, we have $N = (\iota_1^N)_*; (\iota_2^N)^*$ and $E= (\iota_1^E)_*; (\iota_2^E)^*$. By Lemma \ref{lem:laxdblefunctor}, \[\Barr{T}(M, N,R) = (T(\pi_1^M, \iota_1^N, \iota_1^R))_* ; \Barr{T}(\id_A, \id_C, \id_X); (T(\pi_2^M, \iota_2^N, \iota_2^R))^*\] so if $\Barr{T}$ is normal, we have $\Barr{T}(M, N,R) = (T(\pi_1^M, \iota_1^N, \iota_1^R))_* ; (T(\pi_2^M, \iota_2^N, \iota_2^R))^*$. By the universal property of the collage, we obtain that there is a unique function $\coll{E} \to T(\graph{M},\coll{N}, \coll{E})$ such that:
		\begin{center}
		\begin{tikzpicture}[thick,xscale=0.4, yscale =0.5, decoration={ markings, mark=at position 0.5 with {\arrow{|}}}]
			\node (A) at (-3,-3) {\small$T(A,C,X)$};
			\node (A1) at (3,-3) {\small$T(A,C,X)$};
			\node (D) at (0,-6) {\small$T(\graph{M},\coll{N}, \coll{E})$};

			\node (B) at (-3,0) {\small$X$};
			\node (C) at (3,0) {\small$X$};
			
			\draw [->] (B) -- node [ left] {\scriptsize$\alpha$} (A);
			\draw [->] (A) -- node [ left] {\scriptsize$T(\pi_1^M, \iota_1^N, \iota_1^R)$} (D);
			\draw [->] (A1) -- node [right] {\scriptsize$T(\pi_2^M, \iota_2^N, \iota_2^R)$} (D);
			\draw [->, postaction=decorate] (B) -- node [ above] {\scriptsize$E$} (C);
			\draw [->, postaction=decorate] (A) -- node [ above] {\scriptsize$\Barr{T}(M, N,R)$} (A1);
			\draw [-> ] (C) -- node [ right] {\scriptsize$\alpha$} (A1);
			\node 	[rotate=-90] at (0,-1.2) {$\leq$};
			\node 	[rotate=-90] at (0,-4.2) {$\leq$};
			\node at (6,-3) {$=$};
			\begin{scope}[xshift=11cm]
				\node (A) at (0,-6) {\small$T(\graph{M},\coll{N}, \coll{E})$};
				\node (G) at (0,-3) {\small$\coll{E}$};
				\node (B) at (-3,0) {\small$X$};
				\node (C) at (3,0) {\small$X$};
				
				\draw [->, dotted] (G) -- node [ right] {\scriptsize$\zeta$} (A);
				\draw [->] (B) -- node [below left] {\scriptsize$\iota_1^E$} (G);
				\draw [->, postaction=decorate] (B) -- node [ above] {\scriptsize$E$} (C);
				\draw [-> ] (C) -- node [below right] {\scriptsize$\iota_2^E$} (G);
				\node [rotate=-90] at (0,-1.2) {$\leq$};
			\end{scope}
		\end{tikzpicture}
	\end{center}
	In particular, we obtain that $\zeta \circ \iota_1^E = T(\pi_1^M, \iota_1^N, \iota_1^R) \circ \alpha$ and $\zeta \circ \iota_2^E = T(\pi_2^M, \iota_2^N, \iota_2^R) \circ \alpha$ as desired. \qedhere
		\end{itemize}
\end{proof}
\end{document}